\documentclass[a4paper]{article}
\usepackage[english]{babel}
\usepackage[utf8x]{inputenc}
\usepackage[T1]{fontenc}
\usepackage[a4paper,top=3cm,bottom=2cm,left=2.7cm,right=2.7cm,marginparwidth=1.75cm]{geometry}
\usepackage{comment}
\usepackage{authblk}
\usepackage[misc]{ifsym}
\usepackage{indentfirst}
\usepackage[english]{babel}
\usepackage[utf8x]{inputenc}
\usepackage{amsmath,amsfonts,amssymb,amsthm}
\usepackage[english]{babel}
\usepackage{graphicx}
\usepackage{hyperref}
\usepackage{latexsym}
\usepackage{ifthen}
\usepackage{wrapfig}
\usepackage{mathtools}
\usepackage{braket}
\usepackage{algorithm}
\usepackage{algpseudocode}
\usepackage{verbatim}
\usepackage[qm]{qcircuit}
\usepackage{todonotes}
\usepackage{multirow}
\usepackage[toc,page]{appendix}
\usepackage{listings}
\usepackage{xcolor}

\makeatletter

\lstset { %
    language=C++,
    backgroundcolor=\color{black!5}, 
    basicstyle=\ttfamily,
    keywordstyle=\color{blue}\ttfamily,
    stringstyle=\color{purple}\ttfamily,
    commentstyle=\color{orange}\ttfamily,
    morecomment=[l][\color{magenta}]{\#}
}

\lstdefinelanguage{AMPL}{keywords={set,param,var,arc,integer,minimize,maximize,subject,to,node,sum,in,Current,complements,integer,solve_result_num,IN,contains,less,suffix,INOUT,default,logical,sum,Infinity,dimen,max,symbolic
,Initial,div,min,table,LOCAL,else,option,then,OUT,environ,setof ,union,all,exists,shell_exitcodeuntil,binary,forall,solve_exitcodewhile ,by,if,solve_messagewithin,check,in,solve_result
},sensitive=true,comment=[l]{\#}}

\lstset{frame=tb,
  language=AMPL,
  aboveskip=3mm,
  belowskip=3mm,
  showstringspaces=false,
  columns=flexible,
  basicstyle={\ttfamily},
  numbers=none,
  numberstyle=\tiny\color{gray},
  keywordstyle=\bfseries,
  commentstyle=\textit,
  stringstyle=\color{mauve},
  breaklines=true,
  breakatwhitespace=true,
  tabsize=3
}

\DeclarePairedDelimiter{\ceil}{\lceil}{\rceil}
\DeclarePairedDelimiter\abs{\lvert}{\rvert}

\DeclarePairedDelimiter\Ceil\lceil\rceil

\theoremstyle{plain}
\newtheorem{thm}{Theorem}[section]

\newtheorem{prop}[thm]{Proposition}
\newtheorem{cor}[thm]{Corollary}
\newtheorem{defi}[thm]{Definition}

\newcommand{\NOT}{\operatorname{NOT}}

\title{Quantum computing from a mathematical perspective:\\a description of the quantum circuit model}
\author{J. Ossorio-Castillo\thanks{Email: joaquin@mestrelab.com (\Letter).} , Jos\'e M. Tornero\thanks{Email: tornero@us.es.}}
\date{April 4th, 2025} 

\begin{document}
\maketitle

\begin{abstract}
This paper is an essentially self--contained and rigorous description of the fundamental principles of quantum computing from a mathematical perspective. It is intended to help mathematicians who want to get a grasp of this quickly growing discipline and find themselves taken aback by the language gap between mathematics and the pioneering fields on the matter: computer science and quantum physics. 

A first version of this paper was originally published at arXiv in October 2018. Minor corrections added on Spring 2025.
\end{abstract}

\section{Historical precedents of Quantum Computation}

The origins of quantum computation date back to 1980, when Paul Benioff described a computing model defined by quantum mechanical Hamiltonians \cite{benioff1980computer}. Later that year, Yuri Manin gave a first idea on how to simulate a quantum system with a computer governed by quantum mechanics \cite{manin1980computable}. Both of them laid the groundwork for two of the basic components of quantum computing: quantum Turing machines and quantum computers \cite{nielsen2002quantum}.

Two years later, Richard Feynman explored the problems of simulating physics with a classical computer in one of his most seminal papers \cite{feynman1982simulating}, and introduced independently a quantum model of computation. He stated that, being the world quantum mechanical, the difficulty for replicating exactly the behavior of nature is related to the problem of simulating quantum physics. This way, the most important rule defined by Feynman deals with the computational complexity at the time of efficiently simulating a quantum system. If one doubles the dimension of the system, it would be ideal that the size of the computational resources needed for this task also double in the worst case, instead of experiencing an exponential growth.

Feynman also stated the underlying limitations that appear when it comes to simulate the probabilities of a physical system. Instead of calculating the probabilities of such a system, which he proved to be impossible, he proposed that the computer itself should have a probabilistic nature. To this new kind of machine he gave the name of quantum computer, and stated that it had a distinct essence than the well-known Turing machines. He also noted that with one of them it should be possible to simulate correctly any quantum system, and the physical world itself. Feynman asked himself if it would be possible to define a universal quantum computer, capable of modeling all possible quantum systems and detached from the possible problems that originate from its physical implementation, in the same way that a classical one is.

\newpage

\section{Quantum Turing Machines}

Although the credit for introducing the concept of a universal quantum computer usually goes to Richard Feynman, it was David Deutsch the first to properly describe, generalize and formalize it \cite{deutsch1985quantum}. Supported in the works by Feynman, Manin and Benioff, he also introduced the concept of quantum Turing machine (QTM), which we proceed to explain succintly.

A quantum Turing machine $\mathcal{M}$, or {\em QTM}, is defined  (as are classical Turing machines) by a triple $\mathcal{M} = (S, \Sigma, \delta)$, but the usual set of states $S$ of a Turing machine is replaced by (some) vectors of a Hilbert space, the alphabet $\Sigma$ is finite, and the transition function $\delta$ is substituted for a set of unitary transformations which are automorphisms of the Hilbert space.

This definition is rather informal and leaves out many important details. In fact, the study of quantum Turing machines is quite intricate. Fortunately, an equivalent and much more friendly model of computation called the quantum circuit model exists, and will be explained later on in this paper along with many of its details. Nevertheless, the reader interested in the complete and original definition of quantum Turing machines and all of its characteristics, can refer to the seminal papers where it was first outlined and formalized \cite{deutsch1985quantum,deutsch1989quantum,bernstein1997quantum}. A quantum Turing machine can also be seen as a probabilistic Turing machine that obeys the rules of quantum probability instead of classical probability \cite{simon1997power}.

\ 

The counterpart of the P class is given in the QTM context by the complexity class BQP, which stands for Bounded-error Quantum Polynomial-time. It contains all decision problems that can be solved in polynomial time by a quantum Turing machine with error probability bounded by 1/3 for all inputs. 

The latter class is usually taken as a reference for representing the power of quantum computers. Thanks to \cite{bernstein1997quantum} and \cite{deutsch1985quantum}, we already know that BQP $\subseteq$ PSPACE and it is trivial to see that P $\subseteq$ BQP. However, at the present time there is no known relationship between NP and BQP, except that P is inside their intersection. There is a strong belief that NP $\nsubseteq$ BQP; consequently, a polynomial-time quantum algorithm for a NP-complete problem would be surprising, as it would violate this conjecture. A problem that is not known to be in P is the factoring problem, but we already know thanks to Peter W. Shor a polynomial-time algorithm for this problem that runs on a quantum computer \cite{shor1994algorithms}. This algorithm, among many others, will be thoroughly explained in a next section. 

\ 

\begin{wrapfigure}{r}{0.3\textwidth}\label{fig::bqpdiagram}
\includegraphics[width=0.3\textwidth]{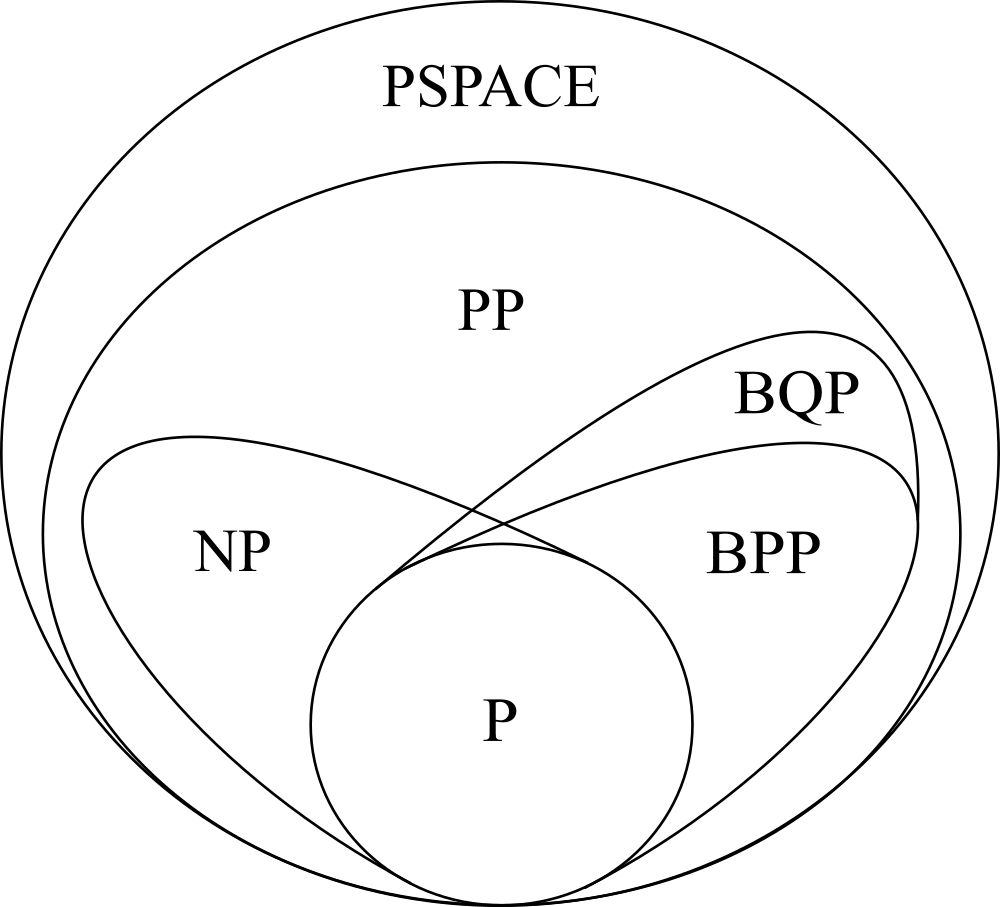}
\centering
\caption{Complexity classes (incl. BQP)}
\end{wrapfigure}

It can be deduced from the definition of the complexity class BQP that the inner nature of quantum computation is probabilistic. In order to measure the performance of a quantum algorithm that solves a certain problem, we do not usually take into account the time needed for obtaining the solution to that problem. What we do is to study the relationship between the probability of obtaining a correct solution and the computation time. In order for an quantum algorithm to be considered efficient, it must return a correct solution in polynomial time with a probability of at least 2/3. For a groundbreaking study on the algorithmic limitations of quantum computing, we refer to \cite{bennett1997strengths}.

\

We must mention here that the quantum circuit model is not the only quantum computation paradigm that is currently being developed. There exists a completely different approach to exploiting the possibilities of quantum physics in computation, called the {\em adiabatic model}, which is equivalent also to a QTM \cite{aharonov2008adiabatic}. We will not treat this matter here, as it needs a substantially diverse approach (and this paper is already long enough as it is).

\newpage

\section{Quantum Bits and Quantum Entanglement}\label{sec:qubits}

In classical computation, the basic unit of information is the bit (a portmanteau of binary digit). A bit can only be in one of its two possible states, and may therefore be physically implemented with a two-state device. This pair of values is commonly represented with $0$ and $1$. On the other hand, we have an analogous concept in quantum computation: the qubit (short for quantum bit) which is a mathematical representation of a two-state quantum-mechanical system.

We will work in the Hilbert space $\mathbb C^2$, with the usual scalar product. In the quantum parlance, vectors are, however, written in a different way.

\begin{defi}
The vectors 
$$ 
\ket{0} = \left[\begin{array}{ r }
1 \\
0
\end{array}\right] \text{ and } \ket{1} = \left[\begin{array}{ r }
0 \\
1
\end{array}\right]
$$
are called the \textbf{basis states} of a \textbf{quantum bit}.
\end{defi} 

So far we have only two states and it does not seem to far apart from the bit situation (funny notations aside). However, there is a difference between bits and qubits: a qubit can also be in a state other than $\ket{0}$ and $\ket{1}$. Its generic state is, in fact, a linear combination over the complex numbers of both basis states.

\begin{defi}
A \textbf{pure qubit state} $\ket{\psi}$ is a unit vector which is a linear combination of the basis states,
$$
\ket{\psi} = \alpha\ket{0} + \beta\ket{1} = \left[\begin{array}{ r }
\alpha \\
\beta
\end{array}\right]
$$ 
where the coefficients $\alpha , \beta \in \mathbb{C}$ are called the amplitudes of the state.
\end{defi}

The fact that $\ket{\psi}$ is a unit vector means of course that the constraint $\abs{\alpha}^2 + \abs{\beta}^2 = 1$ holds.

\paragraph{Remark:} The notation we have just introduced, $\ket{\psi}$, termed \textit{ket}, for describing a quantum state. This notation is part of the bra-ket notation, also named Dirac notation in honor of Paul Dirac, who first introduced it in 1939 \cite{dirac1939new}. Alternatively, we will also use $\bra{\psi}$, called \textit{bra}, to describe $\ket{\psi}^*$, the Hermitian conjugate of $\ket{\psi}$.

\

Thus, $\ket{0}$ and $\ket{1}$ form an orthonormal $\mathbb{C}$-basis of $\mathbb{C}^2$. From now on, the basis formed by $\ket{0}$ and $\ket{1}$ will be called the computational basis of a qubit. Nevertheless, there are other commonly used basis for the states of a quantum bit. An example that will come in handy later, known as the Hadamard basis, is defined by 
$$
\ket{+} = \dfrac{1}{\sqrt{2}} (\ket{0} + \ket{1}) = \dfrac{1}{\sqrt{2}} \left[\begin{array}{ r }
1 \\ 1 \end{array}\right]
$$ 
and 
$$
\ket{-} = \dfrac{1}{\sqrt{2}} (\ket{0} - \ket{1}) = \dfrac{1}{\sqrt{2}} \left[\begin{array}{ r } 1 \\ -1 \end{array}\right].
$$ 

It is easy to see that $\ket{+}$ and $\ket{-}$ also form an orthonormal $\mathbb{C}$-basis of $\mathbb{C}^2$, and that our generic qubit $\ket{\psi} = \alpha\ket{0} + \beta\ket{1}$ can be seen as
$$
\ket{\psi} = \dfrac{\alpha + \beta}{\sqrt{2}} \ket{+} + \dfrac{\alpha - \beta}{\sqrt{2}} \ket{-}.
$$

We know that qubits exist in nature thanks to the Stern-Gerlach experiment, first conducted by Otto Stern and Walther Gerlach in 1922 \cite{gerlach1922experimentelle}.

\paragraph{Remark:} One of the key features that makes a quantum computer differ dramatically from its classical counterpart is the process of measuring the state of a quantum bit. A measurement, also called observation, of a generic single-qubit state $\ket{\psi} = \alpha\ket{0} + \beta\ket{1}$ is a physical procedure that yields a result from the orthonormal basis, depending on the values of $\alpha$ and $\beta$. This dependence is modelled as a Bernoulli distribution: the probability that the measurement gives $\ket{0}$ as a result is $|\alpha|^2$ and, obviously, the probability that the measurement yields $\ket{1}$  is $|\beta|^2$.

However, unlike the classical case, the measurement process inevitably disturbs the qubit $\ket{\psi}$, forcing it to collapse to either $\ket{0}$ or $\ket{1}$ and thus generally making impossible the task of finding out the actual values of $\alpha$ and $\beta$. This collapse to either $\ket{0}$ or $\ket{1}$ is then non-deterministic and non-reversible, and this is a fundamental feature of quantum computation. It will be shown in Section \ref{sec:quantumcircuits} how to change these probabilities without violating the unitary constraint.  

Therefore, in short, after measuring, one might think of a qubit as a non-deterministic bit. It can all take two possible values, all the same, but which of them does one actually get depends on a probability distribution. And, most importantly, the data given by these probabilities vanish once the qubit is observed.

\paragraph{Remark:} A possible geometrical representation of the states of a single-qubit system, known as the Bloch sphere \cite{bloch1946nuclear,nielsen2002quantum} (see Figure \ref{fig::bloch}), leans on the following interpretation. The amplitudes $\alpha$ and $\beta$ are not interesting by themselves. For one thing, their moduli is who characterizes the probability distribution that actually matters. Therefore, two qubits which feature the same distribution are, in fact, computationally indistinguishable \textit{after measurement}. 

Building upon this, we may choose a representative for all such qubits: for instance we may force $\alpha$ to be a real number, and it is straightforward that, given a qubit $\ket{\psi}$ there is only one qubit $\ket{\psi_0}$ with the form
$$
\ket{\psi_0} = e^{i\varphi} \ket{\psi} = \cos \left( \frac{\theta}{2} \right) \ket{0} + e^{i \phi} \sin \left( \frac{\theta}{2} \right) \ket{1}
$$
with $\theta \in [0,\pi]$ and $\phi \in [0,2\pi)$. 

Choosing such a representation, we have that a generic qubit can be represented uniquely as a point $(\theta, \phi)$ of the unit 2-sphere, with the north and south poles typically chosen to correspond to the standard basis vectors $\ket{0}$ and $\ket{1}$ as indicated in Figure 2.  

\begin{figure}\label{fig::bloch}
\includegraphics[height=6cm]{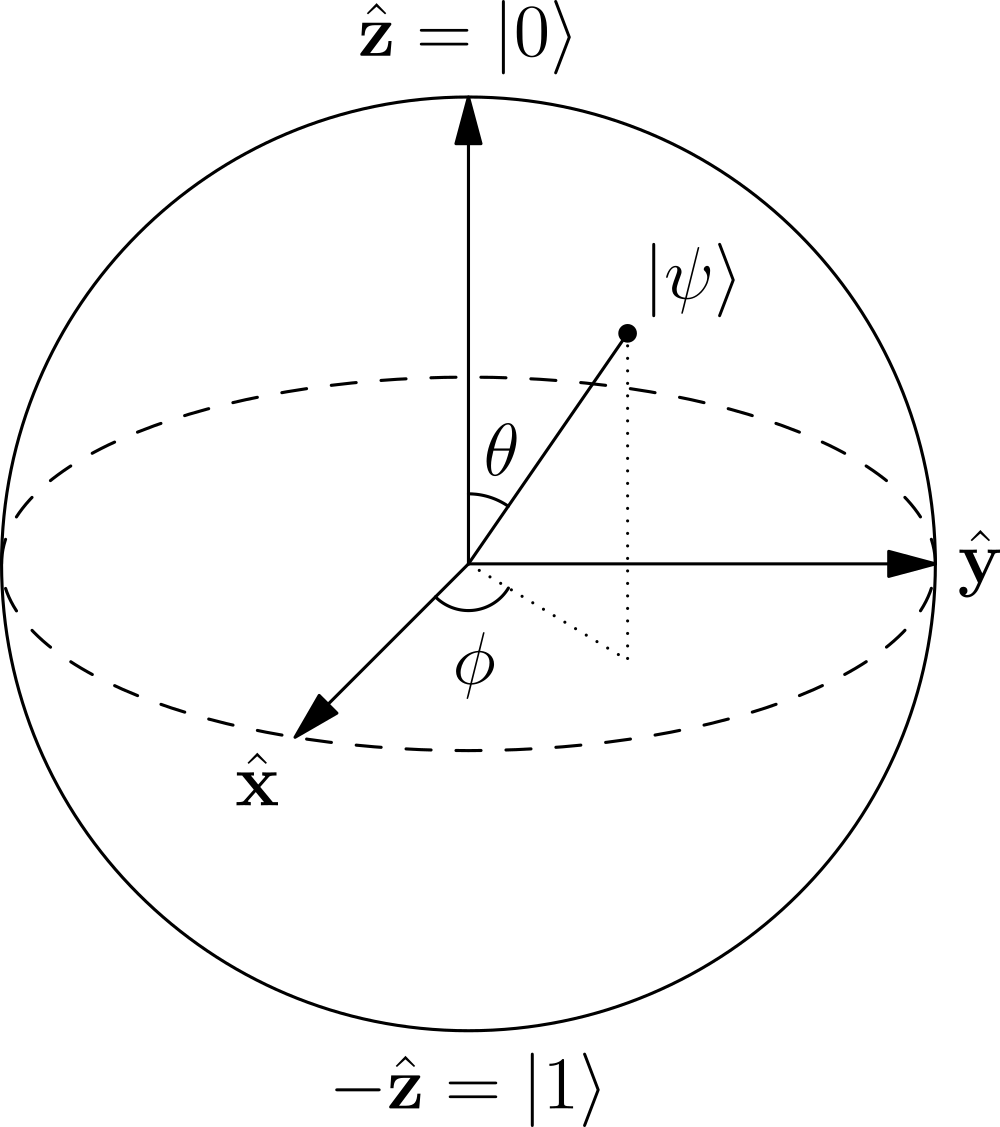}
\centering
\caption{Bloch sphere\protect\footnotemark}
\end{figure}

\footnotetext{Credit: \href{https://upload.wikimedia.org/wikipedia/commons/f/f4/Bloch_Sphere.svg}{Glosser.ca} (\href{https://creativecommons.org/licenses/by-sa/3.0/}{CC BY-SA 3.0})}

The Bloch sphere is useful as a way of depicting transformations of a single qubit. As we will see later on, the most important transformations can be broken down to (essentially) rotations in the Bloch sphere. We will look at this matter in the next section. But first, let us explain some concepts needed to understand how multiple-qubit systems behave.

\ 

Going from a single qubit to a multiple-qubit system should obviously involve considering a Hilbert vector space which stems from combining multiple copies of $\mathbb C^2$. The chosen way for doing that is using the tensor product.

\begin{defi}
Let $V$ and $W$ be vector spaces of dimensions $n$ and $m$ respectively. The \textbf{tensor product of $V$ and $W$}, denoted by $V \otimes W$, is an $nm$-dimensional vector space whose elements are linear combinations of the symbols $v \otimes w$ satisfying the subsequent properties:
\begin{itemize}
\item $\alpha (v \otimes w) = (\alpha v) \otimes w = v \otimes (\alpha w)$
\item $(v_1 + v_2) \otimes w = (v_1 \otimes w) + (v_2 \otimes w)$
\item $v \otimes (w_1 + w_2) = (v \otimes w_1) + (v \otimes w_2)$
\end{itemize}
where $\alpha \in \mathbb{C}$, $v,v_1,v_2 \in V$ and $w,w_1,w_2 \in W$.
\end{defi}

\paragraph{Remark:} Given two bases for $V$ and $W$, say ${\mathcal B}_V = \{ v_1,\ldots,v_n \}$ and ${\mathcal B}_W = \{ w_1,\ldots,w_m \}$, it is well--known that the set
$$
{\mathcal B}_{V \otimes W} = \{ v_i \otimes w_j \; | \; 1 \leq i \leq n, \; 1 \leq j \leq m \}
$$
is a basis of $V \otimes W$.

A related definition that will come to use in Section \ref{sec:quantumcircuits} is the concept of tensor product between linear operators.

\begin{defi}
Let $A$ and $B$ be linear operators defined on $V$ and $W$ respectively, then the \textbf{linear operator $A \otimes B$ operating on $V \otimes W$} is defined as 
$$
(A \otimes B) (v \otimes w) = Av \otimes Bw
$$ 
with $v \in V$ and $w \in W$.
\end{defi}

If $A$ and $B$ are $n \times n$ and $m \times m$ matrices respectively that correspond to the matrix representations of the linear operators $A$ and $B$ with respect to the canonical base, the linear operator $A \otimes B$ (called the tensor product, or the Kronecker product of $A$ and $B$) has the following matrix representation with respect to the canonical base:
$$
A \otimes B
=
\begin{bmatrix}
    a_{11}B & a_{12}B & \cdots & a_{1n}B \\
    a_{21}B & a_{22}B & \cdots & a_{2n}B \\
    \vdots & \vdots & \ddots & \vdots \\
    a_{n1}B & a_{n2}B & \cdots & a_{nn}B \\
\end{bmatrix}
$$ 
As expected the matrix representation of $A \otimes B$ has dimension $nm \times nm$. As it happens with the usual product, this operation is not commutative. A common notation for the Kronecker product of $l$ copies of a matrix $A$ is $A^{\otimes l}$.

\paragraph{Example:} By means of showing how the matrix representations of the Kronecker products of linear operators are calculated, let 
$$
A =
\begin{bmatrix*}[r]
    1 & -1 \\
    -2 & 0
\end{bmatrix*}
\text{ and }
B = 
\begin{bmatrix*}[r]
    1 & 0 & 0 \\
    0 & 2 & 0 \\
    0 & 0 & 3
\end{bmatrix*}
$$ 
be two linear operations defined on $\mathbb{R}^2$ and $\mathbb{R}^3$ respectively. Then, their tensor product is calculated as follows:
$$
A \otimes B =
\begin{bmatrix*}[r]
    1 & -1 \\
    -2 & 0
\end{bmatrix*}
\otimes
\begin{bmatrix*}[r]
    1 & 0 & 0 \\
    0 & 2 & 0 \\
    0 & 0 & 3
\end{bmatrix*}
=
\begin{bmatrix*}[r]
    1 & 0 & 0 & -1 & 0 & 0 \\
    0 & 2 & 0 & 0 & -2 & 0 \\
    0 & 0 & 3 & 0 & 0 & -3 \\
    -2 & 0 & 0 & 0 & 0 & 0 \\
    0 & -4 & 0 & 0 & 0 & 0 \\
    0 & 0 & -6 & 0 & 0 & 0 \\
\end{bmatrix*}
$$ 
where $A \otimes B$ is a linear operator defined on $\mathbb{R}^6$.

\ 

Of course, the previous definitions are extended in the direct way to finite tensor products of spaces and operators. In particular, note that the tensor product of unit vectors is again a unit vector.

Also, the tensor product we have thus defined can also be extended to vectors and non-square matrices, and will be useful at the time of calculating the basis states of a quantum system with more than one qubit and representing it as a vector in $\mathbb C^l$, for some $l \in \mathbb N$. 

\paragraph{Example:} For instance, if $\ket{0}$ and $\ket{1}$ are the basis states of a quantum bit, the tensor product $\ket{1} \otimes \ket{0}$ will be given by:
$$
\ket{1} \otimes \ket{0} = \left[\begin{array}{ r }
0 \\
1
\end{array}\right]
\otimes 
\left[\begin{array}{ r }
1 \\
0
\end{array}\right] = 
\left[\begin{array}{ r }
0 \\
0 \\
1 \\
0 \\
\end{array}\right]
$$

Before continuing with the possible states of a multiple qubit system, let us introduce some notation. In our set-up, the corresponding bases will be of use for representing integers. In a classical computer, we represent an integer $a \in \mathbb{Z}_{\geq 0}$ such that $a < 2^n$ (i.e., such that it can be described with $n$ bits) with the base-2 numeral system:
$$ 
a = \sum^{n-1}_{l=0} a_l 2^l
$$ 
where $a_l \in \{0,1\}$ are the binary digits of $a$. In a quantum computer, we can also represent an integer $a < 2^n$ with $n$ \textit{qubits} as follows:
$$ 
\ket{a}_n = \ket{a_{n-1} \cdot\cdot\cdot a_1 a_0} = \bigotimes^{n-1}_{l=0} \ket{a_l}
$$

Thus, for example, number 29 can be represented with 5 qubits (as $29 < 2^5)$ like this:
$$ 
\ket{29}_5 = \ket{11101} = \ket{1} \otimes \ket{1} \otimes \ket{1} \otimes \ket{0} \otimes \ket{1}.
$$

In this way, integers are always represented by elements of the basis which is obtained from tensor products of the single-qubit computational bases. This basis will be subsequently called the computational basis itself.

\paragraph{Notation:} From now on, the notation $\ket{\psi}_n$ will imply that we are describing a $n$-qubit system (with $n \geq 2$) instead of a single-qubit one, which will remain to be indicated with the absence of a subindex. We will also make use sometimes of the notation $\ket{uv}$ to describe the tensor product $\ket{u} \otimes \ket{v}$ of two basis states. Now that we know what $\ket{\psi}_n$ and $\ket{a}_n$ really mean, we are finally in the position to begin studying the possible states of a multiple qubit system which is, once again essentially, a unitary vector in the corresponding Hilbert space.

\begin{defi}
The \textbf{state $\ket{\psi}_n$ of a generic $n$-qubit system} is a superposition (that is, a linear combination) of the $2^n$ states of the computational basis $\ket{0}_n, \ket{1}_n, \ldots, \ket{2^n-1}_n$ with modulus $1$. In particular, 
$$
\ket{\psi}_n = \sum^{2^n-1}_{j=0} \alpha_j\ket{j}_n,
$$ 
with amplitudes $\alpha_j \in \mathbb{C}$ constrained to 
$$ 
\sum^{2^n -1}_{j=0} \abs{\alpha_j}^2 = 1.
$$
\end{defi}

This can be seen as an obvious advantage with respect to classical computation. In a conventional computer we can store one and only one integer between $0$ and $2^n-1$ inside a $n$-bit register, which can be seen as a discrete probability distribution between all possible integers where the integer we have stored has probability 1 and the rest have 0. In a quantum register, the probability can be distributed between all integers from $0$ to $2^n-1$, instead of having just one possibility when it comes to read the register. Even more, if we are to simulate this quantum behavior with a classical computer, we would need $2^n$ registers of $n$ bits, instead of a single $n$-qubit register as in the quantum case. This is precisely one of the benefits of quantum computing that Richard Feynman foretold in his paper \cite{feynman1982simulating}. 

\paragraph{Example:} Let us have a look at the simplest case. The basis states of a two-qubit system are the tensor products of the basis states of a single-qubit system:
$$
\ket{0}_2 = \ket{00} = \ket{0} \otimes \ket{0} = \left[\begin{array}{ r }
1 \\
0
\end{array}\right] \otimes \left[\begin{array}{ r }
1 \\
0
\end{array}\right] = \left[\begin{array}{ r }
1 \\
0 \\
0 \\
0 \\
\end{array}\right], \quad 
\ket{1}_2 = \ket{01} = \ket{0} \otimes \ket{1} = \left[\begin{array}{ r }
1 \\
0
\end{array}\right] \otimes \left[\begin{array}{ r }
0 \\
1
\end{array}\right] = \left[\begin{array}{ r }
0 \\
1 \\
0 \\
0 \\
\end{array}\right],
$$
$$
\ket{2}_2 = \ket{10} = \ket{1} \otimes \ket{0} = \left[\begin{array}{ r }
0 \\
1
\end{array}\right] \otimes \left[\begin{array}{ r }
1 \\
0
\end{array}\right] = \left[\begin{array}{ r }
0 \\
0 \\
1 \\
0 \\
\end{array}\right], \quad
\ket{3}_2 = \ket{11} = \ket{1} \otimes \ket{1} = \left[\begin{array}{ r }
0 \\
1
\end{array}\right] \otimes \left[\begin{array}{ r }
0 \\
1
\end{array}\right] = \left[\begin{array}{ r }
0 \\
0 \\
0 \\
1 \\
\end{array}\right].
$$

And the generic state of two different single-qubit systems, described independently, can be represented as
$$ 
\ket{\psi_0} = \alpha\ket{0} + \beta\ket{1} = \alpha \left[\begin{array}{ r }
1 \\
0
\end{array}\right] + \beta  \left[\begin{array}{ r }
0 \\
1
\end{array}\right] = \left[\begin{array}{ r }
\alpha \\
\beta
\end{array}\right]
$$
and 
$$ 
\ket{\psi_1} = \gamma\ket{0} + \delta\ket{1} = \gamma \left[\begin{array}{ r }
1 \\
0
\end{array}\right] + \delta  \left[\begin{array}{ r }
0 \\
1
\end{array}\right] = \left[\begin{array}{ r }
\gamma \\
\delta
\end{array}\right],
$$ 
where $\abs{\alpha}^2 + \abs{\beta}^2 = 1$ and $\abs{\gamma}^2 + \abs{\delta}^2 = 1$. This means that the state of this 2-qubit system which arise from them should be described as the tensor product of both:
$$ 
\ket{\psi_0} \otimes \ket{\psi_1} =  \left[\begin{array}{ r }
\alpha \\
\beta
\end{array}\right] \otimes \left[\begin{array}{ r }
\gamma \\
\delta
\end{array}\right] = \left[\begin{array}{ r }
\alpha\gamma \\
\alpha\delta \\
\beta\gamma \\
\beta\delta
\end{array}\right].
$$

On the other hand, if we want to describe a generic 2-qubit system $\ket{\psi}_2$ with the basis states defined above, we would have
$$ 
\ket{\psi}_2 = \alpha_0
\left[\begin{array}{ r }
1 \\
0 \\
0 \\
0
\end{array}\right]
+ \alpha_1
\left[\begin{array}{ r }
0 \\
1 \\
0 \\
0
\end{array}\right]
+ \alpha_2
\left[\begin{array}{ r }
0 \\
0 \\
1 \\
0
\end{array}\right]
+ \alpha_3\left[\begin{array}{ r }
0 \\
0 \\
0 \\
1
\end{array}\right] = \left[\begin{array}{ r }
\alpha_0 \\
\alpha_1 \\
\alpha_2 \\
\alpha_3
\end{array}\right],
$$ 
where
$$ 
\abs{\alpha_0}^2 + \abs{\alpha_1}^2 + \abs{\alpha_2}^2 + \abs{\alpha_3}^2 = 1
$$ 
must hold (remember that any quantum system must be described as a unit vector). \\

Note that if our generic two-qubit system described by $\ket{\psi}_2$ is to be decomposed in two single-qubit states (i.e., $\ket{\psi}_2 = \ket{\psi_0} \otimes \ket{\psi_1}$), then 
$$
\alpha_0 = \alpha \gamma, \;\; \alpha_1 = \alpha \delta, \;\; 
\alpha_2 = \beta \gamma \mbox{ and } \alpha_3 = \beta \delta.
$$

It is easy to see that the equality $\alpha_0 \alpha_3 = \alpha_1 \alpha_2$ is imposed; however, it is clear that this condition does not necessarily holds in a two-qubit generic state.

This is the mathematical counterpart of a well-known physical phenomenon called quantum entanglement which implies that the quantum state of each one of the particles of a two-qubit system may not be described independently. This leads us to the subsequent definition:

\begin{defi}
An $n$-qubit general state $\ket{\psi}_n$ is called \textbf{entangled} if there does not exist $n$ one-qubit states $\ket{\psi_0}, \ldots, \ket{\psi_n}$ such that 
$$
\ket{\psi}_n = \ket{\psi_0} \otimes \ldots \otimes \ket{\psi_n}.
$$

If a state is not entangled we call it \textbf{separable}.
\end{defi}

\paragraph{Remark:}  Quantum entanglement was first observed in nature in 1935, and in early days it was known as the Einstein–Podolsky–Rosen paradox. It was first studied by Albert Einstein and his colleagues Boris Podolsky and Nathan Rosen \cite{einstein1935can}, and later by Erwin Schrödinger \cite{schrodinger1935discussion}. The role and importance of quantum entanglement in quantum algorithms operating on pure states and in quantum computational speed-up was extensively discussed by Richard Jozsa and Noah Linden in \cite{jozsa2003role}. 

In particular, it is shown in the above reference that a quantum computer which does not take advantage of the quantum entanglement is not too far apart from a classical computer. In fact, the most interesting result that links quantum entanglement and quantum computing performance over classical computation is the following, which can be found in \cite{gottesman1998heisenberg}:

\begin{thm}
{\em \textbf{(Gottesman–Knill)}} A quantum algorithm that starts in a computational basis state and does not feature quantum entanglement can be simulated in polynomial time by a probabilistic classical computer.
\end{thm}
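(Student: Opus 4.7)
The plan is to exploit the non-entanglement hypothesis to keep a compact representation of the state throughout the computation. At every stage the state factorizes as a tensor product of single-qubit states $\ket{\psi_1}\otimes\cdots\otimes\ket{\psi_n}$, so it is fully determined by $n$ pairs of complex amplitudes rather than the $2^n$ amplitudes that a generic $n$-qubit state would need. The starting point, being a computational basis vector $\ket{a_{n-1}\cdots a_0}=\bigotimes_{l} \ket{a_l}$, trivially admits such a factorization.

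I would then walk through the gates of the quantum circuit in order, maintaining this compact representation. A single-qubit gate acting on qubit $i$ only alters the two amplitudes of $\ket{\psi_i}$, which is a constant-time update. For a $k$-qubit gate on a fixed number $k$ of qubits (in practice $k=2$ suffices), I would form the tensor product of the affected factors, obtaining a vector of dimension $2^k$, apply the corresponding $2^k\times 2^k$ unitary, and then decompose the result back into a tensor product of $k$ single-qubit states. Because the algorithm is assumed never to produce entanglement, such a factorization must exist after every gate, and for fixed $k$ it can be recovered in constant time directly from the $2^k$ amplitudes. A measurement of qubit $i$ in the computational basis is simulated by a Bernoulli draw with probabilities $|\alpha_i|^2$ and $|\beta_i|^2$, followed by collapsing $\ket{\psi_i}$ to the observed basis state.

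If the quantum algorithm consists of $T$ gates and $M$ measurements on $n$ qubits, the classical simulation performs $O(T+M)$ updates on a representation of size $O(n)$, each of constant cost. The total running time is therefore polynomial in $n$, $T$ and $M$, hence polynomial in the size of the original quantum circuit, as required.

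The main obstacle is the multi-qubit gate step: although by hypothesis the output of each gate remains a product state, one must actually recover that product representation efficiently from the raw $2^k$ amplitudes in order to keep the simulation running. For fixed $k$ this reduces to solving a small constant-size system (equivalently, for $k=2$, checking that the $2\times 2$ amplitude matrix has rank one and reading off its factors), so the cost stays $O(1)$ per gate. It is worth stressing that the non-entanglement hypothesis must hold at \emph{every} intermediate step, not only at the input or the output: if entanglement were allowed to appear transiently, the state would in general require $2^n$ amplitudes to describe and the factor-and-update strategy would no longer run in polynomial time.
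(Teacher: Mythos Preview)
The paper does not supply a proof of this statement; it merely states the result with a citation. There is therefore no in-paper argument to compare your proposal against.

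Your argument is sound for the statement \emph{as phrased in the paper}: if the state is a product state after every gate, then tracking $n$ single-qubit factors and refactoring after each bounded-arity gate gives a polynomial-time classical simulation, and your handling of the rank-one refactoring for $k=2$ and of measurements is correct. You are also right to insist that the non-entanglement hypothesis must hold at every intermediate step, not merely at the input and output.

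It is worth noting, however, that the theorem as stated here is not quite the Gottesman--Knill theorem of the cited reference. The actual Gottesman--Knill result applies to circuits built from \emph{Clifford} gates (Hadamard, phase, $\boldsymbol{C}_{NOT}$, Pauli), which \emph{do} produce entanglement---Bell states, for instance---and its proof proceeds via the stabilizer formalism rather than via product-state tracking. What you have proved is closer in spirit to the Jozsa--Linden observation (also referenced by the paper) that a computation remaining in product form throughout admits an efficient classical simulation. So your argument matches the paper's \emph{statement} but not the content of the cited source.
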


Therefore, it is precisely quantum entanglement what might give quantum computing a head start, compared to classical computing. One of the many challenges on the hardware side is precisely to create a stable enough environment for entanglement, which is a very delicate and fragile phenomenon. 

\paragraph{Remark:}  Analogously to the single-qubit case, observing an $n$-qubit system unavoidably interferes with $\ket{\psi}_n$ and impels it to collapse in one of the vectors of the computational basis (i.e., in $\ket{j}_n$ with $0 \leq j < 2^n$). This collapse is again non-deterministic and is governed by the discrete  probability distribution given by $\abs{\alpha_j}^2$. Thus, all the information that may have been stored in the amplitudes $\alpha_j$ is inevitably lost after the measurement process. 

\paragraph{Example:} By way of illustration, let us suppose that we have the following 3-qubit quantum system:
$$
\ket{\psi}_3 = \dfrac{1}{2} \ket{1}_3 + \dfrac{1}{2} \ket{3}_3 + \dfrac{1}{2} \ket{5}_3 + \dfrac{1}{2} \ket{7}_3.
$$ 

Then, if we measure this system, we will obtain with identical probability one of these possible outcomes: 1, 3, 5 or 7. Additionally, it is interesting to see the behavior of a quantum system if, rather than measuring all qubits at once, we measure them one by one. Our previous quantum system can be seen as 
$$
\ket{\psi}_3 = \dfrac{1}{2} \ket{001} + \dfrac{1}{2} \ket{011} + \dfrac{1}{2} \ket{101} + \dfrac{1}{2} \ket{111}.
$$ 

But also as 
$$
\ket{\psi}_3 = \dfrac{1}{\sqrt{2}} \ket{0} \otimes \Big(\dfrac{1}{\sqrt{2}}\ket{01} + \dfrac{1}{\sqrt{2}}\ket{11}\Big) + \dfrac{1}{\sqrt{2}} \ket{1} \otimes \Big( \dfrac{1}{\sqrt{2}} \ket{01} + \dfrac{1}{\sqrt{2}} \ket{11} \Big) 
$$ 
or as 
$$
\ket{\psi}_3 = \Big( \dfrac{1}{\sqrt{2}} \ket{0} + \dfrac{1}{\sqrt{2}} \ket{1} \Big) \otimes \Big( \dfrac{1}{\sqrt{2}} \ket{01} + \dfrac{1}{\sqrt{2}} \ket{11} \Big).
$$ 

If we measure the first qubit, we have the same probability of obtaining 0 or 1. However, as the measurement collapses the state of the qubit, the two remaining qubits will be forced to be in a state that is somewhat linked to the one we have obtained for the first qubit (i.e., the part that is tensored with the result we obtain for the first qubit). Let us suppose that by measuring the first qubit, we have obtained a 1. Then, our 3-qubit system has collapsed to 
$$
\ket{\psi}_3 = \ket{1} \otimes \Big(\dfrac{1}{\sqrt{2}}\ket{01} + \dfrac{1}{\sqrt{2}}\ket{11}\Big),$$ which can also be seen as $$\ket{\psi}_3 = \ket{1} \otimes \Big(\dfrac{1}{\sqrt{2}}\ket{0} + \dfrac{1}{\sqrt{2}}\ket{1}\Big) \otimes \ket{1}.
$$ 

Note that the third qubit is already in one of the states of the computational basis, which means that, if we measure it right now, we will certainly obtain the value 1. The only remaining qubit that is not in the computational basis is the second one. Looking at the current state of our system, it is easy to see that we have the same probability of obtaining 0 or 1 by measuring it, which means that we will obtain 5 or 7.

\paragraph{Remark:} It is of interest to see if the result we obtain from one of the qubits will condition the possible values for the remaining qubits.  Let 
$$
\ket{\psi}_2 = \dfrac{1}{\sqrt{2}} (\ket{0} \otimes \ket{0}) + \dfrac{1}{\sqrt{2}} (\ket{1} \otimes \ket{1})
$$ 
be one of the four possible so-called Bell states \cite{bell1964einstein,nielsen2002quantum}, named after John Stewart Bell. This state is composed of two entangled qubits (they cannot be described as two single-qubit states). If we measure the second qubit, we have the same probability of obtaining 0 or 1. However, if we first measure the first qubit, and obtain 1, then the state of the second qubit will collapse (without having observed it) to 1, as the value 1 for the second qubit is only tensored with the value 1 of the first qubit. Thus, the result we obtain from a qubit or a set of qubits can be conditioned by the order in which we proceed to measure the rest of qubits. As will be seen later, the order in which we choose to read the members of a quantum register is one of the most important aspects of a quantum algorithm. \\
 
Another set of operations between qubits that are of great value are the inner and outer products, which we proceed to define.

\begin{defi}
Let $\ket{\psi_0}_n$ and $\ket{\psi_1}_n$ be two $n$-qubit systems, the \textbf{inner product} of $\ket{\psi_0}_n$ and $\ket{\psi_1}_n$ is the usual scalar product, defined by 
$$
\braket{\psi_0|\psi_1} = \ket{\psi_0}_n^* \ket{\psi_1}_n.
$$
\end{defi}

The inner product has the following well-known properties:

\begin{itemize}
\item $\braket{\psi_0|\psi_1} = \braket{\psi_1|\psi_0}^*$
\item $\braket{\psi_0|\ (a\ket{\psi_1} + b\ket{\psi_2})\ } = a \braket{\psi_0 | \psi_1} + b \braket{\psi_0 | \psi_2}$
\item $\braket{\psi | \psi} = || \ket{\psi}_n ||^2$
\end{itemize}

\paragraph{Remark:} As we are only considering unit vectors, $\braket{\psi|\psi} = 1$ for any quantum state $\ket{\psi}$.

\begin{defi}
Let $\ket{\psi_0}_n$ and $\ket{\psi_1}_n$ be two $n$-qubit systems, the \textbf{outer product} of $\ket{\psi_0}_n$ and $\ket{\psi_1}_n$ is defined by 
$$
\ket{\psi_0} \bra{\psi_1} = \ket{\psi_0}_n \ket{\psi_1}_n^*.
$$
\end{defi}

For example, let 
$$ 
\ket{\psi_0} = \alpha\ket{0} + \beta\ket{1} = \left[\begin{array}{ r }
\alpha \\
\beta
\end{array}\right] \mbox{ and }
\ket{\psi_1} = \gamma\ket{0} + \delta\ket{1} = \left[\begin{array}{ r }
\gamma \\
\delta
\end{array}\right]
$$ 
be two generic single-qubit systems, then the matrix representations of the inner and outer product between $\ket{\psi_0}$ and $\ket{\psi_1}$ are calculated as follows:
$$
\braket{\psi_0|\psi_1} = 
\left[\begin{array}{ r r} \alpha^* & \beta^* \end{array}\right] 
\left[\begin{array}{ r } \gamma \\ \delta \end{array}\right]
= \alpha^* \gamma + \beta^* \delta, \quad \quad
\ket{\psi_0} \bra{\psi_1} = 
\left[\begin{array}{ r } \alpha \\ \beta \end{array}\right] 
\left[\begin{array}{ r r } \gamma^* & \delta^* \end{array}\right] =
\left[\begin{array}{ r r } \alpha \gamma^* & \alpha \delta^* \\
\beta \gamma^* & \beta \delta^*\end{array}\right].
$$

\newpage

\section{Quantum Circuits}\label{sec:quantumcircuits}

The language of quantum circuits is a model of computation which is equivalent to quantum Turing machines or to universal quantum computers \cite{yao1993quantum}. Currently, it is the more extensively used when it comes to describe an algorithm that runs on a quantum machine, and draws upon a sequence of register measurements (as described in Section \ref{sec:qubits}) and discrete transformations (which will be explained in this section). This is mainly due because all its elements can be treated as classical, with the sole exception of the information that is going through the wires.  \\

First, we shall see what kind of transformations can be applied to the state of an $n$-qubit system. As a quantum state is always represented by a unitary vector, we need the most general operator that preserves this property and the dimension of the vector.

\begin{defi}
A matrix $A \in \mathcal{M}_{\mathbb{C}}(n)$ is unitary if
$$A^*A = AA^* = I$$
where $I$ is the identity matrix and $A^*$ is the Hermitian adjoint of $A$.
\end{defi}

\begin{defi}
The unitary group of degree $n$, denoted by $U_{\mathbb{C}}(n)$, is the group of $n \times n$ unitary matrices, with matrix multiplication as the group operation.
\end{defi}

\begin{prop}
Let $A \in U_{\mathbb{C}}(n)$ be a unitary matrix and let $x \in \mathbb{C}^n$ be a unitary vector, then $Ax \in \mathbb{C}^n$ is also a unitary vector.
\end{prop}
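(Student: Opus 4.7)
The plan is to reduce the statement to a direct computation with the Hermitian adjoint, using only the definition of a unitary matrix and the definition of the Euclidean norm on $\mathbb{C}^n$ via the standard inner product. Recall that a vector $x \in \mathbb{C}^n$ is unitary precisely when $\|x\|^2 = x^* x = 1$, where $x^*$ denotes the conjugate transpose. So the goal is to show that $\|Ax\|^2 = 1$.

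First I would expand the squared norm of $Ax$ using the basic identity $(Ax)^* = x^* A^*$, which follows from the standard behaviour of the Hermitian adjoint with respect to matrix multiplication. This gives
\[
\|Ax\|^2 = (Ax)^*(Ax) = x^* A^* A x.
\]
Next, I would invoke the defining property of unitarity, $A^* A = I$, to simplify the middle factor and obtain
\[
\|Ax\|^2 = x^* I x = x^* x = \|x\|^2 = 1,
\]
where the last equality uses the hypothesis that $x$ is a unitary vector. Taking (nonnegative) square roots yields $\|Ax\| = 1$, which is exactly what was to be shown.

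There is essentially no obstacle here: the argument is a one-line manipulation of the adjoint, and the only mildly nontrivial ingredient is the identity $(Ax)^* = x^* A^*$, which is a basic fact about the Hermitian adjoint and can be taken for granted at this level of exposition. I would therefore present the proof as a single chain of equalities, pointing explicitly to the two hypotheses ($A^*A = I$ and $x^*x=1$) where they are used, so that the reader sees the statement as the natural companion to the preceding definitions of unitary matrix and of the unitary group $U_{\mathbb{C}}(n)$.
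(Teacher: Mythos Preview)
Your argument is correct and is exactly the standard one-line computation; the paper itself states this proposition without proof, so there is nothing to compare against. Your chain of equalities $(Ax)^*(Ax)=x^*A^*Ax=x^*x=1$ is precisely what the paper would have written had it included a proof.
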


In this context, a unitary transformation acting on $n$-qubits is called an $n$-qubit quantum gate, and can be represented by a unitary matrix. Let us expand this concept and its physical implications (beyond the fact that {\em quantum gate} is probably cooler as a name than {\em unitary matrix}).

\begin{defi}
A quantum gate that operates on a space of one qubit is a linear operator represented (w.r.t. an orthogonal basis) by a  unitary matrix $A \in U_{\mathbb{C}}(2)$. More generally, a quantum gate acting on an $n$-qubit system is a linear operator, represented by a unitary matrix $A \in U_{\mathbb{C}}(2^n)$.
\end{defi}

Please note that quantum gates necessarily have the same number of inputs and outputs, as opposed to classical logic gates. From now on, all quantum gates will be represented with a bold symbol (and square brackets), in order to distinguish them from mere matrices. We will show the most frequently used quantum gates as examples, and various results that simplify in a dramatic way the difficulty of implementing physically any quantum gate.

\begin{defi}
The Hadamard gate is a single-qubit gate with the following matrix representation:
$$
\boldsymbol{H} = \frac{1}{\sqrt{2}} \left[\begin{array}{r r} 1 & 1 \\ 1 & -1 \end{array}\right]
$$
which is unitary.
\end{defi}

\paragraph{Remark:} The Hadamard gate, applied to each of the basis states, it has the following effect:
$$
\boldsymbol{H} : \ket{j} \rightarrow \frac{1}{\sqrt{2}} \Big(\ket{0} + (-1)^j\ket{1}\Big).
$$

\paragraph{Remark:} The classical flowchart for algorithms is replaced in quantum computing by the circuit representation, which allows one to get a glimpse of the procedure rather quickly. For instance, the circuit representation of the Hadamard gate is
\[
\Qcircuit @C=1em @R=.7em { \lstick{\ket{\psi_0}} & \gate{\boldsymbol{H}} & \rstick{\ket{\psi_1}} \qw }
\]
which is shorthand for $\Ket{\psi_1} \leftarrow \boldsymbol{H} \ket{\psi_0}$. The measurement step is written as
\[
\Qcircuit @C=1em @R=.7em { \lstick{\ket{\eta_0}} & \meter & \rstick{\ket{\eta_1}} \qw }
\]

\paragraph{Remark:}As will be seen when describing the most relevant quantum algorithms, the Hadamard transformation is one of the most important quantum gates. Its importance settles in the role it has at the time of generating all possible basis states, all of them with the same amplitude, inside a quantum register. 

Let us suppose that we have a qubit whose quantum state is $\ket{\psi_0} = \alpha \ket{0} + \beta \ket{1}$, where $\abs{\alpha}^2 + \abs{\beta}^2 = 1$. Then, the state of this single-qubit system after applying the Hadamard gate to it is:
$$
\boldsymbol{H} \ket{\psi_0} = \frac{1}{\sqrt{2}} \left[\begin{array}{r r} 1 & 1 \\ 1 & -1 \end{array}\right] \left[\begin{array}{r} \alpha \\ \beta \end{array}\right] = \frac{1}{\sqrt{2}} \left[\begin{array}{r} \alpha + \beta \\ \alpha - \beta \end{array}\right] = \frac{\alpha + \beta}{\sqrt{2}} \ket{0} + \frac{\alpha - \beta}{\sqrt{2}} \ket{1}
$$

Let us suppose now that, rather than having a generic state, we have the basis state $\ket{\psi_0} = \ket{0}$ in our one-qubit system. In that case, the result of applying the Hadamard gate will be as follows:
$$
\boldsymbol{H} \ket{\psi_0} = \frac{1}{\sqrt{2}} \left[\begin{array}{r r} 1 & 1 \\ 1 & -1 \end{array}\right] \left[\begin{array}{r} 1 \\ 0 \end{array}\right] = \frac{1}{\sqrt{2}} \left[\begin{array}{r} 1 \\ 1 \end{array}\right] = \frac{1}{\sqrt{2}} \ket{0} + \frac{1}{\sqrt{2}} \ket{1}
$$

As can be appreciated, we have transformed a basis state, $\ket{0}$, into a linear combination of the two basis states, $\ket{0}$ y $\ket{1}$, with identical amplitudes. If we measure the qubit at this moment, then we will obtain with equal probability one of the two possible basis states. That said, what will happen if, rather than having a single quantum state, we have a $n$-qubit quantum system, all of them also in their basis state $\ket{0}$?
$$
\boldsymbol{H}^{\otimes n} \ket{\psi_0}_n = \boldsymbol{H}^{\otimes n} \ket{0}_n = \frac{1}{\sqrt{2^n}} \left[\begin{array}{r r} 1 & 1 \\ 1 & -1 \end{array}\right]^{\otimes n}\left[\begin{array}{r} 1 \\ 0 \end{array}\right]^{\otimes n} = \frac{1}{\sqrt{2^n}} \left[\begin{array}{r} 1 \\ 1 \end{array}\right]^{\otimes n} = \frac{1}{\sqrt{2^n}} \sum_{j=0}^{2^n-1} \ket{j}_n
$$

What we have obtained is, thanks to quantum entanglement, a superposition of all basis states of the system with identical probability. In other words, if we measure our $n$-qubit register right now, we will obtain a certain integer $j \in \{ 0, \ldots , 2^n - 1 \}$ with probability $1/2^n$.\\

\begin{defi}
The Pauli gates are single-qubit gates with the following matrix representation:
$$
\boldsymbol{X} = \left[\begin{array}{r r} 0 & 1 \\ 1 & 0 \end{array}\right], \quad
\boldsymbol{Y} = \left[\begin{array}{r r} 0 & -i \\ i & 0 \end{array}\right], \quad 
\boldsymbol{Z} = \left[\begin{array}{r r} 1 & 0 \\ 0 & -1 \end{array}\right]
$$
which are unitary, but also Hermitian.
\end{defi}

The Pauli matrices have the following effect on the basis states:
\begin{eqnarray*}
\boldsymbol{X} : \ket{j} & \longmapsto & \ket{1 \oplus j} \\
\boldsymbol{Y} : \ket{j} & \longmapsto & (-i)^j \ket{1 \oplus j}\\
\boldsymbol{Z} : \ket{j} & \longmapsto & (-1)^j \ket{j}
\end{eqnarray*}

The three previous quantum gates are named after Wolfgang Pauli. The three of them, along with the identity matrix $I_2$, form a basis for the vector space of $2 \times 2$ Hermitian matrices (multiplied by real coefficients). \\

Single-qubit gates (that is, unitary $2 \times 2$ matrices) can be in fact, fully described by means of the following result, which is essentially straightforward:

\begin{prop}
Let $A \in U_{\mathbb C}(2)$. Then, there exist real numbers $\alpha$, $\beta$, $\gamma$ and $\delta$ such that
$$
A = e^{i\alpha} \left( \begin{array}{cc} e^{-i\beta/2} \\ \\ & e^{i\beta/2}  \end{array} \right) 
\left( \begin{array}{rr} \cos (\gamma/2) & -\sin (\gamma/2) \\ \\ \sin (\gamma/2) & \cos (\gamma/2)  \end{array} \right) 
\left( \begin{array}{cc} e^{-i\delta/2} \\ \\ & e^{-i\delta/2} \end{array} \right) 
$$
\end{prop}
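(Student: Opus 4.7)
The plan is to exhibit the angles explicitly, working column by column. I read the statement as the standard ``ZY decomposition'' (the middle matrix is a $y$-rotation $R_y(\gamma)$, flanked by two $z$-rotation matrices $R_z(\beta)$ and $R_z(\delta)$), and I will prove it by reducing $A$ to an element of $\mathrm{SU}(2)$ and then matching parameters.

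First, I would peel off the global phase. Since $A\in U_{\mathbb C}(2)$, the identity $A^*A=I$ gives $|\det A|=1$, so $\det A = e^{2i\alpha}$ for some real $\alpha$. Setting $V = e^{-i\alpha}A$, we obtain $V\in U_{\mathbb C}(2)$ with $\det V = 1$, i.e., $V\in \mathrm{SU}(2)$. The orthonormality of the columns of such a $V$ forces the familiar parameterization
$$
V = \begin{pmatrix} p & -\bar q \\ q & \bar p \end{pmatrix}, \qquad |p|^2 + |q|^2 = 1,
$$
which one checks immediately from $V^*V = I$ and $\det V = p\bar p + q\bar q = 1$ together with the constraint that the second column be a unit vector orthogonal to the first.

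Second, I would match the parameters. Since $|p|\in[0,1]$, choose $\gamma\in[0,\pi]$ with $|p|=\cos(\gamma/2)$, so automatically $|q| = \sin(\gamma/2)$. Writing $p = |p|\,e^{i\phi_p}$ and $q = |q|\,e^{i\phi_q}$, I want angles $\beta,\delta$ with $-\tfrac{\beta+\delta}{2}=\phi_p$ and $\tfrac{\beta-\delta}{2}=\phi_q$, which is a $2\times 2$ linear system with determinant $1$ and hence a unique solution modulo $2\pi$. A direct matrix multiplication then gives
$$
R_z(\beta)\,R_y(\gamma)\,R_z(\delta)
= \begin{pmatrix}
e^{-i(\beta+\delta)/2}\cos(\gamma/2) & -e^{-i(\beta-\delta)/2}\sin(\gamma/2) \\
e^{i(\beta-\delta)/2}\sin(\gamma/2) & e^{i(\beta+\delta)/2}\cos(\gamma/2)
\end{pmatrix},
$$
and by the choice of $\beta,\gamma,\delta$ this equals $V$. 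Multiplying through by $e^{i\alpha}$ recovers $A$ in the stated form.

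The only genuinely delicate point, analogous to gimbal lock, is the degenerate case $|p|=0$ or $|q|=0$, i.e., $\gamma\in\{0,\pi\}$: then only $\phi_p$ (or only $\phi_q$) is defined, so one of $\beta,\delta$ is free and must be chosen by convention, but one still obtains a valid (non-unique) decomposition. Everything else is a routine matching of four real parameters against the four-dimensional manifold $U_{\mathbb C}(2)$, and no real obstacle arises once the $\mathrm{SU}(2)$ reduction is in place.
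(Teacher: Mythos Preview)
Your argument is correct and is the standard $Z$-$Y$ decomposition proof. The paper itself offers no proof at all: it introduces the proposition as ``essentially straightforward'' and moves directly to a remark, so your write-up is strictly more than what the text provides. Two minor remarks: the coefficient matrix of your linear system for $(\beta,\delta)$ has determinant $1/2$, not $1$ (irrelevant, since you only need it nonzero); and you were right to read the third factor as $R_z(\delta)=\mathrm{diag}(e^{-i\delta/2},\,e^{i\delta/2})$ despite the apparent misprint in the displayed statement, since with both diagonal entries equal the product could never reach all of $U_{\mathbb C}(2)$.
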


\paragraph{Remark:} Matrices of the type
$$
\left( \begin{array}{cc} e^{-i\beta/2} \\ \\ & e^{i\beta/2}  \end{array} \right) 
$$
are usually called $z$--rotations, as their effect on a qubit, seen in the Bloch sphere, correspond precisely to a rotation of angle $\beta$ about the $z$ axis. For analogous reasons, matrices of the type
$$
\left( \begin{array}{rr} \cos (\gamma/2) & -\sin (\gamma/2) \\ \\ \sin (\gamma/2) & \cos (\gamma/2)  \end{array} \right) 
$$
are called $y$--rotations.\\

However, all previously defined quantum gates have their limitations. In fact, quantum gates that are the direct product of single-qubit gates cannot produce entanglement, which is reasonable, as entaglement needs at the very least two qubits in order to happen.

\begin{defi}
The $\boldsymbol{C}_{\NOT}$ gate, which stands for controlled-not, is a two-qubit quantum gate with the following matrix representation:
$$
\boldsymbol{C}_{\NOT} = \begin{bmatrix} 1 & 0 & 0 & 0 \\ 0 & 1 & 0 & 0 \\ 0 & 0 & 0 & 1 \\  0 & 0 & 1 & 0 \end{bmatrix}
$$
\end{defi}

Applied to a two-qubit basis state, the $\boldsymbol{C}_{\NOT}$ gate has the next effect: 
$$
\boldsymbol{C}_{\NOT} : \ket{i} \otimes \ket{j} \longmapsto \ket{i} \otimes \ket{i \oplus j }.
$$ 

The $\boldsymbol{C}_{\NOT}$ gate is another one of the key quantum gates, as it can be used to entangle and disentangle Bell states. In fact, it is the most simple gate that produces quantum entanglement. For example, let 
$$
\ket{\psi}_2 = \dfrac{1}{\sqrt{2}} \left( \ket{0}_2 + \ket{2}_2 \right)
$$ 
be a separable quantum state (as it can be written $\ket{\psi}_2 = \ket{+} \otimes \ket{0}$). If we apply the $\boldsymbol{C}_{\NOT}$ gate to it, we obtain
$$
\ket{\psi'}_2 = \boldsymbol{C}_{\NOT} \left( \ket{\psi}_2 \right)  = \dfrac{1}{\sqrt{2}} \Big( \boldsymbol{C}_{\NOT} (\ket{0} \otimes \ket{0}) + \boldsymbol{C}_{\NOT} (\ket{1} \otimes \ket{0})  \Big) = \dfrac{1}{\sqrt{2}} \left( \ket{0}\otimes\ket{0} + \ket{1}\otimes\ket{1} \right),
$$
which is one of the entangled Bell states.

\begin{thm}
{\em \cite{divincenzo1995two, barenco1995elementary}} Let $\boldsymbol{A} \in \mathcal{M}_{\mathbb{C}}(2^n)$ be an $n$-qubit gate, then it can be expressed as a finite number of tensor products of single-bit gates $M_i \in \mathcal{M}_{\mathbb{C}}(2)$ and the two-qubit $\boldsymbol{C}_{\NOT}$ gate.
\end{thm}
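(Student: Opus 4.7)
The plan is to prove this universality theorem in three stages, following the classical approach that reduces the statement to increasingly simple pieces.

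First, I would show that every unitary $\boldsymbol{A}\in U_\mathbb{C}(2^n)$ can be written as a product of finitely many \emph{two-level} unitaries, i.e.\ unitaries that act as the identity outside some 2-dimensional subspace spanned by two computational basis vectors $\ket{i}_n,\ket{j}_n$. The idea is a QR-style reduction: pick a suitable two-level unitary $U_1$ so that $U_1\boldsymbol{A}$ has a zero in a chosen off-diagonal position of the first column; iterating, one zeros out the entire first column except the top entry, then the second column, and so on. Since each $U_1,U_2,\ldots$ is a two-level unitary and $\boldsymbol{A}$ is unitary, after finitely many steps the product $U_N\cdots U_1\boldsymbol{A}$ equals the identity, hence $\boldsymbol{A}=U_1^{*}\cdots U_N^{*}$ is a product of two-level unitaries. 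There are at most $O(4^n)$ of them, which is fine since we only need finiteness.

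Second, I would show that every two-level unitary can be realised with single-qubit gates and $\boldsymbol{C}_{NOT}$s. Say the two-level unitary acts non-trivially on $\ket{s}_n$ and $\ket{t}_n$; take a Gray code path $s=g_0,g_1,\ldots,g_m=t$ in which consecutive codewords differ in exactly one bit. Using a sequence of multi-controlled-$\boldsymbol{X}$ gates I can swap the role of $\ket{s}_n$ with $\ket{g_{m-1}}_n$, apply a multi-controlled single-qubit $U$ on the bit where $g_{m-1}$ and $g_m$ differ (controlled by the remaining $n-1$ qubits being in the fixed Gray-code pattern), and then undo the swaps. This reduces the problem to implementing an $(n-1)$-controlled single-qubit unitary.

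Third, I would decompose multi-controlled single-qubit gates down to $\boldsymbol{C}_{NOT}$s and one-qubit gates. The basic building block is the ABC decomposition: given the previous proposition, any single-qubit unitary $U$ can be written $U=e^{i\alpha}AXBXC$ with $ABC=I$ and $A,B,C\in U_\mathbb{C}(2)$; this directly yields a circuit for the singly-controlled $U$ using two $\boldsymbol{C}_{NOT}$s, the three one-qubit gates $A,B,C$, and a phase on the control. The general $k$-controlled gate is then obtained by induction on $k$, using ancilla-free constructions (for instance, expressing a $k$-controlled-$U$ as a product of a $k$-controlled-$V$, a $(k-1)$-controlled-$\boldsymbol{X}$, a $k$-controlled-$V^{*}$ and another $(k-1)$-controlled-$\boldsymbol{X}$, where $V^2=U$). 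This is the step I expect to be the most delicate: each reduction roughly doubles the gate count, and one has to check that $V$ remains a valid single-qubit unitary and that the controls/targets line up correctly across the recursion. Assembling the three stages gives a finite decomposition of $\boldsymbol{A}$ into one-qubit gates and $\boldsymbol{C}_{NOT}$s, as required; the precise quoted counts and optimality belong to the references \cite{divincenzo1995two, barenco1995elementary}.
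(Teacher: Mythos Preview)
Your three-stage outline is the standard and correct argument: reduction to two-level unitaries via a QR-style elimination, reduction of two-level unitaries to multi-controlled single-qubit gates by Gray-code shuffling, and recursive breakdown of multi-controlled gates into $\boldsymbol{C}_{NOT}$s and one-qubit gates via the $ABC$ decomposition. Each step is sound as sketched, and your cautionary remark about the gate count blowing up in the recursion is apt but does not threaten the finiteness claim, which is all the theorem asserts.

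The paper, however, gives no proof of this theorem at all: it simply states the result with citations to \cite{divincenzo1995two, barenco1995elementary} and moves on. So there is nothing to compare your approach against; you have supplied substantially more than the paper does. Your outline is in fact the argument of the cited Barenco et al.\ paper (and the textbook treatment in Nielsen--Chuang), so in that sense you have reconstructed exactly what the authors are pointing to.
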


The previous result imply that every unitary transformation on an $n$-qubit system can be implemented physically using only single-qubit gates and the $\boldsymbol{C}_{\NOT}$ gate. In the usual parlance, single-qubit gates and the $\boldsymbol{C}_{\NOT}$ gates form a set of \textit{universal gates}.\\

Thus, we have explained the main notions needed for the correct comprehension of the rest of this paper: a quantum circuit algorithm will consist in a set of transformations of two different types, observations (measurements) and unitary transformations (quantum gates), to an $n$-qubit register. It is time then to describe some of the milestones of quantum computing.

\newpage

\section{Introduction to Quantum Algorithms}

We present a chronological summary of the first quantum algorithms that were shown to be more efficient than their best known classical counterparts. Our objective is to define them in the context of the previous sections, while showcasing their main properties and proving their correctness. We also give worked-out examples for some of them. \\

But first, let us explain two concepts that will be common to many of the algorithms here presented. The first one is the hidden subgroup problem, which we proceed to define:

\begin{defi}
Let $G$ be a group, let $K \subseteq G$ be a subgroup of $G$ and let $g \in G$. The cosets of $K$ in $G$ with respect to $g$ are the orbits $gK$ and $Kg$, called respectively the left coset and the right coset. 

Let $G$ be a finitely generated group, let $X$ be a finite set, and let $f: G \rightarrow X$ be a function that is constant on the (say) left cosets of a certain subgroup $K \subseteq G$ and distinct for each one of the cosets. The hidden subgroup problem, or {\em HSP}, is the problem of determining a generating set for $K$, using $f$ as a black box. 
\end{defi} 

Obviously there is no difference if the map $f$ is constant on the right cosets, it is an HSP likewise.\\

As will be shown, the superior performance of those algorithms relies on the ability of quantum computers to solve the hidden subgroup problem for finite Abelian groups. All those HSP-related algorithms were developed independently by different people, but the first to notice a common factor between them and to find a generalization was Richard Jozsa \cite{jozsa2001quantum}. \\

The other common factor, closely related to the HSP, is the possibility of building a quantum gate that can code a certain function $f$ that is given as a black box (i.e., as a digital circuit). A proof of this fact that uses the properties of reversible computation can be found in \cite{nielsen2002quantum}, and gives us another important quantum gate.

\begin{defi}
Let $f: \{0,1\}^n \rightarrow \{0,1\}^m$ be a function, the oracle gate $\boldsymbol{O}_{f}$ is the unitary transformation that has the following effect on the basis states of a quantum system:
$$
\boldsymbol{O}_{f} : \ket{j}_n \otimes \ket{k}_m \longmapsto \ket{j}_n \otimes \ket{k \oplus f(j)}_m,
$$ 
where $\oplus$ is the bitwise exclusive disjunction operation.
\end{defi}

The first algorithms that shared those elements eventually evolved into Shor's factoring algorithm, probably the most celebrated of all quantum algorithms and the one that gave birth to another of the greatest achievements in quantum computation: the quantum Fourier transform. All those algorithms are capable of solving their respective problems in polynomial time. However, for some of them the inexistence of polynomial-time classical algorithms for those same problems has yet to be proven. \\

Another class of algorithms, which will be shown at the end of the chapter and that also uses the oracle gate, are based on Grover's quantum search, whose objective is to speed up the finding of a solution for a problem whose candidate solutions can be verified in polynomial time (i.e., all problems in NP). \\

Finally, we explain the algorithm of quantum counting, which makes use of both worlds.

\newpage

\section{Deutsch's Algorithm}

Let $f: \{ 0,1 \} \rightarrow \{0,1\}$ be a function, it is clear that either $f(0) = f(1)$ or $f(0) \neq f(1)$. Let us suppose that we are given $f$ as a black box, and that we want to know if $f$ is constant. From a classical perspective, it is completely neccesary to evaluate the function both in $f(0)$ and $f(1)$ if we are to know this property with accuracy. Deutsch's algorithm \cite{deutsch1985quantum} shows us that, with the help of a quantum computer, it is possible to achieve this with only a query to $\boldsymbol{O}_f$. \\

We can see the previous question as an instance of the hidden subgroup problem, where $G = (\{0,1\},\oplus)$, $X = \{0,1\}$, and $K$ is either $\{0\}$ or $\{0,1\}$ depending on the nature of $f$. Note that in this case the cosets of $\{0\}$ are $\{0\}$ and $\{1\}$ and that the only coset of $\{0,1\}$ is precisely $\{0,1\}$.

\paragraph{$\mathbb{SETUP}$}
$$$$\noindent\framebox{\parbox[b]{\linewidth}{\begin{algorithmic}
\State $\Ket{\psi_0}_{1,1} \leftarrow \Ket{0} \otimes \Ket{1}$
\end{algorithmic}}} \\

Deutsch's algorithm needs only two one-qubit registers. The first one is initialized at $\Ket{0}$, and the second one at $\Ket{1}$. As will be seen, this is due to the properties of the Hadamard gate when applied to the canonical basis states, and it is a frequent way of initializing a quantum algorithm.

\paragraph{$\mathbb{STEP}$ 1}
$$$$\noindent\framebox{\parbox[b]{\linewidth}{\begin{algorithmic}
\State $\Ket{\psi_1}_{1,1} \leftarrow \boldsymbol{H}^{\otimes 2} \left( \ket{\psi_0}_{1,1} \right)$
\end{algorithmic}}} \\

On the first step of Deutsch's algorithm we apply the Hadamard gate to both quantum registers, thus transforming the values of the canonical basis into the respective ones of the Hadamard basis. \\
$$
\ket{\psi_1}_{1,1} = \boldsymbol{H}^{\otimes 2} \left( \ket{0} \otimes \ket{1} \right) = \left( \boldsymbol{H} \ket{0} \right)  \otimes  \left( \boldsymbol{H} \ket{1} \right) = \left( \dfrac{\ket{0} + \ket{1}}{\sqrt{2}}\right) \otimes  \left(\dfrac{\ket{0} - \ket{1}}{\sqrt{2}}\right) = \ket{+} \otimes \ket{-}
$$

\paragraph{$\mathbb{STEP}$ 2}
$$$$\noindent\framebox{\parbox[b]{\linewidth}{\begin{algorithmic}
\State $\Ket{\psi_2}_{1,1} \leftarrow \boldsymbol{O}_{f} \left( \ket{\psi_1}_{1,1} \right)$
\end{algorithmic}}} \\

The second step needs the oracle gate, defined in the previous chapter for a generic function and for $n$ and $m$ qubits. Note that, in this case, the function $f$ associated to the oracle gate as a black box is the one given for this instance of the hidden subgroup problem: $f: \{ 0,1 \} \rightarrow \{0,1\}$. In fact, the oracle gate is not a constant transformation as are the Hadamard  or the Pauli gates, but it rather depends on the problem. It is thus constructed \textit{ad hoc} subject to the question we want to answer, provided that we have a logic circuit that implements $f$. The effect the oracle gate has on our quantum register can be seen as follows:
\begin{eqnarray*}
\ket{\psi_2}_{1,1} & =  & \boldsymbol{O}_{f} \left( \ket{\psi_1}_{1,1} \right) = \boldsymbol{O}_{f} \left( \ket{+} \otimes \ket{-} \right)  = \boldsymbol{O}_{f} \Big[ \Big(\dfrac{\ket{0} + \ket{1}}{\sqrt{2}}\Big) \otimes  \ket{-} \Big] \\ 
& = & \dfrac{\boldsymbol{O}_{f}(\ket{0} \otimes \ket{-}) + \boldsymbol{O}_{f}(\ket{1}\otimes\ket{-})}{\sqrt{2}} 
= \dfrac{(-1)^{f(0)} \ket{0} \otimes \ket{-} + (-1)^{f(1)} \ket{1} \otimes \ket{-}}{\sqrt{2}} \\ 
& = & \left( \dfrac{(-1)^{f(0)} \ket{0} + (-1)^{f(1)} \ket{1}}{\sqrt{2}} \right) \otimes \ket{-}
\end{eqnarray*}

Please notice that all operations are just algebraic manipulations which allow us to see more clearly the information we have inside our quantum computer. We are not modifying anything, we are just reshaping the equation in order to have a better picture of what is happening.

\paragraph{$\mathbb{STEP}$ 3}
$$$$\noindent\framebox{\parbox[b]{\linewidth}{\begin{algorithmic}
\State $\Ket{\psi_3}_{1,1} \leftarrow \left( \boldsymbol{H} \otimes \boldsymbol{I} \right) \left( \ket{\psi_2}_{1,1} \right)$
\end{algorithmic}}} \\

The third and final step before measuring our quantum register involves again the Hadamard gate $\boldsymbol{H}$, but this time it is only applied to the first register. The second register is left alone, which is represented with an identity gate $\boldsymbol{I}$. In fact, the information inside the second register is no longer relevant, as it was only used as the auxiliary register needed for the oracle gate.

\begin{eqnarray*}
\ket{\psi_3}_{1,1} & = &\left( \boldsymbol{H} \otimes \boldsymbol{I} \right) \left( \ket{\psi_2}_{1,1} \right) = \left( \boldsymbol{H} \otimes \boldsymbol{I} \right) \left[ \left( \dfrac{(-1)^{f(0)} \ket{0} + (-1)^{f(1)} \ket{1}}{\sqrt{2}} \right) \otimes \ket{-} \right] \\
& = & \left( \dfrac{(-1)^{f(0)} \boldsymbol{H} \ket{0} + (-1)^{f(1)} \boldsymbol{H} \ket{1}}{\sqrt{2}} \right) \otimes \ket{-} = \left( \dfrac{(-1)^{f(0)} \ket{+} + (-1)^{f(1)} \ket{-}}{\sqrt{2}} \right) \otimes \ket{-} \\
& = & \left( \dfrac{(-1)^{f(0)} \ket{0} + (-1)^{f(0)} \ket{1} + (-1)^{f(1)} \ket{0} - (-1)^{f(1)} \ket{1}}{2} \right) \otimes \ket{-} \\
& = &\left( \dfrac{[(-1)^{f(0)} + (-1)^{f(1)}]\ket{0} + [(-1)^{f(0)} - (-1)^{f(1)}]\ket{1}}{2} \right) \otimes \ket{-} \\
& = & (-1)^{f(0)} \ket{f(0) \oplus f(1)} \otimes \ket{-}
\end{eqnarray*}

In order to understand the last part of the equation, one must take into account that, if $f(0) = f(1)$, then $(-1)^{f(0)} - (-1)^{f(1)} = 0$ and $f(0) \oplus f(1) = 0$. A similar reasoning goes for $f(0) \neq f(1)$, which leads us to the last expression for $\ket{\psi_3}_{1,1}$. Note also that $f(0) \oplus f(1) = 0$ if and only if $f(0) = f(1)$, and that $f(0) \oplus f(1) = 1$ if and only if $f(0) \neq f(1)$.

\paragraph{$\mathbb{STEP}$ 4}
$$$$\noindent\framebox{\parbox[b]{\linewidth}{\begin{algorithmic}
\State $\tilde \delta \leftarrow$ measure the first register of $\Ket{\psi_3}_{1,1}$
\end{algorithmic}}} \\

As the reader has surely noted, the information we wanted to obtain from the function $f$ is already in the first register. We measure it now, thus destroying all the information related to the amplitudes of the basis states, and obtain a certain $\tilde \delta \in \{0,1\}$. If $\tilde \delta = 0$, then $K = \{0,1\}$ and $f(0) = f(1)$. If $\tilde \delta = 1$, then $K = \{0\}$ and $f(0) \neq f(1)$. A circuit representation of Deutsch's algorithm can be found below. \\

\begin{figure}\label{fig::deutschalgorithm}
\caption{Circuit representation of Deutsch's algorithm} 
\[
\Qcircuit @C=2em @R=2em {
  \lstick{\ket{0}} & \qw & \gate{\boldsymbol{H}} & \qw & \multigate{1}{\boldsymbol{O}_{f}} & \qw &  \gate{\boldsymbol{H}} & \qw & \meter & \cw & \tilde{\delta} \\
  \lstick{\ket{1}} & \qw & \gate{\boldsymbol{H}} & \qw & \ghost{\boldsymbol{O}_{f}} & \qw & \qw & \qw & \qw & \qw \\
  & \ket{\psi_0} & & \ket{\psi_1} & & \ket{\psi_2} & & \ket{\psi_3} & & }
\]
\end{figure}
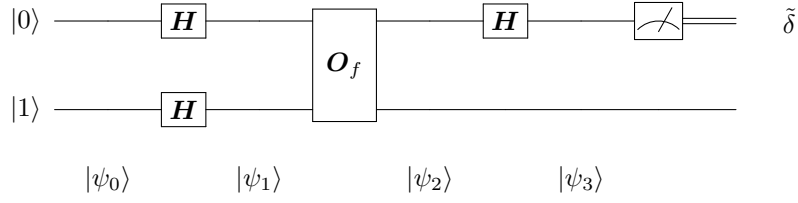

At this moment, the inherent capabilities of quantum computing begin to surface. A problem which needs two evaluations of a function $f$ in its classical version, can be reduced to just one evaluation of the same function in its quantum counterpart thanks to quantum parallelism. One may wonder if this property could be scaled to a function acting on $\{0,1\}^n$ rather than just $\{0,1\}$. That is the objective of the next algorithm.

\newpage

\section{Deutsch–Jozsa Algorithm}

The following algorithm is a generalization of the previous one. Its original version appeared in \cite{deutsch1992rapid}, and is due again to David Deutsch and also to Richard Jozsa. Let $f: \{ 0,1 \}^n \rightarrow \{0,1\}$ be a function that is either constant for all values in $\{ 0,1 \}^n$, or is else balanced (i.e., equal to 0 for exactly half of all possible values in $\{ 0,1 \}^n$, and to 1 for the other half). The problem of determining if the function $f$ is constant or balanced, using it as a black box, is called Deutsch's problem. In the classical version, a solution for this problem requires $2^{n-1}+1$ evaluations of $f$ in the worst case. Let us see if we can improve that bound with the help of a quantum computer. 

\paragraph{$\mathbb{SETUP}$}
$$$$\noindent\framebox{\parbox[b]{\linewidth}{\begin{algorithmic}
\State $\Ket{\psi_0}_{n,1} \leftarrow \Ket{0}_n \otimes \Ket{1}$
\end{algorithmic}}} \\

We need a quantum computer with $n+1$ qubits, where the first $n$ qubits will be initialized at $\ket{0}$ and the remaining one at $\ket{1}$. Again, the single-qubit register is only used as the auxiliary qubit required for the oracle gate.

\paragraph{$\mathbb{STEP}$ 1}
$$$$\noindent\framebox{\parbox[b]{\linewidth}{\begin{algorithmic}
\State $\Ket{\psi_1}_{n,1} \leftarrow \boldsymbol{H}^{\otimes n+1} \left( \ket{\psi_0}_{n,1} \right)$
\end{algorithmic}}} \\

The first transformation we apply to our system is again the Hadamard gate. As explained before, when applied to the basis state $\ket{0}_n$ the Hadamard transformation gives us a superposition of all basis states with identical probability, thus obtaining the following quantum state:
$$
\ket{\psi_1}_{n,1} = \left( \boldsymbol{H}^{\otimes n} \ket{0}_n \right) \otimes  \left( \boldsymbol{H} \ket{1} \right) = \left( \dfrac{1}{\sqrt{2^n}} \sum_{i=0}^{2^n-1} \ket{i}_n \right) \otimes \ket{-}
$$

\paragraph{$\mathbb{STEP}$ 2}
$$$$\noindent\framebox{\parbox[b]{\linewidth}{\begin{algorithmic}
\State $\Ket{\psi_2}_{n,1} \leftarrow \boldsymbol{O}_{f} \left( \Ket{\psi_1}_{n,1} \right)$
\end{algorithmic}}} \\

Now, we apply the oracle gate, which in this case is constructed for $n+1$ qubits  and for the specific function $f$ that we want to know if it is constant or balanced.
$$
\Ket{\psi_2}_{n,1} = \boldsymbol{O}_{f} \Big[ \Big( \dfrac{1}{\sqrt{2^n}} \sum_{i=0}^{2^n-1} \ket{i}_n \Big) \otimes \ket{-} \Big] = \dfrac{1}{\sqrt{2^n}} \sum_{j=0}^{2^n -1} \boldsymbol{O}_{f} ( \ket{j}_n \otimes \ket{-} ) = \dfrac{1}{\sqrt{2^n}} \sum_{j=0}^{2^n -1} (-1)^{f(j)} \ket{j}_n \otimes \ket{-}
$$

The last step is better understood if we apply it separately to a generic basis state $\ket{j}_n$ with $j \in \{0,\ldots,2^n-1\}$, tensored with the Hadamard basis state $\ket{-}$.
\begin{eqnarray*}
\boldsymbol{O}_{f} \Big( \ket{j}_n \otimes \ket{-} \Big) & = & \boldsymbol{O}_{f} \Big( \ket{j}_n \otimes \dfrac{\ket{0} - \ket{1}}{\sqrt{2}} \Big) = \dfrac{\boldsymbol{O}_{f} (\ket{j}_n \otimes \ket{0}) - \boldsymbol{O}_{f} (\ket{j}_n \otimes \ket{1})}{\sqrt{2}} \\
& = & \dfrac{\ket{j}_n \otimes \ket{f(j)} - \ket{j}_n \otimes \ket{1 \oplus f(j)} }{\sqrt{2}} = (-1)^{f(j)} \ket{j}_n \otimes \Big( \dfrac{\ket{0} - \ket{1}}{\sqrt{2}} \Big) \\
& = & (-1)^{f(j)} \ket{j}_n \otimes \ket{-}
\end{eqnarray*}

Thus, we end up again with a superposition of all basis states in the first register, all of them with identical probability. The only difference with the previous state is that the amplitude of the states $\ket{j}_n$ remains positive if $f(j)=0$, and becomes negative when $f(j)=1$. Taking into account that either $f$ is constant or balanced (i.e., all amplitudes are the same now, or half the amplitudes are positive and the other ones negative), is there any way to obtain this information from our quantum register? Note that, until now, although we have applied $f$ to every possible $j \in \{0,\ldots,2^n-1\}$, we have used the gate that implements it only once.

\paragraph{$\mathbb{STEP}$ 3}
$$$$\noindent\framebox{\parbox[b]{\linewidth}{\begin{algorithmic}
\State $\Ket{\psi_3}_{n,1} \leftarrow ( \boldsymbol{H}^{\otimes n} \otimes \boldsymbol{I}) \left( \Ket{\psi_2}_{n,1} \right)$
\end{algorithmic}}} \\

The last step involves again the Hadamard transform. We apply it to the first $n$ qubits of our quantum system, and obtain the following:
\begin{eqnarray*}
\Ket{\psi_3}_{n,1} & = & (\boldsymbol{H}^{\otimes n} \otimes \boldsymbol{I}) \left( \dfrac{1}{\sqrt{2^n}} \sum_{j=0}^{2^n -1} (-1)^{f(j)} \ket{j}_n \otimes \ket{-} \right) = \left( \dfrac{1}{\sqrt{2^n}} \sum_{j=0}^{2^n -1} (-1)^{f(j)} \boldsymbol{H}^{\otimes n} \ket{j}_n \right) \otimes \ket{-} \\
 & = & \left[ \dfrac{1}{2^n} \sum_{k=0}^{2^n-1} \left( \sum_{j=0}^{2^n -1} (-1)^{f(j) + j \cdot k} \right) \ket{k}_n \right] \otimes \ket{-}
\end{eqnarray*}

In order to understand the last equation, we must first fathom the effects of the Hadamard gate on an $n$-qubit basis state. Let $j \in \{0,\ldots,2^n-1\}$, a closer inspection leads us to the following identity:
$$
\boldsymbol{H}^{\otimes n} \left( \ket{j}_n \right)  = \frac{1}{\sqrt{2^n}} \sum_{k=0}^{2^n-1} (-1)^{j \cdot k} \ket{k}_n,
$$
where $j \cdot k$ is the bitwise inner product of $j$ and $k$, modulo 2, i.e., 
$$
j \cdot k = \sum_{l=1}^{n} j_l k_l \mod 2,
$$ 
with $j_l$ and $k_l$ being the binary digits of $j$ and $k$ respectively. Likewise, the last identity is a generalization of the effect of the one-qubit Hadamard gate, which can be seen as:
$$
\boldsymbol{H} ( \ket{j} ) = \frac{1}{\sqrt{2}} \Big(\ket{0} + (-1)^j\ket{1}\Big) = \frac{1}{\sqrt{2}} \sum_{k=0}^{1} (-1)^{jk} \ket{k}
$$

\paragraph{$\mathbb{STEP}$ 4}
$$$$\noindent\framebox{\parbox[b]{\linewidth}{\begin{algorithmic}
	\State $\tilde k \leftarrow$ measure the first register of $\Ket{\psi_3}_{n,1}$
\end{algorithmic}}} \\

Finally, we are able to measure the qubits, thus destroying the information inside the register and obtaining a number $\tilde k \in \{0,\ldots,2^n-1\}$ according to the probability distribution given by the amplitudes $$\alpha_k = \sum_{j=0}^{2^n -1} (-1)^{f(j) + j \cdot k}.$$

Let us have a closer look to the probability of obtaining $\tilde k = 0$. It is given by:
$$
|\alpha_0|^2 = \left| \dfrac{1}{\sqrt{2^n}} \sum_{j=0}^{2^n -1} (-1)^{f(j)} \right|^2.
$$

It is easy to see that $|\alpha_0|^2 = 1$ if and only if the function $f$ is constant for all $j \in \{0,\ldots,2^n-1\}$, and that $|\alpha_0|^2 = 0$ if and only if the function $f$ is balanced (recall that we are promised that $f$ is of one of those two natures). Having said that, if we measure now the first register and obtain some $\tilde k$, we can conclude that $f$ is constant if $\tilde k = 0$, and that $f$ is balanced otherwise. With just a single evaluation of $f$ we have answered the question, as opposed to the $2^{n-1} + 1$ evaluations needed in the classical version. A circuit representation of Deutsch-Jozsa algorithm is displayed in Figure 4. \\

The previous algorithm has much more profound implications than the possibility of solving Deutsch's problem exponentially faster with the help of a quantum computer. It also tells us that, relative to an oracle (i.e., a black box that solves a certain problem or function, namely $f$) we can establish a difference between the classes P and EQP (Exact Quantum Polynomial, that is, quantum algorithms that run in polynomial time and give the solution with probability $1$). Note that this does not imply that P $\neq$ EQP, it just tells us that there exists an oracle separation between P and EQP. We shall come back to these concepts later on.

\begin{figure}\label{fig::djalgorithm}
\caption{Circuit representation of Deutsch-Jozsa algorithm}
\[
\Qcircuit @C=2em @R=2em {
  \lstick{\ket{0}_n} & \qw & \gate{\boldsymbol{H}^{\otimes n}} & \qw & \multigate{1}{\boldsymbol{O}_{f}} & \qw & \gate{\boldsymbol{H}^{\otimes n}} & \qw & \meter & \cw & \tilde{k} \\
  \lstick{\ket{1}} & \qw & \gate{\boldsymbol{H}} & \qw & \ghost{\boldsymbol{O}_{f}} & \qw & \qw & \qw  & \qw & \qw \\
  & \ket{\psi_0} & & \ket{\psi_1} & & \ket{\psi_2} & & \ket{\psi_3} & & }
\]
\end{figure}
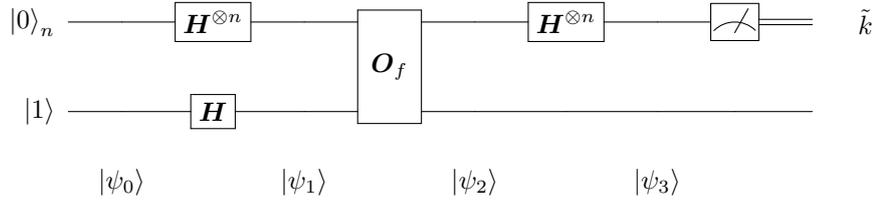

\newpage

\section{Simon's Algorithm}

Let $f: \{0,1\}^n \rightarrow \{0,1\}^n$ be a function such that, for some $s \in \{0,1\}^n$ with $s \neq 0$, $f(j) = f(k)$ if and only if either $j = k$ or $j \oplus k = s$ for all $j,k \in \{0,1\}^n$ (where $\oplus$ is again the bitwise exclusive disjunction operation, also called bitwise xor). Simon's problem is defined as: given such an $f$ as a black box, figure out the value of $s$, which is usually called the xor-mask of $f$. Both the problem and the quantum algorithm we proceed to explain were both first presented in \cite{simon1997power} by Daniel R. Simon, hence their names. \\

Simon's problem can also be seen as an instance of the hidden subgroup problem, where $G = (\{0,1\}^n,\oplus)$, $X \subseteq \{0,1\}^n$ is any finite set, and $K = \{0,s\}$ for some $s \in \{0,1\}^n$. In the classical version, a solution for this problem requires that we find a pair of values $x,y \in \{0,1\}^n$ such that  $f(x) = f(y)$, and then compute $x \oplus y$. This solution requires $\mathcal{O}(2^{n/2})$ evaluations of $f$ in the worst case whereas, as will be proved later, Simon's algorithm only needs $\mathcal{O}(n)$ evaluations of $f$.

\paragraph{$\mathbb{SETUP}$}
$$$$\noindent\framebox{\parbox[b]{\linewidth}{\begin{algorithmic}
\State $\Ket{\psi_0}_{n,n} \leftarrow \Ket{0}_n \otimes \Ket{0}_n$
\end{algorithmic}}} \\

In this algorithm, we need $2n$ qubits, all of them initialized at $\ket{0}$.

\paragraph{$\mathbb{STEP}$ 1}
$$$$\noindent\framebox{\parbox[b]{\linewidth}{\begin{algorithmic}
\State $\Ket{\psi_1}_{n,n} \leftarrow ( \boldsymbol{H}^{\otimes n} \otimes \boldsymbol{I}^{\otimes n} ) \left( \Ket{\psi_0}_{n,n} \right)$
\end{algorithmic}}} \\

We first apply the Hadamard transformation to the first half of our qubit set, thus obtaining the following quantum state.
$$
\ket{\psi_1}_{n,n} = ( \boldsymbol{H}^{\otimes n} \otimes \boldsymbol{I}^{\otimes n} ) (\Ket{0}_n \otimes \Ket{0}_n) = (\boldsymbol{H}^{\otimes n} \Ket{0}_n ) \otimes (\Ket{0}_n) = \dfrac{1}{\sqrt{2^n}} \sum_{j=0}^{2^n - 1}  \Ket{j}_n \otimes \Ket{0}_n 
$$

\paragraph{$\mathbb{STEP}$ 2}
$$$$\noindent\framebox{\parbox[b]{\linewidth}{\begin{algorithmic}
\State $\Ket{\psi_2}_{n,n} \leftarrow \boldsymbol{O}_{f} \left( \Ket{\psi_1}_{n,n} \right)$
\end{algorithmic}}} \\

Next, we use the oracle gate, built particularly for the function $f$. Note that, thanks to quantum parallelism, we apply here the function $f$ to all possible values in $\{0,1\}^n$ with just a single iteration of $\boldsymbol{O}_{f}$. Thus, all possible values of $f$ are now present in the second register.
$$
\Ket{\psi_2}_{n,n} = \boldsymbol{O}_{f} \left( \dfrac{1}{\sqrt{2^n}} \sum_{j=0}^{2^n - 1}  \Ket{j}_n \otimes \Ket{0}_n \right) = \dfrac{1}{\sqrt{2^n}} \sum_{j=0}^{2^n - 1} \boldsymbol{O}_{f} \left( \Ket{j}_n \otimes \Ket{0}_n \right) = \dfrac{1}{\sqrt{2^n}} \sum_{j=0}^{2^n - 1}  \Ket{j}_n \otimes \Ket{f(j)}_n 
$$

\paragraph{$\mathbb{STEP}$ 3}
$$$$\noindent\framebox{\parbox[b]{\linewidth}{\begin{algorithmic}
\State $\tilde \delta \leftarrow$ measure the second register of $\Ket{\psi_2}_{n,n}$
\State
\State $\Ket{\psi_3}_{n} \leftarrow \Ket{\psi_2}_{n,n}$ after measuring the second register 
\end{algorithmic}}} \\

In this step we see for the first time the true effects of measuring part of our quantum system before completing the execution of an algorithm. If we measure now the second register, we shall end up with a value $\tilde \delta = f(\tilde j)$ for a certain $\tilde j \in \{0,1\}^n$. Thus, only the values in $f^{-1}(\tilde \delta)$ will remain in the first register (before measuring, they were the only ones tensored with $\ket{\tilde \delta}$). In any case, as $f^{-1}(\tilde \delta) = \{\tilde j, \tilde j \oplus s\}$ (we remark that $j \oplus k = s$ if and only if $j \oplus s = k)$, we end up with the following quantum state:

$$\Ket{\psi_3}_{n} = \dfrac{1}{\sqrt{2}} \left( \ket{\tilde j}_n + \ket{\tilde j \oplus s}_n \right)$$

\paragraph{$\mathbb{STEP}$ 4}
$$$$\noindent\framebox{\parbox[b]{\linewidth}{\begin{algorithmic}
\State $\Ket{\psi_4}_{n} \leftarrow \boldsymbol{H}^{\otimes n}  \left( \Ket{\psi_3}_{n} \right)$
\end{algorithmic}}} \\

The last transformation we apply to our quantum system is, again, the Hadamard gate. Before that, we could have measured the first register and obtain a certain value in $f^{-1}(\tilde \delta) = \{\tilde j, \tilde j \oplus s\}$. However, in that case we would have ended with the same information as if we would have just made a single classical evaluation of $f$. The Hadamard gate, on the other hand, will let us obtain much more information than from a single evaluation of $f$: it will give us some precious information about $s$. Following a similar reasoning as with Deutsch-Jozsa algorithm, we end up with:
\begin{eqnarray*}
\Ket{\psi_4}_{n} & = & \boldsymbol{H}^{\otimes n} \left[ \dfrac{1}{\sqrt{2}} \left( \ket{\tilde j}_n + \ket{\tilde j \oplus s}_n \right) \right] = \dfrac{1}{\sqrt{2}} \left( \boldsymbol{H}^{\otimes n} \ket{\tilde j}_n + \boldsymbol{H}^{\otimes n} \ket{\tilde j \oplus s}_n \right) \\
 & = & \dfrac{1}{\sqrt{2}} \left[ \dfrac{1}{\sqrt{2^n}} \sum_{k=0}^{2^n-1} (-1)^{\tilde j \cdot k} \ket{k}_n + \dfrac{1}{\sqrt{2^n}} \sum_{k=0}^{2^n-1} (-1)^{(\tilde j \oplus s) \cdot k} \ket{k}_n \right] \\ 
 & = & \dfrac{1}{\sqrt{2^{n+1}}} \sum_{k=0}^{2^n-1} (-1)^{\tilde j \cdot k + (\tilde j \oplus s) \cdot k} \ket{k}_n = \dfrac{1}{\sqrt{2^{n+1}}} \sum_{k=0}^{2^n-1} (-1)^{\tilde j \cdot k} \left[ 1 + (-1)^{s \cdot k} \right] \ket{k}_n
\end{eqnarray*}

\paragraph{$\mathbb{STEP}$ 5}
$$$$\noindent\framebox{\parbox[b]{\linewidth}{\begin{algorithmic}
\State $\tilde \omega \leftarrow$ measure $\Ket{\psi_4}_{n}$
\end{algorithmic}}} \\

Let us suppose that we measure our quantum system right now. It is clear that the current amplitudes of the basis states are $$\alpha_k = \left| \dfrac{1}{\sqrt{2^{n+1}}} \left[ 1 + (-1)^{s \cdot k} \right] \right|^2$$ for $k \in \{0,1\}^n$. \\

However, it can be noted that $\alpha_k \neq 0$ if and only if $s \cdot k \equiv 0 \mod 2$, which happens for half the values of $k$ necessarily. Even more, in those cases the amplitude is equal to $1 / 2^{n-1}$. Analyzing the outcome, we have ended up with some $\tilde \omega$ such that $\tilde \omega \cdot s \equiv 0 \mod 2$. If we are able to find $n-1$ linearly independent values of $\tilde \omega$, namely $\tilde \omega_1,\ldots,\tilde \omega_{n-1}$, we will arrive at a system of equations whose solutions are $0$ and $s$. Before proving this, let us explain the performance of Simon's algorithm with a worked-out example.

\begin{figure}
\caption{Circuit representation of Simon's algorithm (one iteration)}
\[  
\Qcircuit @C=2em @R=2em {
  \lstick{\ket{0}_n} & \qw 
  & \gate{\boldsymbol{H}^{\otimes n}} & \qw
  & \multigate{1}{\boldsymbol{O}_{f}} & \qw 
  & \qw & \qw
  & \gate{\boldsymbol{H}^{\otimes n}} & \qw 
  & \meter & \cw & \tilde \omega \\
  \lstick{\ket{0}_n} & \qw 
  & \qw & \qw 
  & \ghost{\boldsymbol{O}_{f}} & \qw 
  & \meter & \cw 
  & \cw & \cw
  & \cw & \cw & \tilde \delta \\
  & \ket{\psi_0} & 
  & \ket{\psi_1} & 
  & \ket{\psi_2} &
  & \ket{\psi_3} &
  & \ket{\psi_4} &
  &  }
\]
\end{figure}
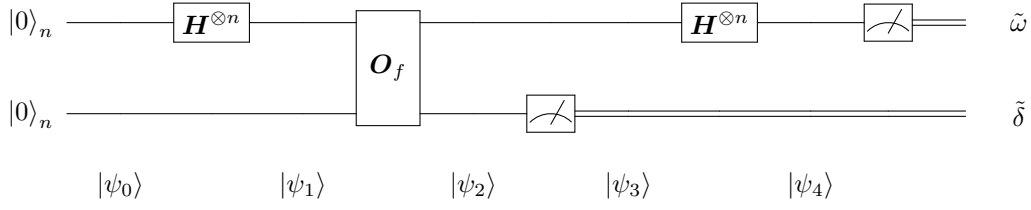

\subsection*{Example with $n=4$}

Let us suppose that we are given as a black box a function that fulfills the requirements of Simon's problem. This function, namely $f : \{0,1\}^4 \rightarrow \{0,1\}^4$, has the following outcome:

\begin{center}
\begin{tabular}{lclcl c lclcl}
$f(0)$ & $=$ & $f(5)$ & $=$ & $0$ & \quad \quad \quad \quad & $f(1)$ & $=$ & $f(4)$ & $=$ & $1$ \\
$f(2)$ & $=$ & $f(7)$ & $=$ & $2$ & & $f(3)$ & $=$ & $f(6)$ & $=$ & $3$ \\
$f(8)$ & $=$ & $f(13)$ & $=$ & $4$ & & $f(9)$ & $=$ & $f(12)$ & $=$ & $5$ \\
$f(10)$ & $=$ & $f(15)$ & $=$ & $6$ & & $f(11)$ & $=$ & $f(14)$ & $=$ & $7$ \\
\end{tabular}
\end{center}

Of course, as the function is given as a black box, this information is only available to us if we evaluate $f(j)$ for all $j \in \{0,\ldots,15\}$. A closer inspection of these values tells us that this function has in fact a xor-mask and that its value is $s=5$. Our objective is to arrive at this knowledge without evaluating $f$ classically for all values in $\{0,1\}^4$. \\

Now, let us suppose that we do not know this information yet. As a start, Simon's algorithm would need the quantum state $$\Ket{\psi_0}_{4,4} = \Ket{0}_4 \otimes \Ket{0}_4.$$After applying the Hadamard gate, we would obtain $$\Ket{\psi_1}_{4,4} = \dfrac{1}{16} \sum_{j=0}^{15}  \Ket{j}_4 \otimes \Ket{0}_4$$ and, after the oracle gate, our system is in the state $$\ket{\psi_2}_{4,4} = \dfrac{1}{16} \sum_{j=0}^{15} \ket{j}_4 \otimes \ket{f(j)}_4.$$

If we make use of the information we know (but we should not!) about $f$, we could see the previous equation as:
\begin{equation*}
\begin{split}
\Ket{\psi_2}_{4,4} = \dfrac{1}{8}\Big[ & \big(\ket{0}_4 + \ket{5}_4\big)\otimes \ket{0}_4 + \big(\ket{1}_4 + \ket{4}_4\big)\otimes \ket{1}_4 + \big(\ket{2}_4 + \ket{7}_4\big)\otimes \ket{2}_4 + \\
 & \big(\ket{3}_4 + \ket{6}_4\big)\otimes \ket{3}_4 + \big(\ket{8}_4 + \ket{13}_4\big)\otimes \ket{4}_4 + \big(\ket{9}_4 + \ket{12}_4\big)\otimes \ket{5}_4 + \\
 & \big(\ket{10}_4 + \ket{15}_4\big)\otimes \ket{6}_4 + \big(\ket{11}_4 + \ket{14}_4\big)\otimes \ket{7}_4 \Big]
\end{split}
\end{equation*}

Please note again that the previous state is actually happening inside our quantum computer whether or not we know the values for $f(j)$. As we have constructed our oracle gate using $f$ as a black box, it necessarily has the previous effect on the Hadamard state. \\

Up until now all steps were deterministic. However, the next step, the measurement of the second register, will have a non-deterministic outcome. Let us suppose that we measure it and obtain, for example, the value $\tilde \delta = 6$. Necessarily, our quantum system is now in the state $$\Ket{\psi_3}_{4} = \dfrac{1}{\sqrt{2}} (\ket{10}_4 + \ket{15}_4)$$ and, if we apply now the Hadamard transform, we end up with $$\ket{\psi_4}_4 = \dfrac{1}{\sqrt{2^{5}}} \sum_{k=0}^{15} (-1)^{\tilde j \cdot k} \left[ 1 + (-1)^{s \cdot k} \right] \ket{k}_n,$$ where $\tilde j \in f^{-1}(6) = \{10,15\}$ and $s$ is the (yet unknown!) xor-mask of $f$. \\

As can be seen, the values we get of $\tilde j$ and $\tilde \delta$ are unimportant. What we need is the outcome of the measurement of our quantum system at this moment. As previously said, we will end up with a number $\tilde \omega$ such that $s \cdot \tilde \omega = 0 \mod 2$. In this case, $\tilde \omega \in \{0,2,5,7,8,10,13\}$ for $s=5$, where all of them have the same probability of coming up. \\

Let us suppose that we have run three complete iterations of Simon's algorithm, and obtain $\tilde \omega_1 = 2$, $\tilde \omega_2 = 7$ and $\tilde \omega_3 = 10$. Thus, draining from the fact that $\tilde \omega_i \cdot s = 0 \mod 2$ for all of them, if we define $s = s_3 s_2 s_1 s_0$ as the bitwise representation of $s$ (with $s_i \in \{0,1\}$), we can consider the system of equations

\begin{center}
\begin{tabular}{ c c c c c c c c l}
 & & & & $s_1$ & & & $=$ & $0 \mod 2$ \\
 & & $s_2$ & $+$ & $s_1$ & $+$ & $s_0$ & $=$ & $0 \mod 2$ \\
$s_3$ & $+$ & $s_2$ & & & $+$ & $s_0$ & $=$ & $0 \mod 2,$ 
\end{tabular}
\end{center} with each one of the equations giving us respectively the following set of solutions:

\begin{center}
\begin{tabular}{ c c l}
$\Omega_1$ & $=$ & $\{0,1,4,5,8,9,12,13\},$ \\
$\Omega_2$ & $=$ & $\{0,3,5,6,8,11,13,14\},$ \\
$\Omega_3$ & $=$ & $\{0,1,4,5,10,11,14,15\}.$ 
\end{tabular}
\end{center}

Clearly, as our xor-mask $s$ must satisfy all previous equations, we can deduce that 
$$
s \in \Omega_1 \cap \Omega_2 \cap \Omega_3 = \{0,5\}
$$ 
and, as $s \neq 0$ by definition, we can conclude that our xor-mask is, in fact, $s = 5$. Note that as the process of obtaining the different values for $\tilde \omega$ is non-deterministic, the remaining question is this: how many times do I have to execute Simon's algorithm in order to obtain such a system with enough probability? 

\begin{thm}
Simon's algorithm finds the correct solution for Simon's problem in $\mathcal{O}(n)$ steps with probability greater than $1/3$.
\end{thm}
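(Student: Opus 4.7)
The plan is to identify the measurement distribution in STEP 5 as uniform over the hyperplane $S^{\perp} := \{\omega \in \mathbb{F}_2^n : s \cdot \omega \equiv 0 \pmod 2\}$ of $\mathbb{F}_2^n$, and then use a union-bound argument to show that $\mathcal{O}(n)$ iterations suffice to recover a basis of $S^{\perp}$ (and hence to determine $s$) with probability strictly greater than $1/3$.

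First I would reread the amplitude expression computed in STEP 5: the coefficient of $\ket{k}_n$ in $\ket{\psi_4}_n$ equals $\frac{(-1)^{\tilde{j}\cdot k}\,[1+(-1)^{s\cdot k}]}{\sqrt{2^{n+1}}}$, which vanishes unless $s \cdot k \equiv 0 \pmod 2$ and otherwise has modulus $2/\sqrt{2^{n+1}}$. Therefore, each single run of the algorithm outputs a uniformly random element $\tilde{\omega}$ of $S^{\perp}$, a set of exactly $2^{n-1}$ vectors which is a codimension-$1$ subspace of $\mathbb{F}_2^n$.

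Next, observe that $s$ is the unique nonzero vector orthogonal to every element of $S^{\perp}$, since $S^{\perp}$ is by definition the kernel of the nonzero linear functional $\omega \mapsto s \cdot \omega$ on $\mathbb{F}_2^n$. Consequently, once we have obtained $n-1$ linearly independent samples $\tilde{\omega}_1, \ldots, \tilde{\omega}_{n-1}$, they form a basis of $S^{\perp}$, and Gaussian elimination over $\mathbb{F}_2$ recovers $s$ as the unique nonzero solution of the homogeneous system $\tilde{\omega}_i \cdot s = 0$. It therefore suffices to bound the number $m$ of iterations needed for the returned samples to span $S^{\perp}$. Failure occurs precisely when all $m$ samples are contained in some proper hyperplane $H \subsetneq S^{\perp}$; hyperplanes of the $(n-1)$-dimensional space $S^{\perp}$ are in bijection with nonzero linear functionals on it, so there are $2^{n-1}-1$ of them, each of cardinality $2^{n-2}$. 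The probability that all $m$ independent uniform samples land in a fixed $H$ is $2^{-m}$, so a union bound yields a failure probability at most $(2^{n-1}-1)\,2^{-m} < 2^{\,n-1-m}$. Choosing, say, $m = n+1$ makes this smaller than $1/4$, so the success probability exceeds $3/4 > 1/3$, and the total number of oracle calls is $\mathcal{O}(n)$.

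The main obstacle in writing this up cleanly is not any deep inequality but the bookkeeping at the interface between quantum and classical parts: one must carefully turn the amplitude formula of STEP 5 into the precise statement ``$\tilde{\omega}$ is uniform on $S^{\perp}$'' (noting in particular that the outcome $\tilde{\omega}=0$ is included and is uninformative, yet is absorbed cleanly by the hyperplane bound), and then correctly count hyperplanes inside the $(n-1)$-dimensional $\mathbb{F}_2$-space $S^{\perp}$ rather than inside the ambient $\mathbb{F}_2^n$. Once this accounting is fixed, the probabilistic estimate is very loose, which is why any constant success probability bounded away from $1$ can be obtained within $\mathcal{O}(n)$ iterations.
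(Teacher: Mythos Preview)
Your argument is correct, but it proceeds differently from the paper's proof. The paper runs exactly $n-1$ iterations and bounds directly the probability that the successive samples $\tilde\omega_1,\ldots,\tilde\omega_{n-1}$ are linearly independent: conditionally on having $m$ independent samples, the next one avoids their span with probability $1-2^m/2^n$ (the paper's convention), and the product $\prod_{m=0}^{n-2}(1-2^m/2^n)$ is bounded below via $\prod(1-a_i)\ge 1-\sum a_i$, giving a constant exceeding $1/3$. You instead run $m=n+1$ iterations and apply a union bound over the $2^{n-1}-1$ hyperplanes of the $(n-1)$-dimensional space $S^{\perp}$, obtaining failure probability at most $(2^{n-1}-1)2^{-m}<1/4$.

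Both routes are standard and yield the same $\mathcal{O}(n)$ conclusion. The paper's sequential product is slightly sharper in that it already succeeds with $n-1$ calls rather than $n+1$; your union-bound argument is arguably cleaner conceptually (it makes explicit that failure is exactly the event ``samples miss some hyperplane'') and immediately shows that the success probability can be pushed arbitrarily close to $1$ by taking $m=n+c$ for a suitable constant $c$, at no change to the asymptotics. Your remark about the outcome $\tilde\omega=0$ being harmlessly absorbed by the hyperplane count is a nice point that the paper leaves implicit.
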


\begin{proof}
Let us suppose that we have obtained $m$ linearly independent equations, namely with $\tilde \omega_1,\ldots,\tilde \omega_{m}$. Then, the probability of obtaining another linearly independent equation in the next interation of Simon's algorithm is $$\dfrac{2^n-2^m}{2^n}.$$ Thus, assuming $n \geq 3$, the probability of obtaining $n-1$ linearly independent equations after $n-1$ iterations of Simon's algorithm is $$
P = \left(1 - \dfrac{1}{2^n} \right)\left(1 - \dfrac{2}{2^n} \right) \cdots \left(1 - \dfrac{2^{n-2}}{2^n} \right) \geq \left( 1 - \sum_{k=2}^{n} \dfrac{1}{2^k} \right) \geq \dfrac{2^{n-1} - 1}{2^n} > \dfrac{1}{3}.
$$ 
\end{proof}

Even though Simon's algorithm is of little practical use in precisely the same way as Deutsch's and Deutsch-Jozsa's are, it shows once more that there exist problems such that a quantum computer is capable of solving them efficiently while a classical one is not. In fact, Simon's algorithm shows that there exist problems such that a quantum Turing machine is exponentially faster than a probabilistic Turing machine (PTM) \cite{nielsen2002quantum}. The difference between Simon's and Deutsch-Jozsa is that the latter can be solved by a PTM with an arbitrarily small error, while the former would take an exponential time to solve with such a machine.\\ 

Finally, although Simon's algorithm stablishes an oracle separation between BPP (Bounded-Error Probabilistic Polynomial time classical algorithms) and BQP, we still do not know if BPP $\neq$ BQP, as Simon's problem depends on a black box.\\

\newpage

\section{Shor's Factoring Algorithm}\label{sec::shor}

Let us begin with a problem in number theory that depends on one of the most well known theorems of all time:

\begin{defi}
Let $N \in \mathbb{Z}_{\geq 0}$, the fundamental theorem of arithmetic tells us that there exists a unique factorization of $N$ as a product of prime powers: $$ N = p_1^{\alpha_1}p_2^{\alpha_2} \cdots p_k^{\alpha_k} = \prod_{i=1}^{k}p_i^{\alpha_i}.$$ The prime factorization problem, or {\em PFP}, is the problem of finding such a factorization for a given number $N \in \mathbb{Z}_{\geq 0}$.
\end{defi}

Many mathematicians have worked on algorithms that calculate the prime factorization of an integer. To understand the ideas behind the most recent solutions to this problem, we must go back to the 17th Century, when Pierre de Fermat invented an elegant factorization method that today bears his name. Fermat's method consists in representing an odd number $N$ as a difference of squares, which is easily proven to exist. Then, as $N = n^2 - m^2 = (n + m)(n - m)$, we have that $\gcd(n + m, N)$ and $\gcd(n - m, N)$ are hopefully non-trivial factors of $N$. \\

But it was not until the beginning of the 20th Century that some improvements to this method were made, as mathematicians like Maurice Kraitchik in 1922 \cite{kraitchik1924recherches}, Derrick Henry Lehmer and Ralph Ernest Powers in 1931 \cite{lehmer1931factoring}, Michael A. Morrison and John Brillhart in 1975 \cite{morrison1975method} and Richard Schroeppel at the end of the 1970s (unpublished, but described in \cite{pomerance1982analysis}) developed factorization methods whose ideas were around the original Fermat's method. These upgrades eventually arrived at its maximum expression with the Quadratic Sieve developed by Carl Pomerance in 1984 \cite{pomerance1984quadratic} and the General Number Field Sieve, a collective effort based essentially on ideas due to John Pollard in 1993 \cite{lenstra1993number,bernstein1993general}. For more information about the story behind the evolution of Fermat's idea, see \cite{pomerance1996tale}. \\

The two previous methods are currently the most efficient classical algorithms for factoring an integer. However, they still have a problem: their computational complexity is super-polynomial in $\log N$, the number of digits in $N$. In fact, the GNFS, which has proven to be the most efficient known classical algorithm for factoring integers larger than $10^{100}$, has (heuristic) computational order
$$ 
\mathcal{O} \left(e^{ (\log N)^{\frac{1}{3}}(\log \log N)^{\frac{2}{3}}}\right).
$$

Unfortunately, these methods have the constraints of any super-polynomial algorithm and, at the present time, no known ponynomial-time classical algorithm exists for the factoring problem (i.e., PFP is not known to be in P). Nevertheless, verifying that a candidate solution for this problem is in fact the actual solution is computationally easy; thus, PFP is in NP. \\

There is a very well known result in computational complexity theory, due to Richard E. Ladner \cite{ladner1975structure}, that tells us the following: if P $\neq$ NP, then there exists a non-empty class, called NP-intermediate, that contains all problems in NP which are neither in the class P nor in NP-complete. It is widely believed that PFP is inside this class. \\

What we surely know, thanks to Peter W. Shor and its acclaimed polynomial-time quantum algorithm for prime factorization \cite{shor1994algorithms}, is that the PFP is in BQP. The objective of this section is to describe such result. For that, we shall first define Shor's algorithm as a classical one that relies on a black box that finds the multiplicative order of a certain integer modulo $N$. Next, we will provide a quantum algorithm that substitutes that black box. Finally, we will describe its performance via a worked-out example.

\subsection*{The classical part}

Let $N \in \mathbb{Z}_{\geq 0}$, we proceed to define Shor's algorithm for factoring $N$.

\paragraph{$\mathbb{STEP}$ 1}
$$$$\noindent\framebox{\parbox[b]{\linewidth}{\begin{algorithmic}
\State $x \leftarrow$ random integer such that $1 < x < N$
\State $d \leftarrow \text{gcd}(x, N)$
\end{algorithmic}}} \\

It is clear that, if $d > 1$, we have already found a factor of $N$. However, the probability of such an unlikely event is small, and in this case we proceed to next step. Note that the computational complexity of this step (i.e., of calculating the greatest common divisor of $x$ and $N$) has order $\mathcal{O}(\log^2 N)$ \cite{knuth1968art}.

\paragraph{$\mathbb{STEP}$ 2}
$$$$\noindent\framebox{\parbox[b]{\linewidth}{\begin{algorithmic}
\State $r \gets O_N(x)$
\end{algorithmic}}} \\

This is the step that we shall resolve with the aid of a quantum computer, as will be explained later. For now, let us recall that the multiplicative order of $x$ modulo $N$, provided that $\gcd(x,N)=1$, is defined as $$O_N(x) = \min \{r \in \mathbb{Z}_{> 0} : x^r \equiv 1 \mod N \}.$$ Calculating the multiplicative order is a hard problem in the general case, and the best known classical algorithm that solves it has a super-polynomial computational complexity \cite{cohen2013course}. \\

Right now, we have obtained a certain $r$ such that $r = O_N(x)$. However, not any value of $r$ serves our purposes. At the end of this step, we shall check if $r$ is an even number and, if that holds, we have to also check if $x^{r/2} + 1 \not\equiv 0 \mod N$. If any of those two conditions fail, we shall go back to the beginning of the algorithm, and repeat it again with a different random value for $x$. The unavoidable question is: what is the probability of not making it?

\begin{thm}

Let $N \in \mathbb{Z}_{\geq 0}$ such that $2 \nmid N$ and whose prime factorization is $$ N = p_{1}^{\alpha_1}p_{2}^{\alpha_2}\cdot\cdot\cdot p_{k}^{\alpha_k}.$$ Suppose $x$ is chosen at random, with $1 < x < N$ and $\gcd(x,N) = 1$, and let $r = O_N (x)$. Then:

{\em $$\text{Prob \Big[} (2 \mid r) \wedge (x^{r/2}+1 \not\equiv 0 \text{ mod } N) \text{\Big]} \geq 1 - \frac{1}{2^{k-1}}  $$ }

\end{thm}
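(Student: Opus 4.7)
The plan is a Chinese Remainder reduction followed by a careful count. Since $N$ is odd, CRT gives an isomorphism
$$(\mathbb{Z}/N\mathbb{Z})^* \cong \prod_{i=1}^{k} (\mathbb{Z}/p_i^{\alpha_i}\mathbb{Z})^*,$$
in which each factor is cyclic of even order $n_i = p_i^{\alpha_i-1}(p_i-1)$; write $n_i = 2^{t_i} m_i$ with $m_i$ odd and $t_i \geq 1$. A uniform random $x$ coprime to $N$ corresponds to choosing the residues $x_i$ independently and uniformly in each factor. Let $r_i$ be the multiplicative order of $x_i$ modulo $p_i^{\alpha_i}$ and write $r_i = 2^{c_i} d_i$ with $d_i$ odd; then $r = \text{lcm}(r_1,\ldots,r_k)$, so its 2-adic valuation is $C := \max_i c_i$.

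The first key step is to reformulate the bad event ``$r$ odd or $x^{r/2} \equiv -1 \mod N$''. If $r$ is odd, all $c_i = 0$. If $r$ is even, CRT turns $x^{r/2} \equiv -1 \mod N$ into the requirement that $x_i^{r/2}$ be the unique element of order $2$ in the $i$-th cyclic factor for every $i$, i.e., $r_i/\gcd(r_i, r/2) = 2$. Using $\gcd(r_i, r/2) = 2^{\min(c_i, C-1)} d_i$ (because $d_i$ divides the odd part of $r$), this simplifies to $c_i = C$ for every $i$. Combining both cases, the bad event is equivalent to ``all $c_i$ are equal''.

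To compute probabilities, I would further decompose each cyclic group of order $n_i$ via CRT as $\mathbb{Z}/2^{t_i}\mathbb{Z} \times \mathbb{Z}/m_i\mathbb{Z}$, and observe that $c_i$ depends only on the first coordinate. The standard count of elements of each 2-power order in $\mathbb{Z}/2^{t_i}\mathbb{Z}$ yields
$$\text{Prob}(c_i = 0) = \frac{1}{2^{t_i}}, \qquad \text{Prob}(c_i = c) = \frac{1}{2^{t_i - c + 1}} \text{ for } 1 \leq c \leq t_i.$$
By independence, setting $S = \sum_i t_i$ and $T = \min_i t_i$,
$$\text{Prob}(\text{bad}) = \frac{1}{2^S}\left(1 + \sum_{c=1}^{T} 2^{k(c-1)}\right) = \frac{1}{2^S}\left(1 + \frac{2^{kT} - 1}{2^k - 1}\right).$$

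The main obstacle is bounding this expression by $1/2^{k-1}$; the bound is tight when all $t_i = 1$, so no estimate can afford to be wasteful. Using $S \geq kT$ and clearing denominators, the required inequality reduces after routine algebra to $(2^{k-1} - 1)(2^{kT} - 2^k) \geq 0$, which is immediate from $T \geq 1$ and $k \geq 2$ (the case $k = 1$ is vacuous, as then $1/2^{k-1} = 1$). This establishes $\text{Prob}(\text{bad}) \leq 1/2^{k-1}$, hence the desired lower bound on the good event.
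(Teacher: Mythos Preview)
Your proof is correct and complete. The paper itself does not give a proof of this theorem; it simply cites Appendix~B of \cite{ekert1996quantum}. Your argument is the standard one found there (and in Nielsen--Chuang): CRT reduction to cyclic prime-power factors, the observation that the bad event is exactly ``all $2$-adic valuations $c_i$ of the local orders coincide'', the explicit count $\text{Prob}(c_i=c)=2^{c-1}/2^{t_i}$ in each cyclic $2$-part, and the final bound via $S\geq kT$. The algebraic reduction to $(2^{k-1}-1)(2^{kT}-2^k)\geq 0$ is clean and the edge case $k=1$ is handled correctly.
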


\begin{proof}

\cite{ekert1996quantum} Appendix B.

\end{proof}

In other words, the probability of obtaining a number $x$ that fulfills all the conditions of the algorithm is greater than 1/2 in the worst case (i.e., when $N$ has only two different prime factors).

\paragraph{$\mathbb{STEP}$ 5}
$$$$\noindent\framebox{\parbox[b]{\linewidth}{\begin{algorithmic}
\State  $d_1 \gets \text{gcd}(x^{r/2} + 1, N)$
\State $d_2 \gets \text{gcd}(x^{r/2} - 1, N)$
\end{algorithmic}}} \\

As $2 \mid r$ and $x^{r/2} + 1 \not\equiv 0 \mod N$, it is easy to see that, from 
$$
x^r - 1 \equiv (x^{r/2} - 1)(x^{r/2} + 1) \equiv  0 \text{ mod } N,
$$ 
we can conclude that $d_1$ and $d_2$ are non-trivial factors of $N$, thus accomplishing the main purpose of the algorithm. As promised, what remains to be seen is the calculus of the multiplicative order of $x$ modulo $N$ with the help of a quantum computer. We describe now this process.

\subsection*{The quantum part}

We are given two non--negative integers $N$ and $x$, with $1<x<N$. Our aim is finding the order of $x$ in the congruence ring ${\mathbb Z} / N {\mathbb Z}$.

\paragraph{$\mathbb{SETUP}$}
$$$$\noindent\framebox{\parbox[b]{\linewidth}{\begin{algorithmic}
\State $ \Ket{\psi_0}_{t,n} \leftarrow \Ket{0}_t \otimes \Ket{0}_n$
\end{algorithmic}}} \\

First, we need a quantum computer with two registers of sizes $t$ and $n$ respectively, where $n = \ceil*{\text{log}_2 N} $ and $t = 2n$ (the reason behind this will be clear later). All qubits are initialized at 0.

\paragraph{$\mathbb{STEP}$ 2.1}
$$$$\noindent\framebox{\parbox[b]{\linewidth}{\begin{algorithmic}
\State $\Ket{\psi_1}_{t,n} \leftarrow (\boldsymbol{H}^{\otimes t} \otimes \boldsymbol{I}^{\otimes n}) \left( \Ket{\psi_0}_{t,n} \right)$
\end{algorithmic}}} \\

This transformation is now a common factor of our quantum algorithms and there is no need of explaining it furthermore. The crucial point is that after its application the first register is in a superposition of all states of the computational basis with equal amplitudes given by $1 / \sqrt{2^t}$. More precisely:
$$ 
\Ket{\psi_1}_{t,n} = \frac{1}{\sqrt{2^t}} \sum_{j=0}^{2^t - 1} \Ket{j}_t \otimes \Ket{0}_n
$$

\paragraph{$\mathbb{STEP}$ 2.2}
$$$$\noindent\framebox{\parbox[b]{\linewidth}{\begin{algorithmic}
\State $\Ket{\psi_2}_{t,n} \leftarrow \boldsymbol{M}_{x,N} \left( \Ket{\psi_1}_{t,n} \right)$
\end{algorithmic}}} \\

Let $n,t,x$ and $N$ defined as in the context of this algorithm, the modular exponentiation gate is the unitary operator that has the following effect on the basis states of a quantum system: $$\boldsymbol{M}_{x,N} : \Ket{j}_t \otimes \Ket{k}_n \rightarrow \Ket{j} \otimes \ket{k + x^j \text{ mod } N}.$$ This transformation is unitary, and its construction takes $\mathcal{O}(\log^3 N)$ steps \cite{shor1994algorithms}. Thus, as will be clear at the end of this subsection, it represents the bottleneck of Shor's algorithm. 
$$
\ket{\psi_2}_{t,n} = \boldsymbol{M}_{x,N} \left( \Ket{\psi_1}_{t,n} \right) = \frac{1}{\sqrt{2^t}} \sum_{j=0}^{2^t - 1} \boldsymbol{M}_{x,N} \left( \Ket{j}_t \otimes \Ket{0}_n \right) = \frac{1}{\sqrt{2^t}} \sum_{j=0}^{2^t - 1} \Ket{j}_t \otimes \Ket{x^j \text{ mod } N}_n.
$$

Thanks again to quantum parallelism, we have now generated all powers of $x$ modulo $N$ simultaneously. From now on, in order to make the tracking of the algorithm clearer, we shall suppose that $r$ is a power of 2. In this case, our current quantum state can be expressed as follows:

$$ 
\Ket{\psi_2}_{t,n} = \dfrac{1}{\sqrt{2^t}} \sum_{b=0}^{r-1} \left[ \left( \sum_{a=0}^{\frac{2^t}{r} - 1} \ket{ar + b}_t \right) \otimes \ket{x^b \text{ mod } N}_n \right].
$$

The general case where $r$ may not be a power of 2 is more difficult to express, and is better to explain it in the part devoted to the example.

\paragraph{$\mathbb{STEP}$ 2.3}
$$$$\noindent\framebox{\parbox[b]{\linewidth}{\begin{algorithmic}
\State $\tilde \delta \leftarrow$ measure the second register
\State
\State $ \Ket{\psi_3}_{t} \leftarrow \Ket{\psi_2}_{t,n}$ after measuring the second register
\end{algorithmic}}} \\

Let us suppose that, for a certain $b_0 \in \{0, \ldots, r-1 \}$, we obtain the value $\tilde \delta = x^{b_0}$ mod $N$. Thus, the computer is now in the following quantum state: $$ \Ket{\psi_3}_t = \sqrt{\frac{r}{2^t}}  \sum^{\frac{2^t}{r} - 1}_{a=0} \Ket{ar + b_0}_t.$$ Note that now we only have $2^t/r$ terms in the sum, instead of the previous $2^t$ ones, and that the value we are looking for (i.e., $r$) is beginning to surface inside our quantum system in the form of a period. The next step will provide us with a tool capable of extracting this period from a quantum state.

\paragraph{$\mathbb{STEP}$ 2.4}
$$$$\noindent\framebox{\parbox[b]{\linewidth}{\begin{algorithmic}
\State $ \ket{\psi_4}_t \leftarrow \boldsymbol{F}_n \left( \ket{\psi_3}_t \right)$
\end{algorithmic}}} \\

This step requires that we define a new quantum gate: the quantum discrete Fourier transform, or QFT, whose effect on the quantum basis states is: 
$$
\boldsymbol{F}_n : \ket{j}_n \longmapsto \dfrac{1}{\sqrt{2^n}} \sum_{k=0}^{2^n-1} e^{2\pi i \cdot jk/2^n} \ket{k}_n
$$

After applying the QFT to the first register, we can express the obtained state as follows:
\begin{eqnarray*}
\ket{\psi_4}_t & = & \sqrt{\frac{r}{2^t}}  \sum^{(2^t/r) - 1}_{a=0} \boldsymbol{F}_t (\Ket{ar + b_0}_t)
 = \sqrt{\frac{r}{2^t}} \sum^{(2^t/r)-1}_{a=0} \left( \frac{1}{\sqrt{2^t}} \sum^{2^t-1}_{j=0} e^{2\pi i \cdot j(ar+b_0)/2^t} \ket{j}_t \right ) \\
 & = & \frac{1}{\sqrt{r}} \left( \sum^{2^t-1}_{j=0} \left[ \frac{r}{2^t} \sum^{(2^t/r)-1}_{a=0} e^{2\pi i \cdot ja/(2^t/r)} \right] e^{2\pi i \cdot jb_0/2^t} \ket{j}_t \right)
\end{eqnarray*}

Now, in order to arrive at the last version of the state, we use the following property of the exponential sums: 
$$ 
\dfrac{1}{2^t/r} \sum\limits^{\frac{2^t}{r}-1}_{a=0} e^{2 \pi i \cdot ja/(2^t/r)} = 
\left\{ \begin{array}{l} 1 \text{ if } (2^t/r) \mid j \\ 0 \text{ otherwise,} \end{array} \right.
$$ 
then the only non-zero terms in the bracket sum are those with the form $j=2^tk/r$. Then we can rewrite the state as:
$$ 
\ket{\psi_4}_t = \dfrac{1}{\sqrt{r}}\left( \sum\limits^{r-1}_{k=0} e^{2 \pi i \cdot kb_0/r} \Ket{\dfrac{k2^t}{r}}_t \right) 
$$

It is time now to measure the first register.

\paragraph{$\mathbb{STEP}$ 2.5}
$$$$\noindent\framebox{\parbox[b]{\linewidth}{\begin{algorithmic}
\State $\tilde \omega \leftarrow$ measure $\Ket{\psi_4}_t$
\end{algorithmic}}} \\

We obtain, for some unknown $k_0 \in \{0,\ldots,r-1\}$ (the probability of obtaining each of them is the same, but this changes in the general case where $r$ may not be a power of 2): 
$$
\tilde\omega = \frac{k_0 2^t}{r}
$$

If $\tilde\omega = 0$, we obtain no information and we must come back to the beginning. If, otherwise, $\tilde\omega \neq 0$, we can obtain some information about $r$ (or the actual value of $r$ indeed), just by putting $\tilde \omega / 2^t$ as a fraction in lowest terms and taking the denominator. If we call $c$ that denominator and $x^c \not\equiv 1 \mod N$, then $c$ is a factor of $r$. If, on the other hand $x^c \equiv 1 \mod N$, then $r = c$ and we have obtained the multiplicative order of $x$ modulo $N$. In the former case, we should rerun the quantum part of Shor's algorithm with $x^c$ instead of with $x$, and by repeating this process a finite number of times we shall end up with the correct value of $r$. \\

Now let us show with an example how Shor's algorithm works for a particular input. 

\subsection*{Example: Factoring 217}

We want to find the prime factors of 217 using Shor's factoring algorithm. First, the algorithm chooses a random integer $x$ such that $1 < x < 217$. Let us suppose that we obtain $x = 5$, which fulfills the first condition: $\text{gcd}(5, 217) = 1$. Next we find the multiplicative order of 5 modulo 217, which is achieved with the help of a quantum computer. \\

We calculate $t$ and $n$ and initialize the quantum system. In this case, $n = \ceil*{\log_2 217} = 8$ and consequently $t = 2n = 16$. Thus, our quantum registers have the following initial states: \\

\noindent\framebox{\parbox[b]{\linewidth}{\begin{algorithmic}
\State $ \ket{\psi_0}_{16,8} \leftarrow \ket{0}_{16} \otimes \ket{0}_{8}$
\end{algorithmic}}} \\

Next, we apply the Hadamard transformation, hence obtaining a superposition of all basis states in the first register, all with identical amplitudes. This way, all integers between $0$ and $2^{16}-1$ are now \textit{somewhere} in the first register. \\

\noindent\framebox{\parbox[b]{\linewidth}{\begin{algorithmic}
\State $ \Ket{\psi_1}_{16,8} \leftarrow ( \boldsymbol{H}^{\otimes 16} \otimes \boldsymbol{I}^{\otimes 8} ) \left( \ket{\psi_0}_{16,8} \right)$
\end{algorithmic}}}

$$ 
\Ket{\psi_1}_{16,8} = \frac{1}{\sqrt{2^{16}}} \sum_{j=0}^{2^{16} - 1} \Ket{j}_{16} \otimes \Ket{0}_8
$$

Afterwards, we apply the quantum gate that calculates the powers of 5 modulo 217, which has the following effect. \\

\noindent\framebox{\parbox[b]{\linewidth}{\begin{algorithmic}
\State $\Ket{\psi_2}_{16,8} \leftarrow \boldsymbol{M}_{5,217} \left( \Ket{\psi_1}_{16,8} \right)$
\end{algorithmic}}} \\ 

$$
\ket{\psi_2}_{16,8} = \frac{1}{\sqrt{2^{16}}} \sum_{j=0}^{2^{16} - 1} \boldsymbol{M}_{5,217} \left( \Ket{j}_{16} \otimes \Ket{0}_{8} \right) = \frac{1}{\sqrt{2^{16}}} \sum_{j=0}^{2^{16} - 1} \Ket{j}_{16} \otimes \Ket{5^j \text{ mod } 217}_8
$$

If we expand the sum, we can alternatively express the result as follows (please note that, for the sake of simplicity, we have omitted the subindices and the tensor product operators):
$$ 
\begin{array}{ l @{\hspace{2bp}} r @{\hspace{2bp}} r @{\hspace{2bp}} r @{\hspace{2bp}} r @{\hspace{2bp}} r @{\hspace{2bp}} r @{\hspace{2bp}} r @{\hspace{2bp}} r @{\hspace{2bp}} r @{\hspace{2bp}} r @{\hspace{2bp}} r @{\hspace{2bp}} l}
\Ket{\psi_2} = \frac{1}{\sqrt{2^{16}}} & ( \ket{0}\ket{1} & + & \ket{1}\ket{5} & + & \ket{2}\ket{25} & + &  \ket{3}\ket{125} & + & \ket{4}\ket{191}   & + & \ket{5}\ket{87} & +\\
                                                           & \ket{6}\ket{1}   & + & \ket{7}\ket{5} & + & \ket{8}\ket{25} & + & \ket{9}\ket{125}  & + & \ket{10}\ket{191} & + & \ket{11}\ket{87} & + \\
                                                           & \ket{12}\ket{1}   & + & \ket{13}\ket{5} & + & \ket{14}\ket{25} & + & \ket{15}\ket{125}  & + & \ket{16}\ket{191} & + & \ket{17}\ket{87} & + \\
                                                           & \ket{18}\ket{1}   & + & \ket{19}\ket{5} & + & \ket{20}\ket{25} & + & \ket{21}\ket{125}  & + & \ket{22}\ket{191} & + & \ket{23}\ket{87} & + \\
                                                           & \ket{24}\ket{1}   & + & \ket{25}\ket{5} & + & \ket{26}\ket{25} & + & \ket{27}\ket{125}  & + & \ket{28}\ket{191} & + & \ket{29}\ket{87} & + \\           
                                                           &\ket{30}\ket{1}& +&\ldots)\,\,\,\,\,\,\,\,\,\,\,&&&&&&&&& 
\end{array}
$$

After a close inspection, we can observe that the values on the second register are periodic. If we take common factor, we end up with the state:
$$ 
\begin{array}{ l @{\hspace{2bp}} r @{\hspace{2bp}} r @{\hspace{2bp}} r @{\hspace{2bp}} r @{\hspace{2bp}} r @{\hspace{2bp}} r @{\hspace{2bp}} r @{\hspace{2bp}} r @{\hspace{2bp}} l @{\hspace{2bp}} l}
\Ket{\psi_2} = \frac{1}{\sqrt{2^{16}}} & \big(( \ket{0}   & + & \ket{6}    & + & \ket{12} & + &  \ket{18} & + \ldots +  & \ket{65526} + \ket{65532})\ket{1} + \\
                                                         & ( \ket{1}   & + & \ket{7}    & + & \ket{13} & + &  \ket{19} & + \ldots +  & \ket{65527} + \ket{65533})\ket{5} + \\
                                                         & ( \ket{2}   & + & \ket{8}    & + & \ket{14} & + &  \ket{20} & + \ldots +  & \ket{65528} + \ket{65534})\ket{25} + \\
                                                         & ( \ket{3}   & + & \ket{9}    & + & \ket{15} & + &  \ket{21} & + \ldots +  & \ket{65529} + \ket{65535})\ket{125} + \\
                                                         & ( \ket{4}   & + & \ket{10}  & + & \ket{16} & + &  \ket{22} & + \ldots +  & \ket{65530})\ket{191} + \\
                                                         & ( \ket{5}   & + & \ket{11}  & + & \ket{17} & + &  \ket{23} & + \ldots +  & \ket{65531})\ket{87}\big)\\
\end{array}
$$

Thanks to this representation, it is easier to understand what will happen if we measure the second register. \\

\noindent\framebox{\parbox[b]{\linewidth}{\begin{algorithmic}
\State $\tilde \delta \leftarrow$ measure the second register
\State
\State $ \Ket{\psi_3}_{16} \leftarrow \Ket{\psi_2}_{16,8}$  after measuring the second register
\end{algorithmic}}} \\

It is clear that we shall obtain a value $\tilde \delta$ such that $\tilde \delta = 5^{\, \tilde j} \text{ mod } 217$ for a certain $\tilde j \in \{0,\ldots,2^{16}-1\}$. Thus, $\tilde \delta \in \{1,5,25,125,191,87\}$ (which are the powers of 5 modulo 217). Let us suppose that we get $\tilde \delta = 25$. The register will collapse into $\ket{25}$ and all other possible values will be destroyed and gone forever, the information about the rest of possible powers of 5 modulo 217 lost and the first register will also collapse into the values that were tensored with $\ket{25}$, thus discarding the remaining ones. What interests us is that all the basis states tensored with $\ket{25}$ correspond with the exponents $\tilde j$ such that $\tilde \delta = 5^{\tilde j} \text{ mod } 217$. More specifically:
$$
\Ket{\psi_3}_{16} = \frac{1}{\sqrt{10923}} \Big( \ket{2}_{16} + \ket{8}_{16} + \ket{14}_{16} + \ldots + \ket{65528}_{16} + \ket{65534}_{16} \Big) = \frac{1}{\sqrt{10923}}\left( \sum^{10922}_{a=0} \ket{6a + 2}_{16} \right)
$$

A pattern has arisen, as the basis states on the first register display some periodic behavior. This period naturally corresponds with the multiplicative order of 5 modulo 217, which happens to be equal to 6. Of course, this information is yet hidden to us, we are just using some knowledge of the problem for providing a mathematical explanation of the performance of the algorithm in this particular case. On the other hand, the constant 10923 is just a normalization of the amplitudes after the collapse of the register, corresponding to the total number of exponents that return 25 modulo 217 between 0 and $2^{16}-1$. \\

As explained previously, if we want to obtain the period from within the bowels of our quantum system, we should use the quantum Fourier transform: \\

\noindent\framebox{\parbox[b]{\linewidth}{\begin{algorithmic}
\State $ \ket{\psi_4}_{16} \leftarrow \boldsymbol{F}_{16} \left( \ket{\psi_3}_{16} \right)$
\end{algorithmic}}} \\

\begin{eqnarray*}
\ket{\psi_4}_{16} & = & \boldsymbol{F}_{16} \left(\frac{1}{\sqrt{10923}} \sum^{10922}_{a=0} \ket{6a + 2}_{16} \right) = \frac{1}{\sqrt{10923}} \sum^{10922}_{a=0}  \boldsymbol{F}_{16} \big( \ket{6a + 2}_{16} \big) \\
& = & \frac{1}{\sqrt{10923}} \sum^{10922}_{a=0} \left( \dfrac{1}{\sqrt{2^{16}}} \sum_{k=0}^{2^{16}-1} e^{2 \pi i \cdot (6a+2) k/2^{16}} \ket{k}_{16}\right) \\
& = & \frac{1}{\sqrt{2^{16}}} \sum^{2^{16}-1}_{k=0} \left( \left[ \frac{1}{\sqrt{10923}} \sum^{10922}_{a=0} e^{2\pi i \cdot 6ak/2^{16}} \right] e^{2\pi i \cdot 2k/2^{16}} \ket{k}_{16} \right)
\end{eqnarray*}

Thus, we end up again with a distribution of the amplitudes defined by 
$$
\alpha_k = \dfrac{1}{\sqrt{2^{16} \cdot 10923}} \left( \sum^{10922}_{a=0} e^{2\pi i \cdot 6ak/2^{16}} \right)  e^{2\pi i \cdot 2k/2^{16}},
$$
which gives us the following probability distribution (not uniform, as the order in this case is not a power of 2):
$$ 
|\alpha_k|^2 = \dfrac{1}{2^{16}\cdot 10923} \left\lvert \sum^{10922}_{a=0} e^{2\pi i \cdot 6ja/2^{16}} \right\rvert^2
$$ 
with $k=0,\ldots,2^{16}-1$. If we represent this distribution in a graph, it is easier to understand the final steps of the algorithm (please note that each of peaks is composed of many numbers with probability greater than 0, not just of a single integer).

\begin{center}
\includegraphics[width=0.9\textwidth]{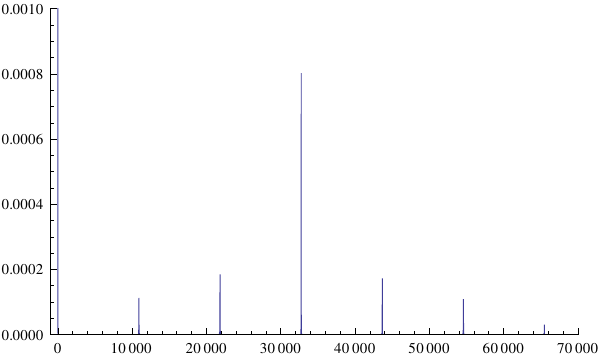}
\end{center}
   
$$$$\noindent\framebox{\parbox[b]{\linewidth}{\begin{algorithmic}
\State $ \tilde \omega \leftarrow  \text{ measure } \Ket{\psi_4}_{16}$
\end{algorithmic}}} \\

We then measure the first register, and obtain non-deterministically a value from one of the seven peaks. The first peak yields a 0, thus forcing us to start again the quantum part of the algorithm. If, otherwise, we obtain a value from the other six peaks, we can retrieve (for example, using continued fractions) the order $r$ from it. \\ 

Let us explain this last step more formally (we took details for this from \cite{lomonaco2002shor}). We recall that a continued fraction is represented as 
$$
[a_0;a_1,\ldots,a_K] = a_0+\cfrac{1}{a_1+\cfrac{1}{a_2+\cfrac{1}{a_3+\cfrac{1}{\ddots+\cfrac{1}{a_K}}}}}
$$ 
where $a_0 \in \mathbb{Z}_{\geq 0}$ and $a_1 , \ldots , a_K \in \mathbb{Z}_{> 0}$. Let $[a_0;a_1,\ldots,a_K]$, then there exists a unique $q \in \mathbb{Q}_{> 0}$ such that $q = [a_0;a_1,\ldots,a_K]$. We define the $k$-th convergent of $[a_0;a_1,\ldots,a_K]$, where $0 \leq k < K$, as $$q_k = [a_0;a_1\ldots,a_k].$$ Provided a certain $q \in \mathbb{Q}_{> 0}$, we can compute the members of its corresponding continued fraction as follows:
\begin{equation*}
\begin{split}
a_0 \leftarrow \lfloor q \rfloor, & \quad q_0 \leftarrow q - a_0 \\
\vdots \\
a_{k+1} \leftarrow \lfloor 1 / q_k \rfloor, & \quad q_{k+1}  \leftarrow 1 / q_k - a_{k+1}
\end{split}
\end{equation*} 
Even more, each of the convergents $q_k$ can be expressed as $q_k = b_k / c_k$, where $\gcd (b_k, c_k) = 1$ and 
\begin{equation*}
\begin{split}
b_0 \leftarrow a_0, & \quad c_0 \leftarrow 1 \\
b_1 \leftarrow a_0 a_1 + 1, & \quad c_1 \leftarrow a_1 \\
\vdots \\
b_{k+2} \leftarrow a_{k+2} b_{k+1} + b_k, & \quad c_{k+2} \leftarrow a_{k+2} c_{k+1} + c_k
\end{split}
\end{equation*}

It can be proven \cite{lomonaco2002shor} that we can obtain the period $r$ from $\tilde{w}$ with the following algorithm: starting at $k=1$, we compute $b_k$ and $c_k$ as previously explained from the continued fraction of $q = \tilde{w} / 2^n$. Then, we check if $$x^{c_k} \equiv 1 \mod N.$$ If the answer is affirmative, we have obtained the order $r$; if not, we do it again for $k+1$. \\

For example, a possible value from the second peak is 10915, and if we divide it by $2^{16}$ we obtain the continued fraction $$\dfrac{10915}{65536} = [0; 6, 237, 3, 1, 1, 6],$$ whose first convergent is $\frac{1}{6}$. Thus, as $5^6 \equiv 1 \mod 217$ we have that the order is 6. There is a probability of not obtaining $r$ in this step (not all peaks will lead to a successful result), thus having to restart the algorithm again to obtain the correct order. As shown in \cite{lomonaco2002shor}, the probability of success has a lower bound of $$\dfrac{0.232}{\log \log N} \left(1 - \dfrac{1}{N} \right)^2.$$ \\

It remains to be seen if $x=5$ and $r=6$ fulfill the final conditions of the algorithm. As 6 is an even number, and $5^{6/2} + 1 = 126 \neq 0 \text{ mod } 217$, we can finally proceed to the last step. As explained, we can calculate now two non-trivial factors of 217: 
$$
d_1 = \gcd(126, 217) = 7, \qquad d_2 = \gcd(124, 217) = 31,
$$ 
which in fact are the only prime factors of 217. Thus ends Shor's algorithm for this particular case. \\

\noindent\framebox{\parbox[b]{\linewidth}{\begin{algorithmic}
\State $ 217 = 7\times 31$
\end{algorithmic}}} \\

The first experimental demonstration of Shor's factoring algorithm came in 2001, \cite{vandersypen2001experimental} when a group at IBM factored 15 into 3 and 5 using a nuclear magnetic resonance (NMR) quantum computer with seven spin-1/2 nuclei in a molecule as qubits. In 2007, Shor's algorithm was implemented with photonic qubits by two different groups \cite{lanyon2007experimental,lu2007demonstration}, with both of them observing quantum entanglement in the process. In 2012, number 143 was factored with the help of an adiabatic quantum computer \cite{xu2012quantum}.

\newpage

\section{Grover's Search Algorithm}\label{sec:grover}

Grover's search algorithm was first described by Lov K. Grover in \cite{grover1996fast, grover1997quantum, grover2001schrodinger}, hence its name. The original aim of Grover's algorithm is the following: we have an unstructured and disorganized database with $2^n$ elements, identified from now on with the indices $0, \ldots , 2^n -1 $, and we want to find one that satisfies a certain property. The algorithm leans on two hypotheses: first, that such an element exists inside the database; and second, that this element is unique. \\

If we are to search for this target element with a classical computer, in the worst case we will have to check all members of the database, which tells us that this problem has a classical computational complexity of $\mathcal{O}(2^n)$. As will be shown, Grover's quantum search algorithm will make only $\mathcal{O}(\sqrt{2^n})$ queries to the database, thus strictly improving the performance of its classical counterpart (not with an exponential improvement though).

\paragraph{$\mathbb{SETUP}$}
$$$$\noindent\framebox{\parbox[b]{\linewidth}{\begin{algorithmic}
\State $\ket{\psi_0}_{n,1} \leftarrow \ket{0}_n \otimes \ket{1}$
\end{algorithmic}}} \\

We need a quantum computer with $n+1$ qubits, where the first $n$ qubits will be initialized at $\ket{0}$ and the last one at $\ket{1}$.

\paragraph{$\mathbb{STEP}$ 1}
$$$$\noindent\framebox{\parbox[b]{\linewidth}{\begin{algorithmic}
\State $\Ket{\psi_1}_{n,1} \leftarrow \boldsymbol{H}^{\otimes n+1} \left( \Ket{\psi_0}_{n,1} \right)$
\end{algorithmic}}} \\

In the first proper step of the algorithm, we apply the Hadamard quantum gate to all the qubits in our system. This way, the obtained result is a combination of all basis states inside our first $n$ qubits, and the $\ket{-}$ state in the remaining one.
$$
\ket{\psi_1}_{n,1}  = \Big(\boldsymbol{H}^{\otimes n} \ket{0}_n \Big)  \otimes  \Big( \boldsymbol{H} \ket{1} \Big) = \Big( \boldsymbol{H} \ket{0} \Big)^{\otimes n}  \otimes \Big( \boldsymbol{H} \ket{1} \Big) = \Big( \dfrac{1}{\sqrt{2^n}} \sum_{j=0}^{2^n-1} \ket{j}_n \Big) \otimes \ket{-}
$$

For the sake of simplicity, we reintroduce a useful quantum state that will appear many times throughout the rest of the algorithm. It will also be needed in order to define one of the key quantum gates of Grover's search method, as will be shown in the next step.
$$
\ket{\gamma}_n = \dfrac{1}{\sqrt{2^n}} \sum_{j=0}^{2^n -1} \ket{j}_n 
$$

This definition helps us to express our quantum state as $\ket{\psi_1}_{n,1} = \ket{\gamma}_n \otimes \ket{-}$.

\paragraph{$\mathbb{STEP}$ 2.1}
$$$$\noindent\framebox{\parbox[b]{\linewidth}{\begin{algorithmic}
\State $\Ket{\psi_2}_{n,1} \leftarrow \boldsymbol{O}_{f} \left( \Ket{\psi_1}_{n,1} \right)$
\end{algorithmic}}}

\paragraph{}The next step of the algorithm leans on the following assumption: we can build a quantum gate, called $\boldsymbol{O}_{f}$, that makes use of a function capable of recognizing the element of the database we are searching for. This function $f$ can be defined as follows (note that the desired element is identified by the unknown index $j_0 \in \{ 0 , \ldots , 2^n-1 \}$):
\[
  f(j) =
  \begin{cases}
        1 & \text{if $j=j_0$} \\
        0 & \text{otherwise}
  \end{cases}
\] \\

The $\boldsymbol{O}_{f}$ quantum gate is then the oracle gate described in previous algorithms, which has the following effect on the basis states of a $n+1$-qubit system 
$$
\boldsymbol{O}_{f} : \ket{j}_n \otimes \ket{k} \longmapsto \ket{j}_n \otimes \ket{k \oplus f(j)}.
$$ 

We recall from Deutsch-Jozsa algorithm that the oracle gate has the following effect on the state $\ket{j}_n \otimes \ket{-}$:
\begin{equation*}
\begin{split}
\boldsymbol{O}_{f} \Big( \ket{j}_n \otimes \ket{-} \Big) & =  (-1)^{f(j)} \ket{j}_n \otimes \ket{-}
\end{split}
\end{equation*}

This means that, in our case, $\boldsymbol{O}_{f}$ inverts the sign of the amplitude corresponding to the basis state that codifies the searched element in the first $n$ qubits, while keeping intact the rest of the amplitudes. \\

In order to make the explanation and understanding of the algorithm easier and simpler, we introduce another $n$-qubit quantum state, 
$$
\ket{\rho}_n = \dfrac{1}{\sqrt{2^n -1}} \sum_{\substack{j=0 \\ j\neq j_0}}^{2^n-1} \ket{j}_n,
$$ 
whose relationship with the previously described state $\ket{\gamma}_n$ is $$\ket{\gamma}_n = \dfrac{\sqrt{2^n - 1}}{\sqrt{2^n}} \ket{\rho}_n + \dfrac{1}{\sqrt{2^n}} \ket{j_0}_n.$$ Please note that $\ket{\rho}$ depends on the value of $j_0$, but we shall write $\ket{\rho}$ instead of $\ket{\rho\,(j_0)}$ for the sake of simplicity. The introduction of the notation $\ket{\rho}_n$ helps in noticing the separation of the searched element $\ket{j_0}$ from the rest of the quantum basis states. \\

In fact, after applying $\boldsymbol{O}_{f}$ to our quantum system it will look like this:
\begin{eqnarray*}
\ket{\psi_2}_{n,1} & = & \boldsymbol{O}_{f} \Big( \ket{\gamma}_n \otimes  \ket{-} \Big) = \boldsymbol{O}_{f} \left( \left(\dfrac{\sqrt{2^n -1}}{\sqrt{2^n}} \ket{\rho}_n + \dfrac{1}{\sqrt{2^n}} \ket{j_0}_n \right) \otimes \ket{-} \right) \\
& = & \left( \dfrac{\sqrt{2^n -1}}{\sqrt{2^n}} \ket{\rho}_n - \dfrac{1}{\sqrt{2^n}} \ket{j_0}_n \right) \otimes \ket{-} = \left( \ket{\gamma}_n - \dfrac{2}{\sqrt{2^n}} \ket{j_0}_n \right) \otimes \ket{-}
\end{eqnarray*} \\

The motivation behind the last interpretation of $\ket{\psi_2}_{n,1}$ will become clear in the next step.

\paragraph{$\mathbb{STEP}$ 2.2}
$$$$\noindent\framebox{\parbox[b]{\linewidth}{\begin{algorithmic}
\State $\ket{\psi_3}_{n,1} \leftarrow (\boldsymbol{\varGamma}_n \otimes \boldsymbol{I}) \left(\ket{\psi_2}_{n,1}\right) $
\end{algorithmic}}} \\

For this step, we must construct a new quantum gate, denoted by $\boldsymbol{\varGamma}_n$, that will affect only the first $n$ qubits. The remaining qubit will remain intact (this is represented with the single-qubit identity gate $\boldsymbol{I}$). The definition of $\boldsymbol{\varGamma}_n$ is:

$$\boldsymbol{\varGamma}_n = 2\ket{\gamma}_n \bra{\gamma}_n - \boldsymbol{I}^{\otimes n}$$

Let us see what happens when we apply this new quantum gate to the first $n$ qubits of our quantum state $\ket{\psi_2}_{n,1}$ defined in the previous step.
\begin{eqnarray*}
\boldsymbol{\varGamma}_n \left( \ket{\psi_2}_{n} \right) & = & \Big( 2\ket{\gamma}_n \bra{\gamma}_n - \boldsymbol{I}^{\otimes n} \Big) \left( \ket{\gamma}_n - \dfrac{2}{\sqrt{2^n}} \ket{j_0}_n \right) \\
& = & 2 \ket{\gamma}_n \braket{\gamma | \gamma}_n - \dfrac{4}{\sqrt{2^n}} \ket{\gamma}_n \braket{\gamma | j_0}_n - \ket{\gamma}_n + \dfrac{2}{\sqrt{2^n}} \ket{j_0}_n \\
& = & 2 \ket{\gamma}_n - \dfrac{4}{2^n} \ket{\gamma}_n - \ket{\gamma}_n + \dfrac{2}{\sqrt{2^n}} \ket{j_0}_n \\
& = & \dfrac{2^{n-2} - 1}{2^{n-2}} \ket{\gamma}_n + \dfrac{2}{\sqrt{2^n}} \ket{j_0}_n 
\end{eqnarray*}

Mind that, from the properties of the inner product $\braket{\gamma | \gamma}_n = 1$ and $\braket{\gamma | i_0}_n = 1/\sqrt{2^n}$. So, all the basic states but $\ket{j_0}$ have decreased uniformly the probability associated to their amplitude, while $\braket{j_0}$ has actually increased it accordingly.\\

The $\boldsymbol{\varGamma}_n$ gate is also called Grover diffusion operator, and can also be seen as 
$$
\boldsymbol{\varGamma}_n = \boldsymbol{H}^{\otimes n} \Big( 2\ket{0}_n \bra{0}_n - \boldsymbol{I}^{\otimes n} \Big) \boldsymbol{H}^{\otimes n}, 
$$ 
which is a much more painless way of implementing it in practice. Steps 2.1 and 2.2 are usually treated as a single step, represented by the Grover gate 
$$
\boldsymbol{G}_n = (\boldsymbol{\varGamma}_n \otimes \boldsymbol{I}) (\boldsymbol{O}_f).
$$

\paragraph{$\mathbb{STEP}$ 3(ish)}

$$$$\noindent\framebox{\parbox[b]{\linewidth}{\begin{algorithmic}
\State $\omega \leftarrow$  measure the first register
\end{algorithmic}}} \\

Now, in order to achieve the objective of Grover's algorithm, one must apply the $\boldsymbol{G}_n$ gate (that is, perform $\mathbb{STEP}$ 2) repeatedly until the probability of obtaining the index $j_0$ is maximal (a general overview of the algorithm can be seen in Figure \ref{grover-circuit}). 

After the desired probability has been reached, we measure the first $n$-qubit register, and obtain the index $j_0$. The optimal number of times $\boldsymbol{G}_n$ is applied has order $\mathcal{O}(\sqrt{2^n})$, this will be proved later. But first, let us explain the behavior of the algorithm with an example.

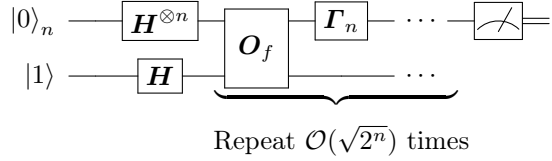
\begin{figure}
	\centering
\[
\Qcircuit @C=1em @R=.7em {
  \lstick{\ket{0}_n} & \qw & \gate{\boldsymbol{H}^{\otimes n}} & \multigate{1}{\boldsymbol{O}_f} & \gate{\boldsymbol{\varGamma}_n} & \qw & \cdots & & \meter & \cw \\
  \lstick{\ket{1}} & \qw & \gate{\boldsymbol{H}} & \ghost{\boldsymbol{O}_f} & \qw & \qw & \cdots & \\
  & & & & \dstick{\text{$\mathcal{O}(\sqrt{2^n})$ times}} \gategroup{1}{4}{2}{5}{.7em}{_\}} }
\]
	\caption{Circuit representation of Grover's Search Algorithm}
    \label{grover-circuit}
\end{figure}


\subsection*{An example with $n = 4$ qubits}

Let us illustrate how Grover's search works with this exemplifying case: suppose we have an unstructured database with its elements listed as $0,1, \ldots, 15$ indexed via 4 qubits, and that we want to find a certain item that is indexed with the number $7$ (note that we do not know this yet, but it can be supposed in order to explain the algorithm straightforwardly). 

As previously explained, we need a quantum computer with $n+1$ qubits, where $n$ is the number of bits needed for codifying the indices of the members of the database (in this case, $n = 4$). In order to correctly setup our quantum system, the first 4 qubits must be in the state $\ket{0}$ and the remaining one must be in the state $\ket{1}$. Thus, our quantum system begins as follows: 
$$
\ket{\psi_0}_{4,1} = \ket{0}_4 \otimes \ket{1}
$$ 

The first transformation we apply to the quantum system is the Hadamard gate $\boldsymbol{H}$, which converts it to the following state: 
$$
\ket{\psi_1}_{4,1} = (\boldsymbol{H}^{\otimes 4} \ket{0}_4) \otimes (\boldsymbol{H} \ket{1}) = \left( \dfrac{1}{4} \sum_{j=0}^{15} \ket{j}_4 \right) \otimes \ket{-}
$$

We remind the reader that from now on we will make use of the states $\ket{\gamma}_4$ and $\ket{\rho}_4$ in the interest of simplifying the writing of the whole process:
$$
\ket{\gamma}_4 = \dfrac{1}{4} \sum_{j=0}^{15} \ket{j}_4, \quad\quad 
\ket{\rho}_4 = \dfrac{1}{\sqrt{15}} \sum_{\substack{j=0 \\ j\neq 7}}^{15} \ket{j}_4, \quad\quad
\ket{\gamma}_4 = \dfrac{\sqrt{15}}{4} \ket{\rho}_4 + \dfrac{1}{4} \ket{7}_4
$$

The introduction of these states allows us to write $\ket{\psi_1}_{4,1}$ as:
$$
\ket{\psi_1}_{4,1} = \left( \dfrac{\sqrt{15}}{4} \ket{\rho}_4 + \dfrac{1}{4} \ket{7}_4 \right) \otimes \ket{-}
$$

On $\mathbb{STEP}$ 2.1 we make use of the transformation $\boldsymbol{O}_{f}$, which applies to all possible states inside the first 4 qubits an oracle $f$ that identifies 7 as the correct index for the element we are looking for.
$$
\ket{\psi_2}_{4,1} = \boldsymbol{O}_{f} \left[ \left( \dfrac{\sqrt{15}}{4} \ket{\rho}_4 + \dfrac{1}{4} \ket{7}_4 \right) \otimes \ket{-} \right] = \left( \dfrac{\sqrt{15}}{4} \ket{\rho}_4 - \dfrac{1}{4} \ket{7}_4 \right) \otimes \ket{-} = \left(  \ket{\gamma}_4 - \dfrac{1}{2} \ket{7}_4 \right) \otimes \ket{-}
$$

On $\mathbb{STEP}$ 2.2, we apply the quantum gate $\boldsymbol{\varGamma}_4$ to the first 4 qubits of the computer. We use $\ket{\rho}_4$ to make more clear which part of the state has the index 7 in it and which one does not have it.
\begin{eqnarray*}
\boldsymbol{\varGamma}_4 \left( \ket{\gamma}_4 - \dfrac{1}{2} \ket{7}_4 \right) & = & \Big( 2 \ket{\gamma}_4 \bra{\gamma}_4 - \boldsymbol{I} \Big) \left(\ket{\gamma}_4 - \dfrac{1}{2} \ket{7}_4 \right) = \dfrac{3}{4} \ket{\gamma}_4 + \dfrac{1}{2} \ket{7}_4 = \dfrac{3 \sqrt{15}}{16} \ket{\rho}_4 + \dfrac{11}{16} \ket{7}_4 \\
\ket{\psi_3}_{4,1} & = & \left( \dfrac{3 \sqrt{15}}{16} \ket{\rho}_4 + \dfrac{11}{16} \ket{7}_4 \right) \otimes \ket{-}
\end{eqnarray*}

Thus, we have completed the first iteration of $\boldsymbol{G}_4$, and if we are to measure our quantum state now, we have a probability 
$$
p = \Big(\dfrac{11}{16} \Big)^2 = \dfrac{121}{256} \approx 0.4726
$$ 
of obtaining the index 7, and a probability 
$$
\bar{p} = \Big(\dfrac{3 \sqrt{15}}{16} \Big)^2 = \dfrac{135}{256} \approx 0.5274
$$ 
of obtaining any other index. As can be seen, the probability of finding the searched element is greater than of finding any other element, but is still not big enough, which tells us that at least another round of the algorithm is needed. \\

We repeat again the application of $\boldsymbol{G}_4$. First, we apply $\boldsymbol{O}_f$ and end up with the following quantum state, defined again as a combination of $\ket{\gamma}_4$ and $\ket{7}_4$.
$$
\ket{\psi_4}_{4,1} = \boldsymbol{O}_{f,4} (\ket{\psi_3}_{4,1})  = \left( \dfrac{3\sqrt{15}}{16} \ket{\rho}_4 - \dfrac{11}{16} \ket{7}_4 \right) \otimes \ket{-} = \left( \dfrac{3}{4} \ket{\gamma}_4 - \dfrac{14}{16} \ket{7}_4 \right) \otimes \ket{-}
$$

And then we apply $\boldsymbol{\varGamma}_4$:
$$
\ket{\psi_5}_{4} = \Big( 2 \ket{\gamma}_4 \bra{\gamma}_4 - \boldsymbol{I} \Big) \left( \dfrac{3}{4} \ket{\gamma}_4 - \dfrac{14}{16} \ket{7}_4 \right) = \dfrac{5}{16} \ket{\gamma}_4 + \dfrac{7}{8} \ket{7}_4 = \dfrac{5 \sqrt{15}}{64} \ket{\rho}_4 + \dfrac{61}{64} \ket{7}_4
$$

Thus, after two iterations of $\boldsymbol{G}_4$ we end up with a state whose probability of returning $7$ is
$$
p = \Big( \dfrac{61}{64} \Big)^2 = \dfrac{3721}{4096} \approx 0.9084,
$$ 
which gives us a fairly enough chance of completing the execution of Grover's search algorithm successfully. However, how can we be sure that we have obtained the maximum probability of retrieving the desired element from the database if we cannot measure the amplitudes inside our quantum register? Even more, whence comes the required order of $\mathcal{O}(\sqrt{2^n})$ needed in the number of iterations of $\boldsymbol{G}_4$? \\

Both questions are pivotal in the correct performance of the algorithm. Let us show their significance with a simple example: what happens if we continue applying the $\boldsymbol{G}_n$ gate to our quantum system? \\

If we apply $\boldsymbol{G}_4$ once more, we will have the following two states, the first after $\boldsymbol{O}_f$ and the second after $\boldsymbol{\varGamma}_4$.
$$
\ket{\psi_6} = \dfrac{5}{16} \ket{\gamma}_4 - \dfrac{33}{32} \ket{7}_4, \quad\quad
\ket{\psi_7} = - \dfrac{13 \sqrt{15}}{256} \ket{\rho}_4 - \dfrac{251}{256} \ket{7}_4
$$

As can be seen, the probability of obtaining index $7$ now is 
$$
p = \Big( \dfrac{251}{256} \Big)^2 \approx 0.9613.
$$ 

Yet, if we are still not satisfied with a $96 \%$ chance of success, we can run over $\boldsymbol{G}_4$ once again, and obtain
$$
\ket{\psi_8} = - \dfrac{13}{64} \ket{\gamma}_4 - \dfrac{238}{256} \ket{7}_4,\quad\quad 
\ket{\psi_9} = - \dfrac{342 \sqrt{15}}{2048} \ket{\rho}_4 + \dfrac{1562}{2048} \ket{7}_4,
$$ 
which gives us a probability $$p = \Big( \dfrac{1562}{2048} \Big)^2 \approx 0.5817$$ of obtaining $\ket{7}_4$. Yes, it seems that to unabatedly perform numberless iterations of $\boldsymbol{G}_n$ does not guarantee a continuous increment in the probability of success\footnote{{\em ``There is thy gold, worse poison to men's souls.''} (Romeo and Juliet, Act 5, Scene 1)}. Even more, it can waste all previously done work. We shall see next why this has happened.

\subsection*{Proof of correctness}

In this section we sketch a proof of the correctness of Grover's algorithm. The main ideas behind this proof are taken from \cite{boyer1996tight}, and will come in handy in the next subsection, where different and more general versions of the database search algorithm are discussed. 

\begin{thm}
Grover's search algorithm for a unique solution needs $m \sim \mathcal{O} \left( 2^{n/2} \right)$ iterations of $\boldsymbol{G}_n$ for maximizing the probability of obtaining the desired element with unknown index $j_0$.
\end{thm}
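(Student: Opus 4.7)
The plan is to restrict all computations to the two-dimensional real subspace $V := \mathrm{span}_{\mathbb{R}}\{\ket{\rho}_n,\ket{j_0}_n\}$, which is invariant under every round of $\boldsymbol{G} = (\boldsymbol{\varGamma}_n\otimes\boldsymbol{I})\circ \boldsymbol{O}_f$, and to exhibit $\boldsymbol{G}$ as a genuine rotation on $V$. The auxiliary qubit sits in $\ket{-}$ throughout (both $\boldsymbol{O}_f$ and $\boldsymbol{\varGamma}_n\otimes\boldsymbol{I}$ leave that factor untouched up to the overall $(-1)^{f(j)}$ kickback already absorbed in the action on the first register), so it drops out of the analysis.

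First I would introduce the angle $\theta\in(0,\pi)$ defined by $\sin(\theta/2) = 1/\sqrt{2^n}$, so that the initial superposition $\ket{\gamma}_n = \sin(\theta/2)\ket{j_0}_n + \cos(\theta/2)\ket{\rho}_n$ sits at angle $\theta/2$ above the $\ket{\rho}_n$ axis inside $V$. Then I would prove by induction on $m$ that after $m$ applications of $\boldsymbol{G}$ the state of the first register has the form $\cos\!\bigl((2m+1)\tfrac{\theta}{2}\bigr)\ket{\rho}_n + \sin\!\bigl((2m+1)\tfrac{\theta}{2}\bigr)\ket{j_0}_n$, the base case $m=0$ being exactly the decomposition of $\ket{\gamma}_n$ above.

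The inductive step is the main technical obstacle and I would handle it as two clean reflections. First, $\boldsymbol{O}_f$ restricted to $V$ flips the sign of the $\ket{j_0}_n$-coordinate while preserving the $\ket{\rho}_n$-coordinate, i.e. it is the reflection of $V$ across the $\ket{\rho}_n$ axis. Second, the Grover diffusion operator $\boldsymbol{\varGamma}_n = 2\ket{\gamma}_n\bra{\gamma}_n - \boldsymbol{I}^{\otimes n}$ is manifestly the reflection of the full space (and hence of $V$) across the $\ket{\gamma}_n$ axis. Since the composition of two planar reflections across axes separated by angle $\theta/2$ is a rotation by $\theta$ in the direction away from the first axis, each application of $\boldsymbol{G}$ increments the angular coordinate by $\theta$, which is exactly the inductive claim. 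Tracking the signs through the oracle and diffusion steps is the delicate part of this calculation.

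With the rotation picture in hand, the success probability after $m$ rounds is $p_m = \sin^2\!\bigl((2m+1)\theta/2\bigr)$, which is maximised when $(2m+1)\theta/2$ lies closest to $\pi/2$, i.e. for the integer $m^\ast$ nearest $\frac{\pi}{2\theta}-\frac{1}{2}$. Using the elementary bound $\theta/2 \geq \sin(\theta/2) = 1/\sqrt{2^n}$, one obtains $m^\ast \leq \frac{\pi}{4}\sqrt{2^n}$, giving $m^\ast = \mathcal{O}(\sqrt{2^n})$ and therefore a fortiori $m^\ast \sim \mathcal{O}(2^n)$ as asserted. The same formula also predicts the non-monotone behaviour observed in the $n=4$ example of the previous subsection, since advancing past $m^\ast$ merely continues the rotation and begins to decrease $p_m$.
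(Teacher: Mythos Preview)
Your argument is correct and complete. The approach is the standard geometric presentation of Grover's iteration and differs in style from the paper's proof, though not in substance.

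The paper works in the same two-dimensional invariant subspace but proceeds algebraically: it parametrises an arbitrary state there as $\ket{\psi(\alpha,\beta)}_n = \alpha\ket{j_0}_n + \beta\sum_{j\neq j_0}\ket{j}_n$, applies $\boldsymbol{O}_f$ and then $\boldsymbol{\varGamma}_n$ explicitly to derive the linear recursion
\[
\alpha_{k+1} = \tfrac{2^{n-1}-1}{2^{n-1}}\alpha_k + \tfrac{2^n-1}{2^{n-1}}\beta_k,\qquad
\beta_{k+1} = -\tfrac{1}{2^{n-1}}\alpha_k + \tfrac{2^{n-1}-1}{2^{n-1}}\beta_k,
\]
and then verifies by induction that $\alpha_j=\sin((2j-1)\theta)$, $\beta_j=\tfrac{1}{\sqrt{2^n-1}}\cos((2j-1)\theta)$ with $\sin^2\theta = 1/2^n$ (your $\theta/2$ is the paper's $\theta$). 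From there the optimisation is identical to yours. Your route replaces the explicit recursion and its inductive verification by the single observation that a reflection across $\ket{\rho}_n$ followed by a reflection across $\ket{\gamma}_n$ is a rotation by twice the angle between those axes. This is cleaner and explains at once why the closed form must be trigonometric, whereas the paper's version has the advantage of making every coefficient explicit and of generalising mechanically to the multiple-solution case treated immediately afterwards. Either way the conclusion $m^\ast\approx(\pi/4)\sqrt{2^n}$ follows, and your remark that this is \emph{a fortiori} $\mathcal{O}(2^n)$ correctly handles the typo in the theorem statement.
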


\begin{proof}

First, we are interested in redefining all the possible states that occur during the execution of the algorithm as a function of the different amplitudes involved. As was shown earlier, the only amplitude that will differ from the rest after every Grover's iteration is the one associated with the basis state that identifies the searched element. Thus, it suffices to define the generic state in the following way: 
$$
\ket{\psi (\alpha,\beta)}_n = \alpha \ket{j_0}_n + \beta \sum_{\substack{j=0 \\ j\neq j_0}}^{2^n-1} \ket{j}_n,
$$ 
with $\alpha$ and $\beta$ real numbers constrained to $\alpha^2 + (2^n - 1) \beta^2 = 1$.

Note that we are only having into account the first register, the one with $n$ qubits. The remaining qubit will behave as explained before, starting at $\ket{1}$ and remaining as $\ket{-}$ throughout the rest of the algorithm, but for the sake of simplicity will be omitted during this proof. \\

Let us see what happens when we apply all quantum gates that make up the Grover transform $\boldsymbol{G}_n$ to a generic state $\ket{\psi(\alpha,\beta)}_n$. First, we perform $\boldsymbol{O}_f$, which recognizes the searched element and flips its amplitude. After $\boldsymbol{O}_f$ we obtain
$$
\boldsymbol{O}_f \ket{\psi(\alpha,\beta)}_n = - \alpha \ket{j_0}_n + \beta \sum_{\substack{j=0 \\ j\neq j_0}}^{2^n-1} \ket{j}_n
$$

In order to apply the $\boldsymbol{\varGamma}_n$ quantum gate, it is interesting to previously see our current state as a function of $\ket{\gamma}_n$, as was done before:
$$
\boldsymbol{O}_f \ket{\psi(\alpha,\beta)}_n = - (\alpha + \beta) \ket{j_0}_n + \sqrt{2^n} \beta \ket{\gamma}_n.
$$ 

We are finally ready to apply $\boldsymbol{\varGamma}_n$:
\begin{eqnarray*}
\boldsymbol{G}_n \ket{\psi(\alpha,\beta)}_n  & = & \boldsymbol{\varGamma}_n \boldsymbol{O}_f \ket{\psi(\alpha,\beta)}_n \\ 
& = & \boldsymbol{\varGamma}_n \Big( \ket{\psi'(\alpha,\beta)}_n \Big) = \boldsymbol{\varGamma}_n \Big(-(\alpha + \beta) \ket{j_0}_n + \sqrt{2^n} \beta \ket{\gamma}_n \Big) \\
& = & \Big( 2 \ket{\gamma}_n \bra{\gamma}_n - \boldsymbol{I}^{\otimes n} \Big) \Big(-(\alpha + \beta) \ket{j_0}_n + \sqrt{2^n} \beta \ket{\gamma}_n \Big) \\
& = &( \alpha + \beta ) \ket{j_0}_n + \Big( 2 \sqrt{2^n} \beta - \dfrac{2}{\sqrt{2^n}} (\alpha + \beta) - \sqrt{2^n} \beta \Big)  \ket{\gamma}_n\\
& = & \Big( \dfrac{2^{n-1} - 1}{2^{n-1}} \alpha  + \dfrac{2^n - 1}{2^{n-1}} \beta \Big) \ket{j_0}_n + \Big( - \dfrac{1}{2^{n-1}} \alpha + \dfrac{2^{n-1} - 1}{2^{n-1}} \beta  \Big) \sum_{\substack{j=0 \\ j\neq j_0}}^{2^n-1} \ket{j}_n.
\end{eqnarray*}

Now that we know the effect $\boldsymbol{G}_n$ makes to a generic state, we are in a position to predict in which iteration of the algorithm we have more possibilities of obtaining the desired element $j_0$. If we define 
$$ 
\ket{\psi_{k+1}(\alpha_{k+1}, \beta_{k+1})}_n = \boldsymbol{G}_n (\ket{\psi_k(\alpha_k,\beta_k)}_n),
$$ 
where $\alpha_1 = \beta_1 = 1/\sqrt{2^n}$, and 
$$
\alpha_{k+1} = \dfrac{2^{n-1} - 1}{2^{n-1}} \alpha_k + \dfrac{2^n - 1}{2^{n-1}} \beta_k, \quad \quad
\beta_{k+1} = - \dfrac{1}{2^{n-1}} \alpha_k + \dfrac{2^{n-1} - 1}{2^{n-1}} \beta_k
$$
for $j \geq 1$, then we can try to find a more tractable closed-form formula for the amplitude of $i_0$. Note that, for $j = 0$, it is not possible to define $\ket{\psi_0}$ as a function of $\alpha_0$ and $\beta_0$, thus only the cases where $j \geq 1$ will be defined as such. \\

If we designate $\theta \in [0,2\pi)$ such that $\sin^2 \theta = 1/2^n$, we can easily prove inductively that  
$$
\alpha_j = \sin \Big( (2j-1)\theta \Big), \qquad \beta_j = \dfrac{1}{\sqrt{2^n-1}} \cos \Big( (2j-1)\theta \Big).
$$

Let us suppose now that, for an unknown step $j = m + 1$ (note that $m$ is equivalent to the number of times we have applied $\boldsymbol{G}_n$), we want to assure that $\alpha_m = 1$. This occurs when $(2m+1)\theta = \pi/2$, and expressly, when $m = (\pi - 2 \theta)/(4 \theta)$.

Obviously, we can not perform a non-integer number of iterations of $\boldsymbol{G}_n$. If we take $m = \left\lfloor \pi/(4 \theta) \right\rfloor$, we can conclude that the number of iterations of $\boldsymbol{G}_n$ needed for achieving the maximum probability of success is close $(\pi/4) \sqrt{2^n}$ (note that $\theta \approx \sin \theta = 1/\sqrt{2^n}$ when $2^n$ is large enough). 

Thus, we can conclude that the number $m$ of iterations of $\boldsymbol{G}_n$ has order $m \sim \mathcal{O} \big( \sqrt{2^n} \big)$. 
\end{proof}

\subsection*{Multiple solutions}

The main limitation of Grover's search algorithm deals with the number of solutions: it assumes that there is one and only one element in the database that matches our search. There are many cases in which Grover's search may prove useful, but in which the number of solutions is unknown, or maybe we do not even know if there is indeed a solution. In this subsection we proceed to explain an alternate version of Grover's algorithm that originally appeared in \cite{boyer1996tight} which takes care of this drawback. \\

Let us suppose that we have an unstructured database, indexed by $0,1,\ldots,2^n-1$. We are interested in finding an element inside the database that fulfills a certain property, but we do not know if such an element exists, or if there is more than one. We name $A \subseteq \{0, 1, \ldots , 2^n - 1\}$ the set of possible solutions, with $\#A = t \in \{0, 1, \ldots , 2^n - 1\}$. At first sight, one can only hope to just obtain the same performance results as in the original Grover's algorithm just by applying $\boldsymbol{G}_n$ the same number of steps. However, it is easy to find a counterexample showing that the probability of success after $(\pi/4)\sqrt{2^n}$ iterations changes dramatically when $t \neq 1$. \\

In order to show how this variation of the algorithm works, we define $B = \{0,\ldots,2^n-1\} \setminus A$, with $\# B = 2^n - t$. Following a similar approach, we can assume that every quantum state of our system after first applying globally the Hadamard transform can be expressed as
$$
\ket{\psi(\alpha,\beta)}_n = \alpha \sum_{i \in A} \ket{i}_n + \beta \sum_{i \in B} \ket{i}_n, \qquad \text{ with } \alpha, \beta \in [0,1],
$$ 
where $t \alpha^2 + (2^n - t) \beta^2 = 1$. 

Note that we only take into account the first $n$ qubits, as previously. We would like to clarify that the algorithm is essentially the same, so there is no need for explaining it again. The only substantial change is the expected number of iterations of $\boldsymbol{G}_n$ needed for maximizing the probability of success. Thus, if we apply $\boldsymbol{O}_n$ to a certain state, we end up with the following configuration:
$$
\boldsymbol{O}_n \ket{\psi(\alpha,\beta)}_n = - \alpha \sum_{i \in A} \ket{i}_n + \beta \sum_{i \in B} \ket{i}_n
$$

Just like before, after a complete iteration of $\boldsymbol{G}_n$ the quantum system is in the state:
$$
\boldsymbol{\varGamma}_n \boldsymbol{O}_n \Big( \ket{\psi(\alpha,\beta)}_n \Big) = \Big( \dfrac{2^{n-1} - t}{2^{n-1}} \alpha  + \dfrac{2^n - t}{2^{n-1}} \beta \Big) \sum_{i \in A} \ket{i}_n + \Big( - \dfrac{t}{2^{n-1}} \alpha_j + \dfrac{2^{n-1} - t}{2^{n-1}} \beta_j  \Big) \sum_{i \in B} \ket{i}_n
$$

As done previously, if we define 
$$
\boldsymbol{G}_n (\ket{\psi_j(\alpha_j,\beta_j)}_n) = \ket{\psi_{j+1}(\alpha_{j+1}, \beta_{j+1})}_n,
$$ 
we can induce a recursive formula for the general state of the system, where $\alpha_1 = \beta_1 = 1/ \sqrt{2^n}$ and 
$$
\alpha_{j+1} = \dfrac{2^{n-1} - t}{2^{n-1}} \alpha_j + \dfrac{2^n - t}{2^{n-1}} \beta_j, \quad\quad 
\beta_{j+1} = - \dfrac{t}{2^{n-1}} \alpha_j + \dfrac{2^{n-1} - t}{2^{n-1}} \beta_j
 $$

If we define $\theta \in [0,2\pi)$ such that $\sin^2 \theta = t/2^n$,then we can arrive at the following closed-up formula, just by using induction and some trigonometric identities:
$$
\alpha_j = \dfrac{1}{\sqrt{t}} \sin ((2j-1)\theta), \quad\quad 
\beta_j = \dfrac{1}{\sqrt{2^n-t}} \cos ((2j-1)\theta).
$$

\begin{thm}
Let $t$ and $\theta$ be as previously defined. Let $m$ be an arbitrary positive integer and $j$ be an integer chosen at random from $\{1,\ldots,m\}$. Then, if we apply $(j-1)$ times the gate $\boldsymbol{G}_n$ to the state $\ket{\gamma}_n$, followed by a measurement of the state, the average probability of obtaining one of the $t$ solutions in $A$ is 
$$
P_m = \dfrac{1}{2} - \dfrac{\sin (4m\theta)}{4m \sin (2\theta)}.
$$
\end{thm}

\begin{proof}
After $j-1$ iterations of $\boldsymbol{G}_n$, the probability of obtaining one of the possible solutions is $$
t \alpha_j^2 = \sin^2 ((2j-1) \theta).
$$

If $1 \leq j \leq m$ is chosen randomly, then the average probability of success is given by
$$
P_m = \sum_{j = 1}^{m} \dfrac{1}{m} \sin^2 ((2j - 1)\theta) = \dfrac{1}{2m} \sum_{j=1}^{m} \Big( 1 - \cos ((2j - 1) 2\theta ) \Big) = \dfrac{1}{2} - \dfrac{\sin (4m\theta)}{4m \sin (2\theta)}.
 $$

In the last step, we have used the following trigonometric identity:
$$
\sum_{j = 1}^{m} \cos \Big( (2j - 1) \phi \Big) = \dfrac{\sin (2m \phi)}{2 \sin \phi}, \quad \forall \phi \in [0,2 \pi).
$$
\end{proof}

\begin{cor}
    Under the previous assumptions, if $m \geq 1/\sin2\theta$, then $P_m \geq 1/4$.
\end{cor}

Thus, Grover's search algorithm for multiple solutions follows the next scheme:
$$$$\noindent\framebox{
	\parbox[b]{\linewidth}{
		\begin{algorithmic}
			\State $m \gets 1$
			\State $\lambda \gets \lambda \in (1,4/3)$
			\State $j \gets \text{random}\{1, \ldots, m \}$
			\State $\ket{\psi} \gets \boldsymbol{G}_n^{j-1} \ket{\gamma}_n \quad$ \#Apply the $\boldsymbol{G}_n$ quantum gate $j-1$ times to $\ket{\gamma}_n$
            \State $i \gets$ Value of the first register
			\If {$i \in A$}
		          \State return $i$
			\Else
                \State $m \gets \min \big(\lambda m, \sqrt{2^n} \big)$
				\State go to line 3
		      \EndIf
		\end{algorithmic}
	}
} \\

Let us explain a little the inner workings of this variation. First, we set $m$ as an upper bound for the number of $\boldsymbol{G}_n$ performances. Then, we perform $j-1<m$ such operations. If we do not obtain the desired result, we try again with a slightly bigger $m$.

The average number of iterations of Grover's algorithm is given via the following result:

\begin{thm}
Let $t = \# A \leq 2^n \cdot 3/4$, the expected number of iterations (that is, of computations involving $\boldsymbol{G}_n^{j-1}$) for finding a solution with the previous algorithm has order $\mathcal{O} (\sqrt{2^n/t})$.
\end{thm}

\begin{proof} \cite{boyer1996tight} Let 
$$
m_0 = \dfrac{1}{\sin (2 \theta)} = \dfrac{1}{2\sin (\theta) \cos (\theta)} = \dfrac{2^{n-1}}{\sqrt{(2^n - t)t}} < \sqrt{\dfrac{2^n}{t}}, \qquad \text{as } t\leq 2^n \cdot \frac{3}{4}.
$$ 

We call $m_0$ the critical stage or, more generally, the first $m$ such that $m \geq m_0$. According to our corollary, from that moment on we have a success probability greater than $1/4$, although the algorithm might have stopped earlier, if we are lucky.

The expected total number of iterations that we need to reach the critical stage is at most 
$$
\dfrac{1}{2} \sum_{s=1}^{\Ceil{\log_{\lambda} m_0}} \lambda^{s-1} < \dfrac{1}{2} \dfrac{\lambda}{\lambda-1} m_0 = 3 m_0
$$ 
and the expected number of Grover iterations needed to succeed once the critical point has been reached is bounded by
$$
\dfrac{1}{2} \sum_{u=0}^{\infty} \left( \dfrac{3}{4} \right)^u \lambda^{u + \Ceil{\log_{\lambda} m_0}} < \frac{2\lambda}{4 - 3\lambda} m_0 = 4 m_0.
$$ 

Thus, the expected number of Grover iterations is upper bounded by $8m_0 \sim \mathcal{O} (\sqrt{2^n/t})$.
\end{proof}

Grover's algorithm, which is also known as quantum amplitude amplification, is specially useful in problems where the best known classical solution is to iterate over all possible candidates in the worst case. Although the acceleration is not exponential, but quadratic, we do know that it is a strict improvement with respect to the classical approach, provided that we are able to identify a solution in polynomial time. The superiority of Shor's algorithm, on the other hand, relies on the unproven conjecture that factoring is not in P. An adiabatic version of Grover's algorithm can be found in \cite{farhi2000quantum} and \cite{roland2002quantum}.

\newpage

\section{Quantum Counting}\label{sec:counting}

If we want to apply Grover's algorithm without knowing beforehand the number of solutions (or, at the very least, a good estimation) it is clear we might be in trouble as the probability of succeeding can be dramatically altered (for the worst) if we perform too many Grover gates.

A product of the work of Gilles Brassard, Peter Høyer and Alain Tapp, the quantum counting algorithm is a variation of the Grover's search in which we use the quantum Fourier transform for counting the solutions to the database search problem, in case that this number is unknown to us. It was first sketched in \cite{boyer1996tight} and thoroughly described in \cite{brassard1998quantum}. Additionally, some details of the proof of its correctness that we present here are taken from \cite{diao2012quantum}. \\

The counting problem can be seen as the following: let us suppose that we have an unstructured database, whose elements are indexed by $\{0,1,\ldots,2^n-1\}$, and that we want to know how many elements fulfill a certain property. Note that we are interested in the number of solutions, not in returning any of them. If $A \subseteq \{0,\ldots,2^n-1\}$ is the set of indices that fulfill our query, the quantum counting problem can be seen as the problem of calculating $t = \#A$. We shall also define $B = \{0,\ldots,2^n-1\} \setminus A$ as the set of indices that do not fulfill the property. Therefore, $\#B = 2^n -t$. We will assume, quite reasonably, $t \leq 2^{n-1}$.

\paragraph{$\mathbb{SETUP}$}
$$$$\noindent\framebox{\parbox[b]{\linewidth}{\begin{algorithmic}
\State $\ket{\psi_0}_{p,n} \leftarrow \ket{0}_p \otimes \ket{0}_n$
\end{algorithmic}}} \\

The quantum counting algorithm starts with $p+n$ qubits initialized at $0$, where the value $p$ will be dealt with later.

\paragraph{$\mathbb{STEP}$ 1}
$$$$\noindent\framebox{\parbox[b]{\linewidth}{\begin{algorithmic}
\State $\ket{\psi_1}_{p,n} \leftarrow  \boldsymbol{H}^{\otimes p+n} \left( \ket{\psi_0}_{p,n} \right)$
\end{algorithmic}}} \\

As usual, the first step of the algorithm involves the Hadamard gate, which gives us the state
$$
\ket{\psi_1}_{p,n} = \boldsymbol{H}^{\otimes p+n} \left( \ket{\psi_0}_{p,n} \right) = \ket{\gamma}_p \otimes \ket{\gamma}_n,
$$
where we recall that 
$$
\ket{\gamma}_n =  \dfrac{1}{\sqrt{2^n}} \sum_{k=0}^{2^n-1} \ket{k}_n.
$$

If we use the notations 
\begin{eqnarray*}
\ket{a}_n = \dfrac{1}{\sqrt{t}} \sum_{k \in A} \ket{k}_n, & &
\ket{b}_n = \dfrac{1}{\sqrt{2^n-t}} \sum_{k \in B} \ket{k}_n, \\
\ket{\mu^{+}}_n = \dfrac{1}{\sqrt{2}} \left( \ket{b}_n - i \ket{a}_n \right), & & 
\ket{\mu^{-}}_n = \dfrac{1}{\sqrt{2}} \left( \ket{b}_n + i \ket{a}_n \right)
\end{eqnarray*}
and define $\omega \in [0,1]$ such that $\sin^2 (\pi \omega) = t / 2^n$, we can express $\ket{\gamma}_n$ as:
\begin{eqnarray*}
\ket{\gamma}_{n} & = & \dfrac{1}{\sqrt{2^n}} \sum_{k=0}^{2^n-1} \ket{k}_n = \dfrac{1}{\sqrt{2^n}} \left( \sum_{k \in A} \ket{k}_n + \sum_{k \in B} \ket{k}_n \right) \\
& = & \sin (\pi \omega) \left( \dfrac{1}{\sqrt{t}} \sum_{k \in A} \ket{k}_n \right) + \cos (\pi \omega) \left( \dfrac{1}{\sqrt{2^n-t}} \sum_{k \in B} \ket{k}_n \right) = \sin (\pi \omega) \ket{a}_n + \cos (\pi \omega) \ket{b}_n \\
& = & \dfrac{i \sin (\pi \omega)}{\sqrt{2}} \Big( \ket{\mu^+}_n - \ket{\mu^-}_n \Big) + \dfrac{\cos (\pi \omega)}{\sqrt{2}} \Big( \ket{\mu^+}_n + \ket{\mu^-}_n \Big) = \dfrac{e^{\pi i \cdot\omega}}{\sqrt{2}} \ket{\mu^+}_n + \dfrac{e^{-\pi i \cdot\omega}}{\sqrt{2}} \ket{\mu^-}_n
\end{eqnarray*}

From now on, $\omega$ will be our target instead of $t$, as $t$ and $\omega$ define each other. Note that $t \leq 2^{n-1}$ translates into $\omega \leq 1/2$.

\paragraph{$\mathbb{STEP}$ 2}
$$$$\noindent\framebox{\parbox[b]{\linewidth}{\begin{algorithmic}
\State $\ket{\psi_2}_{p,n} \leftarrow \boldsymbol{C}_{p,n} (\ket{\psi_1}_{p,n})$
\end{algorithmic}}} \\

For the next step, we need a new unitary transformation called the counting gate, which is represented by $\boldsymbol{C}_{p,n}$. This quantum gate has the following effect on a quantum state of the form $\ket{m}_p \otimes \ket{\psi}_n $, where $\ket{m}_p$ is a basis state on $p$ qubits, $\ket{\psi}_n$ is any quantum state on $n$ qubits and $\boldsymbol{G}_n$ is the Grover gate described in the previous algorithm:
$$
\boldsymbol{C}_{p,n} : \ket{m}_p \otimes \ket{\psi}_n  \longmapsto \ket{m}_p \otimes (\boldsymbol{G}_n)^m \ket{\psi}_n
$$

Let us see what happens when we apply this new quantum gate to our previous quantum system.
\begin{equation*}
\begin{split}
\ket{\psi_2}_{p,n} & = \boldsymbol{C}_{p,n} \left( \ket{\gamma}_p \otimes \ket{\gamma}_n \right) = \dfrac{1}{\sqrt{2^p}} \sum_{j=0}^{2^p-1} \left[ \ket{j}_p \otimes \boldsymbol{G}_{n}^{j} \left( \ket{\gamma}_n \right) \right]  \\
 & = \dfrac{1}{\sqrt{2^p}} \sum_{j=0}^{2^p-1} \left[ \ket{j}_p \otimes \boldsymbol{G}_{n}^{j} \left( \dfrac{e^{\pi i \cdot\omega}}{\sqrt{2}} \ket{\mu^+}_n + \dfrac{e^{-\pi i \cdot\omega}}{\sqrt{2}} \ket{\mu^-}_n \right)  \right] \\
 & = \dfrac{1}{\sqrt{2^{p+1}}} \sum_{j=0}^{2^p-1} \bigg[ \ket{j}_p \otimes \bigg( e^{\pi i \cdot\omega} \boldsymbol{G}_{n}^{j} \left( \ket{\mu^+}_n \right) + e^{-\pi i \cdot\omega} \boldsymbol{G}_{n}^{j} \left( \ket{\mu^-}_n \right) \bigg)  \bigg] \\
 & = \dfrac{1}{\sqrt{2^{p+1}}} \sum_{j=0}^{2^p-1} \bigg[ \ket{j}_p \otimes \bigg( e^{\pi i \cdot\omega(2j+1)} \ket{\mu^+}_n  + e^{-\pi i \cdot\omega(2j+1)} \ket{\mu^-}_n \bigg) \bigg] \\
 & =  \dfrac{e^{\pi i \cdot\omega}}{\sqrt{2^{p+1}}} \sum_{j=0}^{2^p-1} e^{2\pi i \cdot \omega j} \ket{j}_p \otimes \ket{\mu^+}_n + \dfrac{e^{-\pi i \cdot\omega}}{\sqrt{2^{p+1}}} \sum_{j=0}^{2^p-1} e^{2\pi i \cdot (1- \omega) j} \ket{j}_p \otimes \ket{\mu^-}_n
\end{split}
\end{equation*} \\

In order to understand what happened in the last identities of the equation, we should see the effect that $\boldsymbol{G}_{n}^{j}$ has on the states $\ket{\mu^+}_n$ and $\ket{\mu^-}_n$. First:

\begin{eqnarray*}
\boldsymbol{G}_{n} \left( \ket{a}_n \right) & = & \Big( 2 \ket{\gamma}_n \bra{\gamma}_n - \boldsymbol{I} \Big) \big( - \ket{a}_n \big) = - 2 \ket{\gamma}_n \braket{\gamma | a}_n + \ket{a}_n = - 2 \sin (\pi\omega) \ket{\gamma}_n + \ket{a}_n \\
& = & -2 \sin (\pi\omega) \big(\cos(\pi\omega)\ket{b}_n + \sin(\pi\omega) \ket{a}_n\big) + \ket{a}_n \\ 
& = & - 2 \sin (\pi\omega) \cos (\pi\omega) \ket{b}_n + (1-2\sin^2(\pi\omega)) \ket{a}_n = - \sin (2\pi\omega) \ket{b}_n + \cos (2\pi\omega) \ket{a}_n
\end{eqnarray*}
\begin{eqnarray*}
\boldsymbol{G}_{n} \left( \ket{b}_n \right) & = & \Big( 2 \ket{\gamma} \bra{\gamma}_n) - \boldsymbol{I} \Big) \big(\ket{b}_n \big) = 2 \ket{\gamma} \braket{\gamma | b}_n - \ket{b}_n = 2 \cos (\pi\omega) \ket{\gamma}_n - \ket{b}_n \\
& = & 2 \cos (\pi\omega) \big(\cos(\pi\omega)\ket{b}_n + \sin(\pi\omega) \ket{a}_n\big) - \ket{b}_n \\
& = & (2\cos^2(\pi\omega) - 1) \ket{b}_n + 2 \sin (\pi\omega) \cos (\pi\omega) \ket{a}_n = \cos (2\pi\omega) \ket{b}_n + \sin (2\pi\omega) \ket{a}_n
\end{eqnarray*}
which leads us to
\begin{eqnarray*}
\boldsymbol{G}_{n} \left( \ket{\mu^+}_n \right) & = & \dfrac{1}{\sqrt{2}} \Big( \cos (2\pi\omega) \ket{b}_n + \sin (2\pi\omega) \ket{a}_n + i \sin (2\pi\omega) \ket{b}_n - i \cos (2\pi\omega) \ket{a}_n \Big) \\
& = & \dfrac{e^{2\pi i \cdot\omega}}{\sqrt{2}} (\ket{b}_n - i \ket{a}_n)= e^{2\pi i \cdot\omega} \ket{\mu^+}_n
\end{eqnarray*}
\begin{eqnarray*}
\boldsymbol{G}_{n} \left( \ket{\mu^-}_n \right) & = & \dfrac{1}{\sqrt{2}} \Big( \cos (2\pi\omega) \ket{b}_n + \sin (2\pi\omega) \ket{a}_n - i \sin (2\pi\omega) \ket{b}_n + i \cos (2\pi\omega) \ket{a}_n \Big) \\
& = & \dfrac{e^{-2\pi i \cdot\omega}}{\sqrt{2}} (\ket{b}_n + i \ket{a}_n) = e^{-2\pi i \cdot\omega} \ket{\mu^-}_n
\end{eqnarray*}

After that, instead of writing $e^{-2\pi i \cdot\omega j}$ in the coefficients of $\ket{j}_p \otimes \ket{\mu^-}_n$ we opt to write $e^{2\pi i \cdot (1-\omega) j}$ (clearly the same coefficient, after all) so we keep the angular parameter in $[0,1]$.

\paragraph{$\mathbb{STEP}$ 3}
$$$$\noindent\framebox{\parbox[b]{\linewidth}{\begin{algorithmic}
\State $\ket{\psi_3}_{p,n} \leftarrow ( \boldsymbol{F}^{-1}_{p} \otimes \boldsymbol{I}^{\otimes n}) (\ket{\psi_2}_{p,n})$
\end{algorithmic}}} \\

The next step involves $\boldsymbol{F}_p^{-1}$, the inverse of the quantum Fourier transform, already explained in Section \ref{sec::shor}, which has the following effect on an $p$-qubit basis state.
$$
\boldsymbol{F}^{-1}_{p} : \ket{j}_p \longmapsto \dfrac{1}{\sqrt{2^p}} \sum_{l=0}^{2^p-1} e^{- 2 \pi i \cdot jl/2^p} \ket{l}_p
$$

We recall that, applied to a certain type of quantum state, the inverse quantum Fourier transform can give us certain information about the distribution of its amplitudes.
$$
\boldsymbol{F}^{-1}_{p} \left( \dfrac{1}{\sqrt{2^p}} \sum_{j=0}^{2^p-1} e^{2 \pi i \cdot kj/2^p} \ket{j}_p \right) = \boldsymbol{F}^{-1}_{p} \Big( \boldsymbol{F}_{p} \ket{k}_p \Big) = \ket{k}_p.
$$

Let us see what happens when we apply it to our current quantum state:
\begin{eqnarray*}
\ket{\psi_3}_{p,n} & = & \Big( \boldsymbol{F}^{-1}_{p} \otimes \boldsymbol{I}^{\otimes n} \Big) \left( \ket{\psi_2} \right)_{p,n} \\
& = & \dfrac{e^{\pi i \cdot\omega}}{\sqrt{2^{p+1}}} \sum_{j=0}^{2^p-1} e^{2\pi i \cdot \omega j} \boldsymbol{F}^{-1}_{p}  \ket{j}_p  \otimes \ket{\mu^+}_n + \dfrac{e^{-\pi i \cdot\omega}}{\sqrt{2^{p+1}}} \sum_{j=0}^{2^p-1} e^{2\pi i \cdot (1- \omega) j} \boldsymbol{F}^{-1}_{p}  \ket{j}_p  \otimes \ket{\mu^-}_n \\
& = & \dfrac{e^{\pi i \cdot\omega}}{2^p\sqrt{2}} \sum_{l=0}^{2^p-1} \sum_{j=0}^{2^p-1} e^{2\pi i \cdot j \left(\omega - l/2^p \right)} \ket{l}_p \otimes \ket{\mu^+}_n + \dfrac{-e^{\pi i \cdot\omega}}{2^p\sqrt{2}} \sum_{l=0}^{2^p-1} \sum_{j=0}^{2^p-1} e^{2\pi i \cdot j \left( (1-\omega) - l/2^p \right)} \ket{l}_p \otimes \ket{\mu^-}_n
\end{eqnarray*}

Now mind the following remark: if we can write $\omega = h/2^p$, with $h \in \{0,\ldots,2^p\}$, then the above property of $\boldsymbol{F}^{-1}_p$ can be applied and we would have
\begin{eqnarray*}
\ket{\psi_3}_{p,n} & = & 
\dfrac{e^{\pi i \cdot\omega}}{\sqrt{2}} \boldsymbol{F}^{-1}_{p} \left( \dfrac{1}{\sqrt{2^p}} \sum_{j=0}^{2^p-1} e^{2\pi i \cdot hj/2^p} \ket{j}_p \right) \otimes \ket{\mu^+}_n + \\
&& \qquad \qquad + \dfrac{e^{-\pi i \cdot\omega}}{\sqrt{2}} \boldsymbol{F}^{-1}_{p} \left( \dfrac{1}{\sqrt{2^p}} \sum_{j=0}^{2^p-1} e^{2\pi i \cdot (2^p-h)j/2^p}  \ket{j}_p \right) \otimes \ket{\mu^-}_n\\
& = & \dfrac{e^{\pi i \cdot\omega}}{\sqrt{2}} \ket{h}_p \otimes \ket{\mu^+}_n + \dfrac{e^{-\pi i \cdot\omega}}{\sqrt{2}} \ket{2^p-h}_p \otimes \ket{\mu^-}_n
\end{eqnarray*}

Here the value of $p$ appears finally as a precision choice. The greater $p$ is, the more possibilities we have to have an expression $\omega = h/2^p$, for $h \in \mathbb{Z}$. Or, at the very least, the more accurate an approximation of this kind will be (more on that later).\\

Anyway, assuming $\omega = h/2^p$ is necessary for the last expression of $\ket{\psi_3}_{p,n}$ to happen, so we might not have such a closed formula, but nevertheless it may help to understand why the last step is the following:

\paragraph{$\mathbb{STEP}$ 4}
$$$$\noindent\framebox{\parbox[b]{\linewidth}{\begin{algorithmic}
	\State $\tilde l \leftarrow$ measure the first register of $\ket{\psi_3}_{p,n}$
	\If {$\tilde l > 2^{p-1}$}
		\State $\tilde l \leftarrow 2^p - \tilde l$
	\EndIf
	\State $\tilde t \leftarrow 2^n \sin^2 \left( \dfrac{\pi \tilde l}{2^p} \right)$
    \State return $\tilde t$
\end{algorithmic}}} \\

First, note that $\omega = h/2^p$ and $\omega \leq 1/2$ translates into $h \leq 2^{p-1}$ (in case such an $h$ exists). So, our \textbf{if} step implies that, whether we get $h = 2^p \omega$ or $2^p - h = 2^p(1- \omega)$ from the measurement, the value $\tilde{l}$ corresponds to $h$.\\

However, as that might not be the case, we end up with $\tilde l = 2^p \tilde\omega$ or $\tilde l = 2^p(1- \tilde\omega)$, where $\tilde\omega$ is an estimation of $\omega$, and from which we can get an estimation $\tilde t$ for $t$. Thus ends the quantum counting algorithm, it remains to be proven how good is that estimator, and the probability of obtaining it.

\subsection*{Proof of correctness (and accuracy)}

\begin{thm}
Let $f : \{0,1,\ldots,2^n-1\} \rightarrow \{0,1\}$ be the indicator function of a certain set $A \subseteq \{0,1,\ldots,2^n-1\}$ with $t = |A| = |f^{-1}(1)| \leq 2^{n-1}$, let $p \geq 2$ and let $\tilde t$ be the return value of the previously described algorithm. 

Then, with probability of at least $8 / \pi^2$, 
$$
|t - \tilde t| \leq \dfrac{2\pi}{2^p} \sqrt{t( 2^n-t)} + \dfrac{\pi^2}{2^{2p}}\abs{2^n - t}.
$$
\end{thm}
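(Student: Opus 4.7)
The approach is to read Steps 1--3 as a phase-estimation routine applied to the Grover operator $\boldsymbol{G}_n$ seeded by $\ket{\gamma}_n$, and then to translate the standard phase-estimation error bound into the claimed inequality via trigonometric identities. The plan rests on the identity already exploited in Step 2, namely that $\ket{\mu^+}_n$ and $\ket{\mu^-}_n$ are eigenvectors of $\boldsymbol{G}_n$ with eigenvalues $e^{\pm 2\pi i\omega}$, and that $\ket{\gamma}_n = (e^{i\pi\omega}\ket{\mu^+}_n + e^{-i\pi\omega}\ket{\mu^-}_n)/\sqrt{2}$. Consequently the first three steps are literally phase estimation, and the marginal law of $\tilde l$ splits as $P(\tilde l) = \tfrac{1}{2}P_+(\tilde l) + \tfrac{1}{2}P_-(\tilde l)$, where $P_+$ and $P_-$ are the standard Dirichlet-kernel distributions corresponding to phases $\omega$ and $1-\omega$ respectively (the factor $1/2$ comes from the squared amplitude of each $\ket{\mu^\pm}_n$ component, and additivity from orthogonality of $\ket{\mu^+}_n$ and $\ket{\mu^-}_n$).

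Next I would apply the standard phase-estimation estimate: writing $2^p\omega = b+f$ with $f\in[0,1)$, a direct manipulation of $P_+(b) + P_+(b+1) = \sin^2(\pi f)/2^{2p}\cdot[\sin^{-2}(\pi f/2^p)+\sin^{-2}(\pi(1-f)/2^p)]$ gives $P_+(b) + P_+(b+1) \geq 8/\pi^2$, and analogously for $P_-$ at the nearest pair of integers to $2^p(1-\omega)$. The fold-back rule $\tilde l \mapsto 2^p - \tilde l$ of Step 4 sends the $\ket{\mu^-}$-peak back onto an estimator of $\omega$, so both branches contribute to the event $|\tilde\omega - \omega|\leq 2^{-p}$. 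Adding their (disjoint) contributions yields
$$\Pr\!\left[\,|\tilde\omega-\omega|\leq 2^{-p}\,\right] \;\geq\; \tfrac{1}{2}\cdot\tfrac{8}{\pi^2}+\tfrac{1}{2}\cdot\tfrac{8}{\pi^2} \;=\; \frac{8}{\pi^2}.$$
This is the main technical step: care is needed to verify that the two peaks are sufficiently separated for the events to be disjoint, and that in both cases the fold-back lands on an $\omega$-estimator within distance $2^{-p}$.

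Finally, setting $\Delta = \tilde\omega - \omega$, the identity $\sin^2(\pi\tilde\omega)-\sin^2(\pi\omega) = \tfrac{1}{2}\sin(2\pi\omega)\sin(2\pi\Delta) + \cos(2\pi\omega)\sin^2(\pi\Delta)$ together with $\sin(2\pi\omega) = (2/2^n)\sqrt{t(2^n-t)}$ and $\cos(2\pi\omega) = (2^n-2t)/2^n$ (both immediate from $\sin^2(\pi\omega) = t/2^n$) and the bounds $|\sin(2\pi\Delta)|\leq 2\pi|\Delta|$ and $\sin^2(\pi\Delta)\leq \pi^2|\Delta|^2$ gives
$$|\tilde t - t| \;\leq\; 2\pi|\Delta|\sqrt{t(2^n-t)} + \pi^2|\Delta|^2|2^n-2t|.$$
Substituting $|\Delta|\leq 2^{-p}$, which holds with probability $\geq 8/\pi^2$ by the previous paragraph, produces the claimed bound and completes the plan.
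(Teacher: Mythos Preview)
Your proposal is correct and follows essentially the same route as the paper: compute the Dirichlet-kernel distribution of the phase-estimation output, lower-bound the mass on the two integers nearest $2^p\omega$ by $8/\pi^2$ via the minimization at the midpoint, and then convert $|\tilde\omega-\omega|\leq 2^{-p}$ into the stated bound on $|t-\tilde t|$ by expanding $\sin^2(\pi\tilde\omega)-\sin^2(\pi\omega)$ and using $\sin(2\pi\omega)=\tfrac{2}{2^n}\sqrt{t(2^n-t)}$, $\cos(2\pi\omega)=\tfrac{2^n-2t}{2^n}$. If anything you are more careful than the paper, which treats only one branch of the superposition and leaves the $\ket{\mu^-}$/fold-back issue implicit; your observation that the marginal is $\tfrac12 P_++\tfrac12 P_-$ and that each branch separately contributes $\geq 8/\pi^2$ mass to the good event is the clean way to close that gap (and note that disjointness of the two peaks is not actually required for the lower bound---each term $\tfrac12 P_\pm(E)\geq \tfrac12\cdot\tfrac{8}{\pi^2}$ already suffices).
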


\begin{proof}
\cite{boyer1996tight, brassard1998quantum} Let us begin with our last unconditional expression:
$$
\ket{\psi_3}_{p,n} = \dfrac{e^{\pi i \cdot\omega}}{2^p\sqrt{2}} \sum_{l=0}^{2^p-1} \sum_{j=0}^{2^p-1} e^{2\pi i \cdot j \left(\omega - l/2^p \right)} \ket{l}_p \otimes \ket{\mu^+}_n + \dfrac{-e^{\pi i \cdot\omega}}{2^p\sqrt{2}} \sum_{l=0}^{2^p-1} \sum_{j=0}^{2^p-1} e^{2\pi i \cdot j \left( (1-\omega) - l/2^p \right)} \ket{l}_p \otimes \ket{\mu^-}_n
$$

As noted previously, the probability of obtaining a certain value $\tilde l$ in Step 4 is the combining probability of obtaining $\tilde l = 2^p \tilde\omega$ or $\tilde l = 2^p(1- \tilde\omega)$ so we would need to consider the next amplitudes
$$
\def\arraystretch{3}
\left\{ \begin{array}{lcll}
\alpha_{\tilde l} &=& \displaystyle \dfrac{e^{\pi i \cdot\omega}}{2^p\sqrt{2}}  \sum_{j=0}^{2^p-1} e^{2\pi i \cdot j \left(\omega - \tilde l/2^p \right)} & \text{ for } \tilde l \leq 2^{p-1} \\
\alpha_{\tilde l} &=& \displaystyle \dfrac{-e^{\pi i \cdot\omega}}{2^p\sqrt{2}} \sum_{j=0}^{2^p-1} e^{2\pi i \cdot j \left((1-\omega) - \tilde l/2^p \right)} & \text{ for } \tilde l > 2^{p-1}  \\
\end{array} \right.
$$ 

We will focus on the amplitude and probability coming from $\tilde l = 2^p \tilde\omega$, as the other is analogous. We can write
$$
\sum_{j=0}^{2^p-1} e^{2\pi i \cdot j\left(\omega - \tilde l/2^p \right)} = \frac{1- e^{2\pi i \cdot \left(\omega - \tilde l/2^p\right)2^p}}{1 - e^{2\pi i \cdot \left(\omega - \tilde l/2^p\right)}},
$$
and note $\left| 1 - e^{i \alpha} \right|^2 = 4 \sin^2 (\alpha/2)$, so
$$
\left| \sum_{j=0}^{2^p-1} e^{2\pi i \cdot j\left(\omega - \tilde l/2^p \right)} \right|^2 = \frac{\sin^2 \left(2^p \pi (\omega- \tilde l/2^p) \right)}{\sin^2 \left(\pi (\omega- \tilde l/2^p) \right)}.
$$

So, the probability of obtaining $\tilde l = 2^p \tilde \omega$ is given by
$$
\dfrac{1}{2^{2p+1}} \cdot \frac{\sin^2 \left(2^p \pi (\omega- \tilde l/2^p) \right)}{\sin^2 \left(\pi (\omega- \tilde l/2^p) \right)}.
$$ \\

Let us write now
$$
l_1 = \lfloor 2^p\omega \rfloor, \qquad l_2 = \lceil 2^p\omega \rceil, \qquad \Delta = \omega - \frac{l_1}{2^p},
$$ 
which are, respectively, the closest values of $\tilde l$ to $2^p \omega$ and an estimate of the relative error. We will assume (the other case is analogous) that $2^p \omega$ is actually closer to $l_1$, that is, the error is bounded by $0 \leq \Delta < 1/2^{p+1}$ and we also have 
$$
\frac{l_2}{2^p} = \omega + \left( \frac{1}{2^p}-\Delta \right), \qquad \text{ with } \left( \frac{1}{2^p}-\Delta \right) \geq \frac{1}{2^{p+1}}.
$$

We can then write
\begin{eqnarray*}
P\left(\left|\dfrac{\tilde l}{2^p}-\omega\right| \leq \dfrac{1}{2^p} \right) & = & 
P \left( |\tilde l-2^p\omega|\leq 1 \right) = P(\tilde l = l_1) + P(\tilde l=l_2) = |\alpha_{l_1}|^2 + |\alpha_{l_2}|^2 \\
& = & \displaystyle \dfrac{1}{2^{2p+1}} \cdot \frac{\sin^2 \big(2^p \pi (\omega- l_1/2^p) \big)}{\sin^2 \big(\pi (\omega- l_1/2^p) \big)} + \dfrac{1}{2^{2p+1}} \cdot \frac{\sin^2 \big( 2^p \pi (\omega- l_2/2^p) \big)}{\sin^2 \big( \pi (\omega- l_2/2^p) \big)} \\
& = & \dfrac{1}{2^{2p+1}} \cdot \dfrac{\sin^2(2^p\pi\Delta)}{\sin^2(\pi\Delta)} + \dfrac{1}{2^{2p+1}} \cdot \dfrac{\sin^2 \big( 2^p\pi(\Delta - 1/2^p) \big)}{\sin^2 \big( \pi(\Delta - 1/2^p) \big)} \\
& \geq & \frac{1}{2^{2p+1}} \left( \frac{1}{\sin^2(\pi\Delta)} + \frac{1}{\sin^2 \big( \pi(\Delta - 1/2^p)} \right) \\
& \geq & \frac{1}{2^{2p+1}} \left( \frac{1}{\sin^2(\pi/2^{p+1})} + \frac{1}{\sin^2 \big( \pi/ 2^{p+1}}) \right) \\
& = & \frac{1}{2^{2p}} \frac{1}{\sin^2(\pi/2^{p+1})} > \frac{1}{2^{2p}} \frac{1}{(\pi/2^{p+1})^2} = \frac{4}{\pi^2}.
\end{eqnarray*} 

Mind this is only half our chance of success, the other half coming from the case $\tilde l = 2^p(1- \tilde\omega)$. Thus, we have at least a probability of $8/\pi^2$ of obtaining $l_1$ or $l_2$ with an error less than $1/2^p$. It follows that, if we write 
$$
\Lambda = \left| \frac{\tilde l}{2^p} - \omega \right| \leq \frac{1}{2^p},
$$ 
then with probability of at least $8/\pi^2$,
\begin{eqnarray*}
|t - \tilde t| & = & 2^n \left| \sin^2(\pi \omega) - \sin^2 \left( \frac{\pi \tilde l}{2^p} \right) \right| \\
& = & 2^n | \sin^2(\pi \omega) - \sin^2 (\pi \omega \pm \pi \Lambda) | \\
& = & 2^n | \sin^2(\pi \omega) - (\sin(\pi \omega)\cos(\pi\Lambda)\pm \sin(\pi\Lambda)\cos(\pi\omega))^2 | \\
& = & 2^n \big| \sin^2(\pi \omega) - \big( \sin^2(\pi \omega)\cos^2(\pi\Lambda) + \sin^2(\pi\Lambda)\cos^2(\pi\omega) \pm \sin(2\pi\Lambda) \sin(\pi \omega) \cos(\pi\omega) \big) \big| \\
& = & 2^n \big| \sin^2(\pi\Lambda) (\sin^2(\pi \omega) - \cos^2(\pi\omega))  \pm \sin(2\pi\Lambda) \sin(\pi \omega) \cos(\pi\omega) \big) \big| \\
& \leq & 2^n |\sin(2\pi\Lambda)\sin(\pi\omega)\cos(\pi\omega)| + 2^n \sin^2(\pi\Lambda)|1-\sin^2(\pi\omega)| \\
& \leq & 2^{n+1}\pi\Lambda\sqrt{\frac{t}{2^n} \left( 1-\frac{t}{2^n} \right)} + 2^n (\pi\Lambda)^2 \left| 1-\frac{t}{2^n} \right| \\
& = & 2\pi\Lambda \sqrt{t(2^n-t)} + (\pi\Lambda)^2 \left| 2^n - t \right|. \\
\end{eqnarray*}
This finishes the proof, as $\Lambda \leq 1/2^p$.
\end{proof}

\newpage

\bibliography{refs}
\bibliographystyle{siam} 

\end{document}